\title{Protecting the Connectivity of a Graph Under Non-Uniform Edge Failures}
\author{
Felix Hommelsheim\thanks{\texttt{\{fhommels,nicole.megow\}@uni-bremen.de}, University of Bremen, Germany.} 
\and 
Zhenwei Liu$^*$\thanks{\texttt{\{lzw98,zgc\}@zju.edu.cn}, Zhejiang University, China. Supported in part by NSFC (No. 12271477).} \and 
Nicole Megow\footnotemark[1] \and 
Guochuan Zhang\footnotemark[2]
}
\newtheorem{theorem}{Theorem}
\newtheorem{remark}[theorem]{Remark}
\newtheorem{lemma}[theorem]{Lemma}
\newtheorem{definition}{Definition}
\newcommand{\bigO}{\mathcal{O}}
\newcommand{\opt}{{\normalfont\textsc{Opt}}}
\newcommand{\steinerc}[2]{$(#1,#2)$-Steiner-Connectivity Preservation}
\newcommand{\steinercshort}[2]{$(#1,#2)$-SCP}
\newcommand{\globalc}[2]{$(#1,#2)$-Global-Connectivity Preservation}
\newcommand{\globalcshort}[2]{$(#1,#2)$-GCP}
\newcommand{\stc}[2]{$(#1,#2)$-$s$-$t$-Connectivity Preservation}
\newcommand{\stcshort}[2]{$(#1,#2)$-stCP}
\newcommand{\fgcshort}[2]{$(#1,#2)$-GFND}
\newcommand{\fnd}[2]{$(#1,#2)$-Flexible Network Design}
\newcommand{\fndshort}[2]{$(#1,#2)$-FND}
\newcommand{\fndst}[2]{$(#1,#2)$-$s$-$t$-Flexible Network Design}
\newcommand{\fndstshort}[2]{$(#1,#2)$-stFND}
\begin{document}

\maketitle

\begin{abstract}
We study the problem of guaranteeing the connectivity of a given graph by protecting or strengthening edges. Herein, a protected edge is assumed to be robust and will not fail, which features a non-uniform failure model. We introduce the \steinerc{p}{q} problem where we protect a minimum-cost set of edges such that the underlying graph maintains $p$-edge-connectivity between given terminal pairs against edge failures, assuming at most $q$ unprotected edges can fail. We design polynomial-time exact algorithms for the cases where $p$ and $q$ are small and approximation algorithms for general values of $p$ and $q$.
Additionally, we show that when both $p$ and $q$ are part of the input, even deciding whether a given solution is feasible is \NP-complete. 
This hardness also carries over to Flexible Network Design, a research direction that has gained significant attention. 
In particular, previous work focuses on problem settings where either $p$ or $q$ is constant, for which our new hardness result now provides justification.
\end{abstract}

\section{Introduction}
\label{sec:intro}

In today's interconnected world, the robustness of infrastructures is crucial, particularly in the face of potential disruptions. 
This applies to road networks, communication grids, and electrical systems alike, where the ability to maintain functionality despite failures is paramount.
Resilience in critical infrastructures requires the incorporation of redundancy to withstand unforeseen challenges.
Survivable Network Design (SND) addresses the fundamental 
need of ensuring not just connectivity but robust connectivity. It goes beyond the conventional concept of linking two entities by recognizing  the need for multiple, resilient~connections.

Beyond its practical applications, SND is a fundamental problem in combinatorial optimization and approximation algorithms.
In its classical setting, we are given a graph $G=(V,E)$ with edge costs $c: E \rightarrow \mathbb{R}$ and connectivity requirements $k(s,t)$ for each vertex pair $s,t \in V$. 
The goal is to find a minimum-cost subset of edges $X \subseteq E$ such that $H=(V,X)$ is $k(s,t)$-connected for each $s,t \in V$, i.e., in $H$ there are $k(s, t)$ many edge-disjoint paths for each vertex pair $s,t \in V$.
This means that $s$ and $t$ are still connected in $H$ after the deletion of any $k(s,t) -1$ edges of $H$.
SND is \APX-hard and the current best approximation factor of~$2$ is achieved by Jain's famous iterative rounding algorithm~\cite{DBLP:journals/combinatorica/Jain01}.
It is a long-standing open question whether this factor $2$ can be improved, even for many special cases of SND.

Instead of building a new network from scratch, many real-world applications as well as the research community consider augmentation problems, in which we are already given some network, and the task is to robustify the network to withstand failures. The most common model is to increase the connectivity of a given network by adding more links \cite{DBLP:journals/siamcomp/EswaranT76,DBLP:journals/siamcomp/FredericksonJ81}.
However, adding new links to a real-world network may be costly or even impractical.

This motivates the investigation of the problem of increasing the connectivity of a network by {\em protecting} or strengthening edges, as 
has been proposed by Abbas et al.~\cite{abbas2017improving} in a 
practical context.
In this paper, we formally introduce the problem \emph{\steinerc{p}{q} (\steinercshort{p}{q})}:
Given an undirected graph $G= (V, E)$, possibly with parallel edges, nonnegative costs $c : E \rightarrow \mathbb{R}_+$ and $k$ terminal pairs $(s_i, t_i) \in V \times V$. 
The task is to 
identify a minimum-cost set of edges $X \subseteq E$ such that for any edge set $F \subseteq E \setminus X$ with $|F| \leq q$, there are $p$ edge-disjoint paths between any terminal pair $(s_i, t_i)$ in $(V, E \setminus F)$.
In other words, the task is to protect a minimum-cost subset of the edges $X \subseteq E$ such that, no matter which~$q$ unprotected edges from $E \setminus X$ are removed from $G$, there are still~$p$ edge-disjoint paths between any terminal pair.
We refer to the special case with a single terminal pair $(s, t)$ by \emph{\stc{p}{q} (\stcshort{p}{q})} and the other extreme case in which \emph{each} pair of vertices is a terminal pair as \emph{\globalc{p}{q} (\globalcshort{p}{q})}. Using this notion, Abbas et al.~\cite{abbas2017improving} 
considered \globalcshort{1}{q} and proposed to start from a minimum spanning tree and remove unnecessary edges, which does not admit bounded approximation factors.
Zhang et al.~\cite{zhang2017enhancing} used mixed-integer linear programming to solve \globalcshort{1}{q} and \steinercshort{1}{q}.
 Bienstock and Diaz~\cite{DBLP:journals/siamcomp/BienstockD93} considered a special case of \steinercshort{1}{q}, the all-pair connectivity among a set of terminals, and showed a polynomial-time algorithm for $q \leq 2$ and the \NP-hardness for $q=8$. 

The distinction between protected and unprotected edges has a similar flavor as the \fnd{p}{q} (\fndshort{p}{q})
problem~\cite{DBLP:conf/swat/AdjiashviliHMS22,DBLP:journals/mp/AdjiashviliHM22, DBLP:journals/mp/BoydCHI24,bansal2023improved,DBLP:journals/corr/abs-2308-15714,chekuri2023approximation,DBLP:conf/approx/BansalCGI23,bansal2024improvedapproximationalgorithmsflexible}.
The input is an edge-weighted undirected graph together with a set of terminal pairs, where the edges are either safe or unsafe. 
The goal is to find a minimum-cost subgraph such that any terminal pair remains $p$-edge-connected after the failure of any $q$ \emph{unsafe} edges. 
Also here different versions have been studied, e.g., \fgcshort{p}{q}, the global connectivity version (each pair of vertices is a terminal pair) or \fndstshort{p}{q}, the $s$-$t$ version (only one terminal pair).
In contrast to their model, in our setting we strengthen \emph{existing} edges rather than building a network from scratch. 
For $p=1$, \steinercshort{1}{q} reduces to \fndshort{1}{q}.

Given an instance of \steinercshort{1}{q}, we construct an instance of \fndshort{1}{q} as follows.
We use the same underlying graph but replace each edge by two parallel edges: one unsafe edge of cost zero and one safe edge with cost equal to the cost of protecting the edge in the instance of \steinercshort{1}{q}.
Since the unsafe edges have cost zero, we can assume that any feasible solution includes all unsafe edges.
Then, a feasible solution to \fndshort{1}{q} can be transformed into a feasible solution to \steinercshort{1}{q} with the same cost by protecting the selected safe edges and vice versa.
For $p > 1$, however, we are not aware of any reduction from \steinercshort{p}{q} to \fndshort{p}{q}.

\subsection{Our Contribution}
In this paper, we study Connectivity Preservation problems 
in terms of algorithms, complexity, and approximability.

The \steinerc{1}{q} problem is \APX-hard if $q$ is part of the input: When $q$ is sufficiently large, say, larger than $|E|$, any feasible solution to \steinercshort{1}{q} includes at least one edge in any 
terminal-separating cut (precise definitions are given in \Cref{sec:preliminaries}).
Hence, it is equivalent to Steiner Forest, which is \APX-hard~\cite{DBLP:journals/ipl/BernP89}. 
Similarly, \globalcshort{p}{q} is \APX-hard for any $p \geq 2$, as it contains the minimum $p$-edge-connected spanning subgraph problem~\cite{DBLP:conf/waoa/Pritchard10} as a special case when $q$ is sufficiently large.
We strengthen this by showing that \globalcshort{1}{q} is also \NP-hard when $q$ is part of the input.

Motivated by the problem hardness, we first study cases when $p$ or $q$ is small. 
We obtain polynomial-time exact algorithms for \steinercshort{p}{1} for any $p \geq 1$ and \steinercshort{1}{2} as well as a polynomial-time exact algorithm for \globalcshort{2}{2}.
We emphasize that \steinercshort{p}{q} generalizes \globalcshort{p}{q} and hence any algorithm for \steinercshort{p}{q} also works for \globalcshort{p}{q}.

\begin{theorem}[summarized]
\label{thm:main:polytime}
    There are polynomial-time exact algorithms for \steinerc{p}{1} for any $p\geq 1$ and \steinerc{1}{2}. 
    Furthermore, there is a polynomial-time exact algorithm for \globalc{2}{2}. 
\end{theorem}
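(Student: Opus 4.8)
The plan is to characterize, for each of the three problems, exactly which edge sets $X$ are feasible, and then to solve the resulting covering-type optimization exactly.

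\textbf{\stc{p}{1}.} Here $X$ is feasible iff $\lambda_G(s_i,t_i)\ge p$ for every terminal pair (a check independent of $X$) and, in addition, $\lambda_{G-e}(s_i,t_i)\ge p$ for every unprotected $e\in E\setminus X$ and every $i$. Call an edge $e$ \emph{critical} if $\lambda_G(s_i,t_i)=p$ and $\lambda_{G-e}(s_i,t_i)=p-1$ for some $i$, equivalently if $e$ lies in a minimum $s_i$–$t_i$ cut of size exactly $p$. Every critical edge must be in $X$ (else the adversary deletes it), and protecting exactly the set of critical edges is feasible, since deleting any single non-critical edge leaves every pair $p$-edge-connected. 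Hence the set of critical edges is the unique minimal, and therefore the minimum-cost, feasible solution; it is found with $\bigO(|E|\cdot k)$ maximum-flow computations.

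\textbf{\stc{1}{2}.} First I would show that $X$ is feasible iff every inclusion-minimal $s_i$–$t_i$ edge cut of size at most $2$, over all $i$, contains an edge of $X$ (a set $F$ with $|F|\le 2$ disconnects a pair precisely when it contains such a cut; I also assume each pair is connected in $G$, else the instance is infeasible). I then split these cuts by size. A size-$1$ minimal cut is a bridge separating some terminal pair; all such bridges are forced into $X$, identified via the bridge tree of $2$-edge-connected components. For a minimal size-$2$ cut $\{e,f\}$ one argues that neither edge is a bridge and that both lie in the same $2$-edge-connected component $H$, and that $\{e,f\}$ is then a minimum ($2$-edge) cut of $H$ separating the two vertices at which $s_i$ and $t_i$ attach to $H$ (if these coincide the pair contributes nothing inside $H$). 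Now I invoke the cactus representation of the minimum cuts of $H$: each cactus cycle-edge corresponds to a single edge of $H$ (with its cost), the $2$-cuts of $H$ correspond to pairs of edges on a common cactus cycle, and such a pair separates the attachment vertices exactly when the two edges lie on the two arcs into which the cycle is split by their cactus images. Since distinct cactus cycles share no edges, the problem decomposes over cactus cycles. Within a cycle $Z$, feasibility says the set of \emph{unprotected} edges of $Z$ must not ``straddle'' any of the bipartitions of $Z$ induced by the terminal pairs; ``not straddled by any such bipartition'' is an equivalence relation on the edges of $Z$, and a set is a legal unprotected set iff it lies inside a single class. Hence the per-cycle optimum leaves unprotected exactly the maximum-cost class and protects the rest; assembling the forced bridges and the per-cycle solutions yields the global optimum.

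\textbf{\globalc{2}{2}.} The same reasoning gives: $X$ is feasible iff $G$ is $2$-edge-connected, every $2$-edge cut of $G$ has \emph{both} its edges in $X$, and every $3$-edge cut has at least $2$ of its $3$ edges in $X$; one uses here the lemma that in a $2$-edge-connected graph every cut of size $3$ is inclusion-minimal. So I would check $2$-edge-connectivity, force the set $X_0$ of all edges lying in some $2$-cut into $X$, and then pick a minimum-cost extension $X\supseteq X_0$ hitting at least two edges of every $3$-cut. Writing $U=E\setminus X$, the set $U$ must avoid $X_0$ and contain at most one edge of every $3$-cut, so maximizing $c(U)$ is a minimum-cost vertex-cover problem in the auxiliary graph $\Gamma$ whose vertices are the non-$X_0$ edges and whose edges join two edges that share a common $3$-cut of $G$. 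The crux, which I expect to be the main obstacle, is to show this vertex-cover instance is polynomially solvable: I would argue that the $3$-cut constraints not already met by $X_0$ decompose over the $3$-edge-connected components of $G$ (using the cactus of $2$-cuts plus the structural restriction on how $3$-cuts of a graph with minimum cut $2$ may cross), and that inside each component the relevant $3$-cuts form a laminar/cactus-like family, so that $\Gamma$ breaks into pieces on which minimum-cost vertex cover reduces to a minimum-cut or matching computation.

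\textbf{Where the difficulty lies.} The feasibility characterizations and the \stc{p}{1} case are routine. The technical heart of both $q=2$ results is the small-cut structure theory: pinning down which pairs (resp. triples) of edges form a cut of size at most $2$ (resp. $3$), organizing them through the cactus of minimum cuts, and verifying that the induced covering problem collapses to something tractable. For \globalc{2}{2} the part requiring the most care is establishing that the $3$-cut constraints have the laminar/cactus structure that tames the vertex-cover step.
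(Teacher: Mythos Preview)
Your treatment of $(p,1)$-SCP is correct and coincides with the paper's argument. For $(1,2)$-SCP your cactus-based approach is also correct and is essentially a non-recursive rephrasing of the paper's method: the paper repeatedly applies a one-step decomposition (contract the $2$-edge-connected pieces hanging off a fixed $2$-cut to obtain a cycle, solve a Steiner-forest-on-a-cycle subproblem, then recurse inside each piece with an added zero-cost pseudo-edge), which in the end produces exactly the cactus you invoke and the same per-cycle optimization you describe via equivalence classes.

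The gap is in $(2,2)$-GCP. You correctly arrive at: maximize $c(U)$ subject to $U\cap X_0=\emptyset$ and $|U\cap C|\le 1$ for every $3$-cut $C$. Once one is inside a $3$-edge-connected piece, the $3$-cuts are represented by the edges of a tree $T$ (Dinitz--Karzanov--Lomonosov, since the minimum cut is odd), each graph edge $e$ becomes a path $P_e$ in $T$, and the constraint says the chosen paths $\{P_e:e\in U\}$ must be pairwise edge-disjoint. So your auxiliary graph $\Gamma$ is precisely the \emph{edge}-intersection graph of paths in a tree (an EPT graph), and maximum-weight independent set there does not reduce to a single min-cut or matching computation in any known way. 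The paper's answer to this step is a dedicated dynamic program on $T$: at each internal node it sets up an auxiliary graph whose vertices are the child subtrees and whose edge weights encode already-computed subtree optima, and solves a maximum-weight matching to decide which through-paths to select; matching is only a subroutine inside the DP, and designing this DP is the technical heart of the $(2,2)$ result. Your proposal stops exactly where the real work begins.

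A secondary issue: passing from a merely $2$-edge-connected $G$ to $3$-edge-connected pieces also requires care. The paper splits $G$ at each $2$-cut, inserts a zero-cost pseudo-edge on each side, and proves that optimal solutions compose additively across the split and that every critical $3$-cut is accounted for. Your appeal to ``decompose over the $3$-edge-connected components of $G$'' asserts the conclusion of that lemma without supplying the argument, in particular without addressing $3$-cuts $\delta(S)$ that contain one edge of a $2$-cut.
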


The first result for \steinercshort{p}{1} is quite easily obtained by observing that a solution is only feasible if every edge contained in some terminal-separating cut of size at most $p$ is protected.


The polynomial-time algorithm for \steinercshort{1}{2} is more involved and relies on a decomposition of terminal-separating cuts of size $2$. We reduce the problem to the case in which $G$ is assumed to be $2$-edge-connected. 
Then, it remains to protect one edge in each terminal-separating cut of size $2$.
To do so, we decompose the problem into subproblems that consist either of a $2$-edge-connected component or a cycle that can be solved independently.

The polynomial-time algorithm for \globalcshort{2}{2} is the most involved exact algorithm. 
We first show that we can assume without loss of generality that $G$ is $3$-edge-connected,
which reduces the problem to selecting a minimum-cost set of edges containing at least $2$ edges from each $3$-edge-cut. 
Equivalently, we select a maximum-weight set of edges such that we pick at most $1$ edge from each $3$-edge-cut.
Using the well-known tree-representation of minimum cuts~\cite{dinitz1973structure}, we model this problem as a multi-commodity flow problem on a tree: given a tree and a set of weighted paths on the tree, find a maximum weighted subset of paths that are pairwise edge-disjoint, which was first introduced in~\cite{DBLP:journals/algorithmica/GargVY97} for the unweighted case.
We solve the weighted problem via dynamic programming, which might be of independent interest.

 We complement our 
 exact algorithms for small $p$ and $q$ with hardness and approximation results for more general cases. For \steinercshort{p}{q}, if both $p$ and $q$ are part of the input, we show that there is no polynomial-time algorithm verifying the feasibility of any given solution, unless \P$=$\NP, even in the case of $s$-$t$-connectivity.
This rules out an $\alpha$-approximation algorithm for any $\alpha$. 
Our technique is based on a reduction from $k$-Clique on $d$-regular graphs.
Note that the corresponding solution verifying problems of \stcshort{p}{q} and \fndstshort{p}{q} are essentially the same: given sets of protected (resp. safe) and unprotected (resp. unsafe) edges, decide whether there is an $s$-$t$-cut that has no more than $p+q-1$ edges and has less than $p$ protected (resp. safe) edges. 
Hence, the hardness of verifying a solution also carries over to \fndst{p}{q} and  justifies why previous work on this problem only discusses the cases where either $p$ or $q$ is constant~\cite{DBLP:conf/swat/AdjiashviliHMS22,chekuri2023approximation}.

\begin{restatable}{theorem}{stcuthard}\label{thm:stcut_hard}
When both $p$ and $q$ are part of the input, verifying the feasibility of a solution to \stc{p}{q} or \fndst{p}{q} is \NP-complete, 
even in perfect graphs. 
Hence, they do not admit an $\alpha$-approximation for any $\alpha$ unless~$\P=\NP$.
\end{restatable}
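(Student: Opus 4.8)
The problem of verifying a solution to $s$-$t$-FND (or $s$-$t$-CP) boils down to a cut question: given the designated safe/unsafe (protected/unprotected) edge sets, decide whether there exists an $s$-$t$-cut with at most $p+q-1$ edges total that contains fewer than $p$ safe edges. A solution is feasible iff no such "bad cut" exists, so the verification problem is the complement, and it suffices to show the bad-cut-existence problem is NP-complete. Membership in NP is immediate: a bad cut is a polynomial-size certificate. So I would focus entirely on NP-hardness.

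The reduction is from $k$-Clique on $d$-regular graphs. Given a $d$-regular graph $H=(V_H,E_H)$ and parameter $k$, I want to build an $s$-$t$-FND instance such that a bad cut exists iff $H$ has a $k$-clique. The natural idea: encode choosing a set $S$ of $k$ vertices of $H$ as choosing which edges to cut near $s$. I would create, for each vertex $v \in V_H$, a gadget between $s$ and an internal node, and for each edge $e \in E_H$ a gadget between internal nodes and $t$, wiring things so that cutting $s$ from $t$ forces one to "pay" for the chosen vertices and for every edge of $H$ not incident to the chosen set. Using $d$-regularity, the number of edges incident to a set of $k$ vertices is exactly $dk - (\text{number of edges inside }S)$, which is maximized precisely when $S$ is a clique — this is the standard trick that makes the cut size an exact function of whether $S$ is a clique, letting me set the threshold $p+q-1$ accordingly. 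The safe edges should be placed so that the "less than $p$ safe edges" condition translates into "the cut selects exactly the vertex-choice structure and no more," pinning down $|S| = k$.

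More concretely, I expect the construction to route $s$-$t$ flow so that any minimal $s$-$t$-cut decomposes into (i) a choice of $k$ "vertex edges" which I make safe (or give unit contribution toward the safe count), and (ii) for the chosen set $S$, all edges of $E_H$ with at least one endpoint outside $S$, plus possibly cheap edges for each edge inside $S$; then $|\text{cut}| = k + (|E_H| - e(S)) + e(S)\cdot(\text{something})$, and by $d$-regularity $|E_H| - e(S) = |E_H| - (dk - \partial(S))/\ldots$ — the arithmetic is arranged so that the minimum cut size over all $k$-subsets equals a fixed value $C^*$, attained iff $S$ is a clique, at which point the safe-edge count drops below $p$. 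Setting $p+q-1 = C^*$ (and choosing $p$ to match the safe-count threshold, $q$ accordingly) completes the reduction. Finally I would verify the instance is a perfect graph — the gadgets should be bipartite-like or otherwise chordal/perfect by construction, which I would check by exhibiting the structure explicitly (e.g., the whole graph being bipartite between an "$s$-side" and "$t$-side," and bipartite graphs are perfect).

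The main obstacle I anticipate is the bookkeeping to make the cut size an \emph{exact} indicator of the clique property while simultaneously making the safe-edge count threshold align — these two constraints ($\le p+q-1$ total edges, $< p$ safe edges) must be satisfiable together \emph{only} by cuts corresponding to $k$-cliques, so I need parallel edges or carefully weighted multiplicities to prevent "cheating" cuts that, say, pick a non-clique $S$ but compensate by cutting elsewhere. Ensuring minimality/canonical form of the relevant cuts (so that no spurious small cut exists) and simultaneously preserving perfectness is the delicate part; I would likely handle it by making most gadget edges come in large bundles of parallel copies so that any cut deviating from the intended pattern becomes enormously expensive, leaving only the clique-encoding cuts in the feasible range.
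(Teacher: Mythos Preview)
Your proposal is correct and follows essentially the same approach as the paper: reduce from $k$-Clique on $d$-regular graphs, with the $s$-$t$-cut encoding a choice of $k$ vertices and the $d$-regularity arithmetic making the cut size hit the threshold precisely when the chosen set is a clique. The only notable implementation difference is that the paper enforces the intended cut structure via a large auxiliary clique of size $n(d+1)$ attached on the $s$-side (rather than parallel-edge bundles), and perfectness of the constructed graph comes from its layered clique-plus-bipartite structure rather than from being bipartite outright.
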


 On the algorithmic side, if $q$ is a constant, we can enumerate all "bad" edge sets whose removal destroy the $p$-edge-connectivity.
 Since any feasible solution intersects with or hits these "bad" sets, it reduces to the hitting set problem and admits a $q$-approximation.
 Then we focus on the case where $p$ is a constant.
 We first give a $q$-approximation for \steinercshort{1}{q} and extend it to \steinercshort{p}{q} based on a primal-dual framework \cite{DBLP:journals/combinatorica/WilliamsonGMV95, DBLP:conf/soda/GoemansGPSTW94}.
 In the framework, we iteratively protect one more edge in each \emph{critical} cut (precise definition follows), which is very similar to the problem \steinercshort{1}{q}.
 We obtain the following result.

\begin{restatable}{theorem}{thmapprox} \label{thm:steiner-approx}
    There is a polynomial-time $\bigO((p+q) \log p)$-approximation algorithm for \steinerc{p}{q} when $p$ is a constant. 
\end{restatable}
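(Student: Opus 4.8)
The plan is to build on the structural ideas already used for \steinercshort{p}{1} and \steinercshort{1}{2}, combined with the primal-dual framework of Williamson--Goemans--Mihail--Vazirani \cite{DBLP:journals/combinatorica/WilliamsonGMV95, DBLP:conf/soda/GoemansGPSTW94}. First I would reduce \steinercshort{p}{q} to an iterative ``augmentation'' task: starting from the empty protected set, we repeatedly find a minimum-cost set of edges to protect that raises, by one, the protection level of every currently-violated cut. Concretely, call a cut $\delta(S)$ \emph{critical} (with respect to the current protected set $X$) if $|\delta(S)| \le p+q-1$ and $\delta(S)$ contains fewer than $p$ protected edges but separates some terminal pair; a solution is feasible precisely when no critical cut remains. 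Each augmentation phase must protect at least one additional edge in every critical cut, which is exactly an instance of a covering problem of \steinercshort{1}{q'}-flavor for the appropriate residual parameter $q'$. Since the protection level of the worst terminal pair can be as low as $1$ initially and must reach $p$, there are $\bigO(\log p)$ phases in a suitable doubling schedule (or $\bigO(p)$ phases naively; the $\log p$ comes from a geometric-scaling argument on the deficiency).

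Second, for a single augmentation phase I would define the requirement function $f(S) = 1$ if $\delta(S)$ is critical and $0$ otherwise, and show it is \emph{uncrossable} (or at least weakly supermodular in the sense required by the WGMV analysis): if $f(S)=f(T)=1$ then $f(S\cap T)=f(S\cup T)=1$ or $f(S\setminus T)=f(T\setminus S)=1$. The verification hinges on submodularity of the cut function $|\delta(\cdot)|$ together with a counting argument on the number of protected edges crossing the four ``corner'' cuts, exactly analogous to the standard uncrossing arguments for SND; the parameter bound $|\delta(S)| \le p+q-1$ is preserved under the uncrossing because the cut sizes only decrease on the smaller sets. Given uncrossability, the primal-dual algorithm that grows dual variables on minimal violated sets and reverse-deletes yields an $f$-cover of cost at most $2\cdot\nu$ times the LP optimum, where $\nu$ is the maximum number of pairwise-disjoint minimal violated sets sharing a common edge; since every critical cut has at most $p+q-1$ edges, one shows $\nu = \bigO(p+q)$, giving an $\bigO(p+q)$ approximation per phase. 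Multiplying by the $\bigO(\log p)$ phases gives the claimed $\bigO((p+q)\log p)$ bound.

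The main obstacle I anticipate is twofold. The first difficulty is verifying that the per-phase requirement function is genuinely uncrossable: the ``critical'' predicate couples a size bound ($|\delta(S)|\le p+q-1$), a protected-edge-count bound ($<p$ protected edges), and a terminal-separation condition, and one must check that an uncrossing operation cannot simultaneously destroy all three for both of the resulting pairs of sets. Handling the case where uncrossing spills protected edges between the corner cuts (so that, say, $\delta(S\cap T)$ picks up enough protected edges to no longer be critical) requires carefully choosing which of the two uncrossing outcomes to keep, and this is where the argument is most delicate. The second difficulty is bookkeeping the residual parameters across phases: after protecting some edges, the ``budget'' $q$ available to the adversary effectively interacts with the already-protected edges, and one must argue that the phase subproblem is still of the form handled by the $\bigO(q)$-approximate covering routine (essentially \steinercshort{1}{q}) rather than something harder. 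I would isolate this in a lemma stating that a \steinercshort{p}{q} instance is feasible iff it is feasible ``one level at a time,'' i.e., that a greedy level-by-level augmentation never gets stuck, which should follow from a Menger-type / flow-augmentation argument on the protected subgraph. Once these two points are nailed down, the rest is the standard WGMV accounting, and the constant ``$p$ is a constant'' is only used to keep the enumeration of critical cuts (those with $\le p+q-1$ edges, of which there are $n^{\bigO(p+q)}$ — hence polynomial when $p$ is fixed and... — actually one needs $q$ bounded too here, so more carefully: critical cuts are enumerated implicitly via min-cut computations, and $p$ constant ensures the level-by-level schedule has polynomially many phases and each phase's LP is polynomial-sized).
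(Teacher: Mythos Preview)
Your high-level framework---iterative augmentation in phases, with a primal-dual covering routine per phase---matches the paper. However, three of the load-bearing details in your plan are either wrong or miss the actual mechanism.

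\textbf{The $\log p$ factor does not come from a doubling schedule.} The paper uses exactly $p$ phases, one per unit of protection level. The $\log p$ arises because the phase-$i$ augmentation problem $\mathcal{P}_i$ has optimum at most $\frac{1}{p-i+1}\opt$, so summing over $i=1,\dots,p$ gives the harmonic number $H_p$. This bound is obtained by a dual-mapping lemma: a dual solution $y^{(i)}$ for $\mathcal{P}_i$ lifts to a dual solution $(y,z)$ for the full \steinercshort{p}{q} LP (with the box constraints $x_e\le 1$) satisfying $\sum_S y^{(i)}_S \le \frac{1}{p-i+1}\bigl(\sum_S p\,y_S - \sum_e z_e\bigr)$, because every $S\in\mathcal{S}_i$ has exactly $i-1$ protected edges and hence contributes $p-(i-1)$ after subtracting the $z_e$'s. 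Your geometric-scaling idea would require augmenting from protection level $2^{j-1}$ to $2^j$ in a single phase, which is a genuinely different and harder covering problem; you give no argument for why that phase would still be $\bigO(p+q)$-approximable.

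\textbf{Uncrossability is neither needed nor (apparently) true.} The paper's per-phase primal-dual is the crude one: pick any violated $S\in\mathcal{S}_i$, raise $y_S$ until some edge goes tight, add that edge, repeat. This yields cost at most $\max_{S\in\mathcal{S}_i}|\delta(S)\setminus X_{i-1}|\le p+q-1$ times the phase LP optimum, with no uncrossing and no reverse delete. Your attempt to establish uncrossability of the ``critical'' predicate is exactly the delicate step you flag, and the paper simply sidesteps it; there is no reason to believe the conjunction of the size bound, the protected-edge bound, and terminal separation is preserved under uncrossing.

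\textbf{The role of ``$p$ constant'' is the violated-cut oracle.} In phase $i$ one must find some $S\in\mathcal{S}$ with $|\delta(S)\cap X|=i-1$ (where $X\supseteq X_{i-1}$ is the current partial solution). The paper guesses the set $X'=\delta(S)\cap X$ of size $i-1<p$ together with, for each guessed edge, which endpoint lies in $S$; this is $\binom{m}{p-1}2^{p-1}$ possibilities, polynomial only when $p$ is constant. After guessing, one contracts $X\setminus X'$, identifies the prescribed endpoints, and computes a minimum $s$-$t$-cut. Your final paragraph gropes toward this but does not land on it; in particular, there is no enumeration of all critical cuts (there can be exponentially many), and the number of phases is $p$, which is already constant, so that is not where the assumption is used.
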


The global connectivity problem \globalcshort{p}{q} has more symmetric structure, which enables us to remove the requirement of $p$ being constant. 
Hence, the above result directly carries over to this case without the assumption on $p$.
In addition, we design a different set-cover based augmentation process.
This algorithm relies on the fact that there is only a polynomial number of critical cuts to be augmented, which is not true for \steinercshort{p}{q}. Combining the two algorithms, we obtain the following result.

\begin{restatable}{theorem}{thmapproxtwo}
\label{thm:global-apx}
There is a polynomial-time $\bigO( \log p \cdot \min\{p+q, \log n\})$-approximation algorithm for \globalc{p}{q}.
\end{restatable}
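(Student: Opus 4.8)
The plan is to run two different polynomial-time algorithms and return the cheaper solution: one giving an $\bigO((p+q)\log p)$-approximation and one giving an $\bigO(\log p\cdot\log n)$-approximation, which together yield the claimed $\bigO(\log p\cdot\min\{p+q,\log n\})$-approximation. First I would fix the common structural picture. After discarding instances that are trivially infeasible (a cut of $G$ of size less than $p$) and forcing into the solution the edges of any cut of size exactly $p$, one checks that an edge set $X$ is feasible for \globalcshort{p}{q} if and only if every cut $\delta(S)$ of $G$ with $|\delta(S)|\le p+q-1$ contains at least $p$ edges of $X$; I will call such cuts \emph{critical}. Crucially, for global connectivity the critical cuts are (near-)minimum cuts of a \emph{single} graph, so by the tree/cactus representation of minimum cuts~\cite{dinitz1973structure} together with the associated counting and laminarity properties, it suffices to work with a polynomial-size family of critical cuts; this is exactly the ingredient that is unavailable for \steinercshort{p}{q}, where terminal-separating critical cuts can be exponentially many.

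The first algorithm adapts the primal--dual scheme behind \Cref{thm:steiner-approx}. That argument builds the solution over $\bigO(\log p)$ scaling rounds, and in each round it runs a primal--dual subroutine that protects one more edge in every currently-critical cut, losing a factor $\bigO(p+q)$ against the optimum of that round (the subroutine closely mirrors the one used for \steinercshort{1}{q}). The only obstacle to running it for non-constant $p$ is implementing this per-round augmentation in polynomial time; by the structural picture above, for global connectivity the critical cuts to be augmented form a polynomial-size, tree-structured family, so the augmentation step is polynomial-time regardless of $p$. Hence the scheme gives an $\bigO((p+q)\log p)$-approximation for \globalcshort{p}{q} with no assumption on $p$.

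The second algorithm replaces the primal--dual augmentation in each of the $\bigO(\log p)$ scaling rounds by a weighted set-cover (set-multicover) step. Since only polynomially many critical cuts matter, each round becomes the following task: choose a minimum-cost set of edges of $G$ that supplies the required number of new protected edges to each currently-deficient critical cut. This is a weighted covering problem over a ground set of polynomial size (edges of $G$ on one side, still-unfilled critical-cut "slots" on the other), so a greedy or LP-rounding argument approximates the round optimum within $\bigO(\log n)$. Composing over $\bigO(\log p)$ rounds gives an $\bigO(\log p\cdot\log n)$-approximation, and returning the cheaper of the two solutions proves the theorem.

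The part I expect to be the main obstacle is the structural claim underpinning both algorithms: showing that a polynomial-size family of critical cuts captures feasibility, and that this family is explicit enough to feed both the primal--dual routine and the set-cover routine and to be updated as edges become protected. A secondary technical point is verifying that the scaling decomposition of \Cref{thm:steiner-approx} still telescopes in the global setting, so that per-round guarantees of $\bigO(p+q)$ and $\bigO(\log n)$ respectively blow up by only an $\bigO(\log p)$ factor overall.
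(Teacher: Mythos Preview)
Your overall plan---augment in phases and exploit that, for global connectivity, the relevant cuts can be enumerated explicitly---matches the paper's strategy, but two of the technical pillars you rely on are incorrect.

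First, the phase structure is not $\bigO(\log p)$ ``scaling rounds.'' The algorithm runs exactly $p$ phases, and in phase $i$ it augments every critical cut from $i-1$ to $i$ protected edges. Your own description (``one more edge in every currently-critical cut'' over $\bigO(\log p)$ rounds) is internally inconsistent, since that would produce only $\bigO(\log p)$ protected edges per cut rather than $p$. The $\log p$ in the final bound is not the number of phases; it is the harmonic number $H_p=\sum_{i=1}^{p}\frac{1}{p-i+1}$ that appears when one compares the phase-$i$ augmentation LP to the full LP via the dual-mapping argument of \Cref{lemma: dual mapping}: the dual of $\mathcal{P}_i$ embeds into the dual of the full problem with a $\frac{1}{p-i+1}$ loss, and summing the per-phase costs over all $p$ phases yields the $H_p$ factor. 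There is no ``scaling'' and nothing to ``telescope.''

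Second, the cactus/tree representation of minimum cuts is the wrong tool for bounding the number of critical cuts. It encodes only \emph{exact} minimum cuts, whereas a critical cut has size up to $p+q-1$ against a minimum-cut value of at least $p$; when $q$ is large relative to $p$ there can be exponentially many such cuts, and they are certainly not laminar or tree-structured. The paper's device is a phase-dependent capacity function: in phase $i\ge 2$, give each edge of $X_{i-1}$ capacity $(p+q)/(i-1)$ and every other edge capacity $1$. Then every cut has capacity at least $p+q$, while every cut in $\mathcal{S}_i$ (critical, with exactly $i-1$ protected edges) has capacity below $2(p+q)$. Hence $\mathcal{S}_i$ consists of $2$-approximate minimum cuts of this weighted graph, so by Karger's bound there are $\bigO(n^4)$ of them and they can be enumerated in polynomial time. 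This is what makes both the primal--dual and the Set-Cover augmentation polynomial for arbitrary $p$; without it your ``polynomial-size family of critical cuts'' claim fails for large $q$. Once $\mathcal{S}_i$ is listed explicitly, a single Set-Cover call per phase already delivers both guarantees (the $f$-approximation with $f\le p+q-1$, and the $\bigO(\log N)$-approximation with $N=\bigO(n^4)$), so the two bounds need not come from separate algorithms.
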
 

We  obtain further results for special cases by showing reductions to known problems. 
Since the {\em Augmenting Small Cuts} problem~\cite{bansal2023improved} generalizes \globalc{1}{q}, we obtain a 5-approximation building on 
\cite{nutov2024improved}.
Further, we show that \stc{1}{q} is equivalent to the {\em Minimum Shared Edge} problem; formally defined in \Cref{{sec:apx-algos}}.
This reduction implies, due to earlier work, a fixed-parameter tractable (FPT) algorithm parameterized by $q$ if the graph is undirected~\cite{DBLP:journals/jcss/FluschnikKNS19} and an \XP-algorithm (slice-wise polynomial) for directed graphs~\cite{DBLP:journals/algorithmica/AssadiENYZ14}. 
Further, for directed graphs the equivalence of the two problems implies a strong inapproximability bound of $\Omega(2^{\log^{1-\epsilon}\max\{q,n\}})$, unless $\NP \subseteq \DTIME(n^{polylog(n)})$~\cite{DBLP:journals/jco/OmranSZ13}.
Since \stc{1}{q} is a special case of \fndst{1}{q}, namely, 
\fnd{1}{q} with only a single terminal pair, this strong hardness bound also holds for \fndstshort{1}{q}, where the best-known lower bound on the approximation ratio is $\Omega(\log^{2-\varepsilon}q)$ unless $\NP \subseteq \mathsf{ZTIME}(n^{\polylog(n)})$~\cite{DBLP:conf/swat/AdjiashviliHMS22}.

\subsection{Related Work}
The \steinerc{p}{q} problem generalizes many well-known problems from survivable network design (SND), which itself
generalizes a collection of connectivity problems such as the minimum spanning tree problem, the Steiner tree and forest problem, or the minimum $k$-edge-connected spanning subgraph problem ($k$-ECSS). 

Many special cases of SND remain \APX-hard. 
This includes many  
augmentation problems, where typically the task is to increase the connectivity of a graph by at least $1$. 
If the set of edges to be added is unrestricted, the problem can be solved even in near-linear time~\cite{DBLP:journals/siamcomp/EswaranT76, DBLP:conf/stoc/CenLP22}, whereas the problem is \APX-hard 
otherwise~\cite{DBLP:journals/siamcomp/FredericksonJ81, DBLP:journals/siamcomp/KortsarzKL04}.
Well-studied such problems include the Connectivity Augmentation problem~\cite{DBLP:conf/focs/TraubZ21, DBLP:conf/stoc/TraubZ23} and the Tree Augmentation problem~\cite{DBLP:conf/soda/TraubZ22, DBLP:conf/stoc/CecchettoTZ21}.

A problem of similar flavor was introduced by Adjiashvili, Stiller and Zenklusen \cite{DBLP:journals/mp/AdjiashviliSZ15}, who initiated the study of highly non-uniform failure models, called bulk-robustness. 
Therein,  a family of edge sets $\mathcal{F}$ 
is given
and the goal is to find a minimum-cost spanning subgraph $H$ such that $H \setminus F$ is connected for any $F \in \mathcal{F}$. 
They proposed an $\bigO(\log n + \log m)$-approximation algorithm for a generalized matroid setting. 
They also studied an $s$-$t$-connectivity variant and obtained an $\bigO(w^2\log n)$-approximation algorithm, where $w = \max_{F \in \mathcal{F}} |F|$. 
Recently, Chekuri and Jain \cite{chekuri2024approximation} considered the connectivity between multiple vertex pairs and achieved an $\bigO(w^2\log^2 n)$-approximation algorithm.

The aforementioned Flexible Network Design problem can be viewed as a problem between survivable network design and bulk-robustness, as it divides the edge set into safe and unsafe edges and only $q$ unsafe edges can fail simultaneously.
Since the work by Adjiashvili, Hommelsheim and M{\"{u}}hlenthaler~\cite{DBLP:journals/mp/AdjiashviliHM22} for $(1,1)$-GFND, there has been a lot of work on $(p,q)$-GFND. 
Most research focused on the case where either $p$ or $q$ is a small constant. 
Boyd et al.~\cite{DBLP:journals/mp/BoydCHI24} obtained $(q+1)$-approximation for $(1,q)$-GFND, a $4$-approximation for $(p,1)$-GFND and $\bigO(q \log n)$-approximation for $(p,q)$-GFND. 
Very recently, Bansal et al.~\cite{bansal2024improvedapproximationalgorithmsflexible} showed an improved $7$-approximation algorithm for $(1,q)$-GFND.
We refer to \cite{bansal2023improved,DBLP:journals/corr/abs-2308-15714,chekuri2023approximation, bansal2024improvedapproximationalgorithmsflexible} for a collection of results, including constant approximation for $(p,2),(p,3)$-GFND, $\bigO(q)$-approximation for $(2,q)$-GFND and $\bigO(p)$-approximation for $(p,4)$-GFND for even $p$. 
Parallel to $(p,q)$-GFND, Adjiashvili et al.~\cite{DBLP:conf/swat/AdjiashviliHMS22} considered the $s$-$t$-connectivity variant, 
to which some results in~\cite{chekuri2023approximation} translate.

    Another closely related problem is the Capacitated $k$-Connected Subgraph problem (Cap-$k$-ECSS)~\cite{DBLP:journals/algorithmica/ChakrabartyCKK15}.
    In this problem, we are given an undirected graph $G=(V, E)$ with edge costs $c : E \rightarrow \mathbb{R}_+$ and edge capacities $u : E \rightarrow \mathbb{Z}_+$.
    The goal is to find a minimum-cost spanning subgraph in which the capacity of any cut is at least $k$.
    Boyd et al.~\cite{DBLP:journals/mp/BoydCHI24} pointed out that $(1,q)$-GFND (hence also $(1,q)$-GCP) can be reduced to Cap-$(q+1)$-ECSS by setting the capacity of safe and unsafe edges to $q+1$ and $1$, respectively.
    For Cap-$k$-ECSS,
    the best-known approximation algorithms are $\bigO(\log n)$-approximation by
    Chakrabarty et al.~\cite{DBLP:journals/algorithmica/ChakrabartyCKK15} and $\bigO(\log k)$-approximation by Bansal et al.~\cite{bansal2024improvedapproximationalgorithmsflexible}.

\section{Preliminaries: Notation, Cut Formulation, Hardness}
\label{sec:preliminaries}

\textbf{Graph notation.\ }
For an undirected graph $G=(V, E)$ and a vertex set $S \subset V$, we use $\delta_G(S)$ to denote the set of edges with exactly one endpoint in $S$. 
We write $\delta(S)$ if the underlying graph is clear from the context. 
An edge cut $C$ is a subset of edges such that $G \setminus C$ has at least $2$ connected components. 
If $|C|=1$, we call $\{e \} = C$ a {\em bridge}. 
Further, if there is some terminal pair $(s_i,t_i)$ such that $s_i$ and $t_i$ are in different connected components in $G \setminus C$, we say $C$ is terminal-separating. 
Let $e=(u,v) \in E$. We use the notation of $G/e$ to denote the graph obtained from $G$ by contracting $e$, i.e., by deleting $e$ and identifying $u,v$. 
Let $G'=(V',E')$ be some subgraph of $G$. 
We slightly abuse the notation of contraction and use $G/G'$ to represent the graph obtained from $G$ by contracting all edges in $E'$.
Let $e \in E(G)$. We use $G - e$ to denote the graph $G \setminus \{e\}$. 
Similarly, we define $G+e$.

\textbf{An equivalent cut formulation.\ } Given an instance of \steinerc{p}{q}, we define $\mathcal{S} = \{S \subset V \mid \exists i, |S \cap \{s_i, t_i\}| = 1, |\delta(S)| \leq p+q-1\}$ as the set of \emph{critical} (terminal-separating) cuts. 
Our problem can be equivalently formulated as the following cut-based integer program, in which we have to cover all critical cuts:
\begin{alignat*}{3} 
  \text{min} \quad &\sum_{e \in E} c_e x_e  &   \notag \\
    \text{s.t.} \quad& \sum_{e \in \delta(S)} x_e \geq p  & \quad \forall S \in \mathcal{S} \tag{CutIP}\\  
    & x_e \in \{0,1\}  & \forall e \in E. \notag    
\end{alignat*}

\begin{restatable}{proposition}{cutFormulation}
    \label{prop:cut formulation}
    \textup{(CutIP)} characterizes the feasible solutions of \steinercshort{p}{q}.
    Moreover, a solution is feasible if and only if any critical cut contains at least~$p$ protected edges.
\end{restatable}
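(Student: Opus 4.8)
The plan is to translate the dynamic, failure-based definition of feasibility into a static condition on cuts via Menger's theorem, and then to match that condition with membership in $\mathcal{S}$ through a short case analysis on the size of the cut.

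\textbf{Step 1: A cut reformulation of feasibility.} First I would rewrite what it means for $X$ to be feasible. For a fixed failure set $F \subseteq E \setminus X$ with $|F| \le q$, the graph $(V, E \setminus F)$ contains $p$ edge-disjoint $s_i$-$t_i$ paths if and only if every $s_i$-$t_i$ edge cut in $(V, E \setminus F)$ has at least $p$ edges; this is the edge version of Menger's theorem (equivalently, max-flow--min-cut). Concretely, this says $|\delta_G(S) \setminus F| \ge p$ for every $S$ with $|S \cap \{s_i, t_i\}| = 1$. Hence $X$ is feasible for \stcshort{p}{q} (and analogously for the Steiner version over all pairs) exactly when $|\delta(S) \setminus F| \ge p$ holds for every terminal-separating set $S$ and every $F \subseteq E \setminus X$ with $|F| \le q$. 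Note that here the surviving graph is all of $G - F$, not a chosen subgraph, so it is legitimate that protected edges of $X$ (and all other non-failed edges) are available — this is where the ``protection'' model differs from classical SND and why the analysis stays elementary.

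\textbf{Step 2: Matching with critical cuts.} Next I would show that the condition from Step 1 is equivalent to: $|\delta(S) \cap X| \ge p$ for every $S \in \mathcal{S}$, i.e.\ every terminal-separating $S$ with $|\delta(S)| \le p+q-1$. For the implication ``cut condition $\Rightarrow$ feasibility'', fix $F$ and a terminal-separating $S$. If $|\delta(S)| \ge p+q$, then $|\delta(S) \setminus F| \ge |\delta(S)| - |F| \ge (p+q) - q = p$. Otherwise $S \in \mathcal{S}$, so $|\delta(S) \cap X| \ge p$; since $F \cap X = \emptyset$ we have $\delta(S) \cap X \subseteq \delta(S) \setminus F$, giving $|\delta(S) \setminus F| \ge p$. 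For the converse I argue contrapositively: suppose some $S \in \mathcal{S}$ has $|\delta(S) \cap X| \le p-1$, and let $r = |\delta(S)| \le p+q-1$. If $r \le p-1$, then $F = \emptyset$ already witnesses infeasibility. If $r \ge p$, the number of unprotected edges in $\delta(S)$ is $r - |\delta(S)\cap X| \ge r - p + 1$, so we may pick $F \subseteq \delta(S)\setminus X$ with $|F| = r - p + 1$; then $|F| \le (p+q-1) - p + 1 = q$ and $|\delta(S) \setminus F| = r - (r-p+1) = p-1 < p$, so the terminal pair separated by $S$ fails. This establishes the ``moreover'': a solution is feasible iff every critical cut contains at least $p$ protected edges. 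Finally, since for $x \in \{0,1\}^E$ the constraint $\sum_{e \in \delta(S)} x_e \ge p$ is literally ``$|\delta(S) \cap X_x| \ge p$'' for the set $X_x = \{e : x_e = 1\}$, the $0/1$ feasible points of the program are precisely the indicator vectors of feasible \stcshort{p}{q} solutions, and the objective $\sum_e c_e x_e$ equals $c(X_x)$; this gives the first sentence of the proposition.

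\textbf{Main obstacle.} The only delicate point is the counting in the forward direction: one must verify that the ``bad'' failure set $F$ can be chosen inside $\delta(S) \setminus X$ with $|F| \le q$, which is exactly where the bound $|\delta(S)| \le p+q-1$ in the definition of $\mathcal{S}$ is consumed, and one must separately handle the degenerate regime $|\delta(S)| \le p-1$ (equivalently, the case where $r-p+1 \le 0$ and $F=\emptyset$ already suffices). Beyond that, the proof is just Menger's theorem plus set arithmetic, with no structural graph theory required.
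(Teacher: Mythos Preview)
Your proof is correct and follows essentially the same approach as the paper: both directions are handled by the same case split on whether $|\delta(S)| \ge p+q$ or $S \in \mathcal{S}$, and the contrapositive direction constructs the bad failure set inside $\delta(S)\setminus X$ using the bound $|\delta(S)| \le p+q-1$. Your write-up is in fact slightly more careful than the paper's, since you invoke Menger's theorem explicitly and separately treat the degenerate case $|\delta(S)| \le p-1$.
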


\begin{proof}
    We show that an edge set $X$ is a feasible solution if and only if for any vertex set $S \in \mathcal{S}$, $|X \cap \delta(S)| \geq p$.
   Consider a feasible solution $X$ and suppose that there is some $S \in \mathcal{S}$ with $|\delta(S)| \leq p+q-1$ and $|\delta(S) \cap X| < p$. 
   Then, after removing no more than $q$ edges from $\delta(S) \setminus X$, the remaining graph has a cut with less than $p$ edges. 
   Further, this cut separates some terminal pair. 
   Thus, $X$ is not a feasible solution.

    Suppose for all $S \in \mathcal{S}$ we have $|\delta(S) \cap X| \geq p$. 
    We show that after removing at most $q$ unprotected edges, the remaining graph is still $p$-edge-connected between any terminal pair. 
    For any cut $\delta(S)$ with $|\delta(S)| \geq p+q$, there are at least $p$ remaining edges since we remove at most $q$ edges. Fix any terminal pair $s,t$ and any edge set $D \subseteq (E \setminus X)$ with $|D| \leq q$. We show that $|\delta(S) \setminus D| \geq p$ for any $s$-$t$-cut $S$, which implies $p$-edge-connectivity between $s,t$. If $|\delta(S)| \geq p+q$, then $|\delta(S) \setminus D| \geq p$. If $|\delta(S)| \leq p+q-1$, then $|\delta(S) \cap X| \geq p$ by the constraint of (CutIP). Thus it also holds that $|\delta(S) \setminus D| \geq p$.
\end{proof}
 
Given a partial solution $X \subseteq E$, we call a cut \emph{safe} (w.r.t.\ $X$) if it is not critical or it contains at least~$p$ edges in~$X$. 
Otherwise, we call it \emph{unsafe}.

\textbf{NP-hardness.\ }
In addition to the aforementioned complexity observations, we now show that \globalcshort{1}{q} is \NP-hard, even in the unweighted setting where we protect a minimum number of edges.
Observe that any spanning tree of $G$ is a feasible solution, which implies $\opt \leq |V|-1$. 
However, we show that it is \NP-complete to distinguish whether $\opt = |V|-1$ or $\opt < |V|-1$, by a reduction from the largest bond problem \cite{duarte2021computing}. 
Therein, we are given an undirected graph $G=(V, E)$ and an integer $k \geq 1$. 
A bond is an edge set represented by $\delta(S)$ for some $S \subset V$ with both $G[S]$ and $G[V \setminus S]$ being connected.
The task is to decide whether there is a bond of size at least $k$.

We outline the idea for the hardness proof as follows.
Given an instance of the largest bond problem, we reduce it to an instance of \globalcshort{1}{q} using the same graph with $q:=k-1$.
If there is a bond $\delta(S)$ of size at least $k=q+1$, then protecting a spanning tree of $G[S]$ and $G[V \setminus S]$ is feasible, as the cut $\delta(S)$ is not critical, which implies $\opt < |V|-1$.
If $\opt < |V|-1$, then the protected edges in the optimal solution induce multiple connected components. 
We can find a bond of size at least $q+1$ by contracting the connected components induced by the protected edges and computing the minimum cut of the resulting graph.

\begin{restatable}{theorem}{hardnessglobaloneq}
\label{thm:hardness:1q}
    Unweighted \globalc{1}{q} is \NP-hard. 
\end{restatable}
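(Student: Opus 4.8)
The plan is to reduce from the largest bond problem, which is \NP-complete~\cite{duarte2021computing}: given a connected graph $G=(V,E)$ and an integer $k\ge 1$, decide whether $G$ has a bond of size at least $k$ (recall a bond is an inclusion-minimal disconnecting edge set, equivalently a set $\delta(S)$ with both $G[S]$ and $G[V\setminus S]$ connected). Given such an instance, I would build the unweighted \globalcshort{1}{q} instance on the \emph{same} graph $G$ with $q:=k-1$. With $p=1$, the critical cuts are exactly the cuts of size at most $q$, so by \Cref{prop:cut formulation} a protected set $X$ is feasible iff every cut of size at most $q$ contains an edge of $X$; equivalently, every cut disjoint from $X$ has size at least $q+1=k$. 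Since any spanning tree is feasible, $\opt\le |V|-1$ always. The whole argument then reduces to proving the dichotomy: $G$ has a bond of size at least $k$ iff $\opt\le|V|-2$; since $\opt\le|V|-1$ always, this means deciding ``$\opt=|V|-1$?'' is exactly the largest-bond decision problem, giving \NP-hardness.

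For the forward direction I would take a bond $\delta(S)$ with $|\delta(S)|\ge k=q+1$ and protect the union of a spanning tree of $G[S]$ and a spanning tree of $G[V\setminus S]$, which uses $|V|-2$ edges. Feasibility is easy to check: after deleting any $q$ unprotected edges, each side stays connected because its spanning tree is protected, and the two sides stay joined because $\delta(S)$ has $q+1$ edges and none of them is protected, so at least one crossing edge survives. Hence $\opt\le|V|-2$.

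For the converse, suppose $\opt\le|V|-2$ and let $X$ be an optimal protected set, so $(V,X)$ has $r\ge 2$ connected components $C_1,\dots,C_r$. Let $G'$ be the multigraph obtained from $G$ by contracting each $C_i$ to a single vertex and discarding loops. Every cut of $G'$ is of the form $\delta_G(S)$ with $S=\bigcup_{C_i\in\mathcal T}C_i$ for some nonempty proper $\mathcal T\subsetneq\{C_1,\dots,C_r\}$; such a cut is disjoint from $X$ (all edges of $X$ lie inside components), so it has size at least $q+1=k$ by feasibility. Thus $G'$ is $k$-edge-connected. Being connected (as $G$ is connected) with at least two vertices, $G'$ has a bond $\delta_{G'}(\mathcal T)$ — take a spanning tree and delete one edge — and this bond has size at least $k$ because $G'$ is $k$-edge-connected. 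Finally I would verify that $\delta_G(S)$ for $S=\bigcup_{C_i\in\mathcal T}C_i$ is a bond of $G$ of the same size: $G[S]$ is connected because each $C_i$ is connected via $X$-edges inside $G[S]$ and the edges realizing the connectedness of $G'[\mathcal T]$ also lie inside $G[S]$, and symmetrically for $G[V\setminus S]$; and $|\delta_G(S)|=|\delta_{G'}(\mathcal T)|\ge k$.

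The step I expect to be the main obstacle is precisely this last one — turning a feasible protected set into a genuine \emph{bond} (one whose two shores are both connected), rather than just a large cut. The device that makes it work is passing through the contracted graph $G'$, using that $G'$ is $k$-edge-connected and that every connected graph on at least two vertices contains a bond; everything else (the size bookkeeping, feasibility of the spanning-tree solution, polynomiality of the construction, and the observation that the largest-bond instance being connected makes the \globalcshort{1}{q} instance feasible) is routine.
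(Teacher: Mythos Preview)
Your proposal is correct and follows essentially the same approach as the paper: both reduce from the largest bond problem with $q=k-1$, use spanning trees of the two shores for the forward direction, and for the converse contract the components of $(V,X)$ to obtain a $k$-edge-connected multigraph $G'$ whose bond lifts back to a bond of $G$. Your spanning-tree argument for finding a bond in $G'$ and your explicit verification that the lifted cut is a bond are slightly more detailed than the paper's (which invokes the existence of a minimum cut with exactly two shores), but the route is the same.
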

\begin{proof}
    Given an instance of the largest bond problem, we construct an instance of \globalc{1}{q} using the same graph with $q=k-1$. 
    
    If there is a bond $\delta(S)$ of size at least $k$, then the optimal solution value of the \globalc{1}{q} instance is no more than $|V|-2$, as we can simply protect a spanning tree of $G[S]$ and a spanning tree of $G[V\setminus S]$), which exist since $\delta(S)$ is a bond.

    If there is no bond of size at least $k$, we claim that the optimal solution of the instance of \globalc{1}{q} must be a spanning tree using $|V|-1$ edges. 
    Suppose the optimal solution is not a spanning tree, and it consists of connected components $S_1, S_2, \dots, S_t$, $t \geq 2$. 
    After contracting $S_1, S_2, \dots, S_t$, the graph has to be $k$-edge-connected, by feasibility.
    Let $G'$ be this graph.
    Note that in any graph there must be a minimum cut $Y \subseteq E$ such that $G \setminus Y$ consists of exactly $2$ connected component. 
    Hence, there is also such a minimum cut $Y$ in $G'$ and this cut has size at least $k$, as $G'$ is $k$-edge-connected. 
    But then $Y$ corresponds to a bond of size at least $k$ in the original graph, a contradiction.
\end{proof}

\section{Exact Algorithms for small $q$} 
\label{sec:exact-algos}
In this section we design three polynomial-time exact algorithms for different cases depending on $p$ and $q$, i.e., we prove Theorems~\ref{thm:poly-exact-p-1}, \ref{thm:poly-exact-1-2}, and~\ref{thm:poly-exact-2-2}, which together imply~\Cref{thm:main:polytime}.
 
\subsection{\steinerc{p}{1}} \label{sec:p_1}
    To give some intuition, we first show a simple algorithm for \steinerc{p}{1}. By \Cref{prop:cut formulation}, an instance is feasible if and only if there is no terminal-separating cut of size less than $p$. 
    Hence, from now on we assume the instance is feasible.
    
    The set of critical cuts is given by $\mathcal{S} = \{S \subset V \mid \exists i, |S \cap \{s_i, t_i\}| = 1, |\delta(S)| \leq p\}$.
    Hence, any feasible solution must contain \emph{all} edges of any critical cut.
    Therefore, the only inclusion-wise minimal solution consists of all edges in any terminal-separating cut of size $p$ and it remains to find all such edges. 
    To this end, we assign different \emph{capacities} to protected edges and unprotected edges such that any safe cut has a strictly larger capacity than that of any unsafe cut. The algorithm works as follows.
    
    {\bf Algorithm 1.}\ 
    Let $X$ be the current partial solution; initially $X = \emptyset$.
    In each iteration, we set the capacity of the edges to  
    $\frac{p+1}{p}$ for all $e \in X$ and $1$ otherwise.
    For every terminal pair $s,t$, we solve the Minimum $s$-$t$-Cut Problem using standard polynomial-time algorithms. 
    If we find a terminal-separating cut of capacity less than $p+1$, then this defines an unsafe critical cut $\delta(S)$ and we protect all edges in it, i.e., we add $\delta(S)$ to~$X$ and repeat.
    If each terminal-separating cut has capacity at least $p+1$, output $X$.
    
\begin{restatable}{theorem}{polytimeexactpone}
\label{thm:poly-exact-p-1}
Algorithm 1 is a polynomial-time exact algorithm for \steinerc{p}{1}.
\end{restatable}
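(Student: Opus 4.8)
The plan is to prove two things about Algorithm 1: correctness (it outputs a feasible solution that is in fact the unique inclusion-wise minimal one, hence optimal) and polynomial running time. The key structural fact, inherited from \Cref{prop:cut formulation} specialized to $q=1$, is that a set $X$ is feasible if and only if every terminal-separating cut of size at most $p$ is entirely contained in $X$; consequently the only inclusion-wise minimal feasible solution is $X^\star := \bigcup \{\delta(S) : S \in \mathcal{S}\}$, and since all costs are nonnegative this $X^\star$ is an optimal solution. So it suffices to show Algorithm 1 terminates and outputs exactly $X^\star$.

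First I would establish the capacity invariant: with edge capacities $\tfrac{p+1}{p}$ on $X$ and $1$ elsewhere, a terminal-separating cut $\delta(S)$ has capacity at least $p+1$ if and only if it is safe with respect to $X$ (i.e. either $|\delta(S)| \ge p+1$, or $|\delta(S)| \le p$ but $\delta(S) \subseteq X$). The ``if'' direction is a short computation: if $|\delta(S)| \ge p+1$ the capacity is at least $p+1$ since every capacity is $\ge 1$; if $\delta(S) \subseteq X$ with $|\delta(S)| = \ell \le p$, the capacity is $\ell \cdot \tfrac{p+1}{p} = \tfrac{\ell(p+1)}{p}$, which is minimized over $\ell$ at... wait, that is increasing in $\ell$, so the worst case is $\ell$ as large as possible — actually we only need it to be $\ge p+1$, and for $\ell \le p-1$ this can fail, so I should be careful here. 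The right statement is: a critical cut ($|\delta(S)| \le p$) is safe for $q=1$ iff it is fully contained in $X$ (since $p$ protected edges out of $\le p$ total forces all of them protected), and if $\delta(S) \subseteq X$ then its capacity is $|\delta(S)| \cdot \tfrac{p+1}{p} \ge$ ... this is only $\ge p+1$ when $|\delta(S)| \ge p$. For $|\delta(S)| < p$ the instance would be infeasible, which we have excluded. So under the feasibility assumption every terminal-separating cut of size $\le p$ actually has size exactly in $\{?\}$—no: there can be terminal-separating cuts of size $< p$ only if infeasible. Hence every critical cut has size exactly between... at least $p$? No. Let me restate cleanly: feasibility (already assumed) means there is no terminal-separating cut of size $< p$, so every critical cut $S \in \mathcal{S}$ has $|\delta(S)| = p$ exactly. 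Then $\delta(S) \subseteq X$ implies capacity $= p \cdot \tfrac{p+1}{p} = p+1$, and conversely if $\delta(S) \not\subseteq X$ it has at least one unit-capacity edge among $p$ edges, giving capacity $\le (p-1)\tfrac{p+1}{p} + 1 = p+1 - \tfrac{1}{p} < p+1$. For non-critical terminal-separating cuts, capacity $\ge |\delta(S)| \ge p+1$. This gives exactly: min terminal $s$-$t$-cut capacity $< p+1$ $\iff$ there is an unsafe critical cut.

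Second, I would argue correctness by a monotonicity/invariant argument: every edge the algorithm ever adds belongs to some critical cut, hence lies in $X^\star$; so $X \subseteq X^\star$ is maintained throughout. When the algorithm halts, no terminal $s$-$t$-cut has capacity $< p+1$, so by the invariant above every critical cut is contained in the current $X$, i.e. $X^\star \subseteq X$. Combined with $X \subseteq X^\star$, the output equals $X^\star$, the optimal solution. Termination and polynomiality follow because each iteration strictly increases $|X|$ (the newly found unsafe critical cut contains at least one edge not yet in $X$), so there are at most $|E|$ iterations, each running $k$ min-cut computations on a graph with rational capacities of bounded encoding length, which is polynomial.

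The main obstacle — really the only subtle point — is pinning down the capacity bookkeeping, in particular recognizing that the feasibility assumption forces every critical cut to have size exactly $p$ (not merely $\le p$), which is what makes the threshold $p+1$ work cleanly and separate safe from unsafe cuts with a strict gap of $\tfrac1p$. Everything else (the reduction to covering all critical cuts, the uniqueness of the minimal feasible solution, the per-iteration progress bound) is immediate from \Cref{prop:cut formulation} and standard min-cut algorithmics.
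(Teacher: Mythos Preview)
Your proposal is correct and follows essentially the same approach as the paper: both arguments hinge on the observation that, under the feasibility assumption, every critical cut has size exactly $p$, so the chosen capacities make a terminal-separating cut have capacity $< p+1$ precisely when it is an unsafe critical cut, which is then detected by min-$s$-$t$-cut computations. Your write-up is slightly more explicit about the invariant $X \subseteq X^\star$ and the $\le |E|$ iteration bound, but the underlying idea and the capacity bookkeeping are identical to the paper's proof.
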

\begin{proof}
    Algorithm~1 runs obviously in polynomial time. 
    Note that we can decide the feasibility of the given instance by enumerating terminal pairs and checking whether there is a terminal-separating cut of size less than $p$. 
    We now assume there is none and the instance is feasible. 
    
    Let $X \subseteq E$ be a partial solution.  
    We claim that the capacity function in Algorithm~1 distinguishes safe and unsafe cuts with respect to $X$.
    Specifically, a cut is unsafe if and only if its capacity is strictly less than $p+1$. 
    By the feasibility of the instance, any terminal-separating cut has at least $p$ edges. 
    Let $C$ be any terminal-separating cut. 
    If $|C| > p$ or $C \subseteq X$, then the capacity of $C$ is at least $p+1$. 
    If $|C| = p$ and $|C \cap X| < p$, then its capacity is smaller than $p+1$. 
    Thus we can find an unsafe terminal-separating cut by enumerating terminal pairs $s, t$ and computing a minimum $s$-$t$ cut with respect to the capacity function. 
    By the preceding discussion, Algorithm 1 finds an optimum solution in polynomial time.
\end{proof}

\subsection{\steinerc{1}{2}}
In this subsection we present a polynomial-time algorithm for \steinerc{1}{2}.  
The set of critical cuts is $\mathcal{S} = \{S \subset V \mid \exists i, |S \cap \{s_i, t_i\}| = 1, |\delta(S)| \leq 2\}$.
Hence, we distinguish between bridges and $2$-edge-cuts. 
We first show that we can reduce to the case that the input graph $G$ is $2$-edge-connected.

Given any bridge $e$ of $G$, if $e$ separates some terminal pair, any feasible solution has to include $e$. 
In this case, we pay $c(e)$ and consider the new instance defined by $G/e$. 
If there is no such terminal pair, then any inclusion-wise minimal feasible solution should not include~$e$, which implies that we can delete $e$ and consider the two connected components of $G - e$ individually. 
As a result, we can assume that the input graph $G$ is $2$-edge-connected.
Note that if the graph is $3$-edge-connected, then there is no critical cut and we are done.

Given a terminal-separating $2$-edge-cut $\{e_1, e_2\}$ of $G$, at least one of $e_1$ and $e_2$ has to be contained in any feasible solution. However, deciding which edge to protect is non-trivial. 
We show how to further decompose our instance into smaller and independent instances according to the following structural lemma. See \Cref{fig:decompositiona} for an illustration.

\begin{lemma}\label{decomposition} 
    Given an undirected graph $G$ which is $2$-edge-connected but not $3$-edge-connected, and a $2$-edge-cut $\{e_1, e_2\}$ of $G$, there is a polynomial-time algorithm to decompose $G$ into disjoint $2$-edge-connected subgraphs $G_1, \dots, G_k$ such that after contracting $G_1, \dots, G_k$ the resulting graph $G/\bigcup_{i=1}^kG_i$ forms a cycle and $e_1, e_2$ belong to this cycle.
\end{lemma}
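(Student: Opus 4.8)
The plan is to build the decomposition by iterating over all $2$-edge-cuts of $G$ that are "compatible" with the given cut $\{e_1,e_2\}$, using the classical fact that a $2$-edge-connected graph has a laminar-like structure of $2$-edge-cuts (equivalently, that the $3$-edge-connected components of $G$ can be arranged in a tree/cactus-like manner). More concretely, I would first fix the cut $\{e_1,e_2\}$: removing $e_1,e_2$ splits $V$ into two sides $A$ and $B$. Then I would look at all other $2$-edge-cuts $\{f,g\}$ of $G$ and observe that, since $G$ is $2$-edge-connected, any such cut either "crosses" $\{e_1,e_2\}$ or is "nested" with respect to the $A/B$ split; a short case analysis on how two $2$-edge-cuts of a $2$-edge-connected graph can interact (they cannot cross in the strong sense — if $\{e_1,e_2\}$ and $\{f,g\}$ both separate the graph, the four quadrants they induce are constrained) shows that the family of $2$-edge-cuts relevant here is laminar. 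This laminarity is what lets us linearly order the pieces.

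The key steps, in order, would be: (1) Delete $e_1,e_2$, obtaining sides $A,B$; since $G$ is $2$-edge-connected, $G[A]$ and $G[B]$ are each connected. (2) Within $A$ (and symmetrically $B$), repeatedly find a $2$-edge-cut of $G$ whose removal, together with possibly $e_1$ or $e_2$, isolates a "chunk"; formally, consider the auxiliary structure where we look for all edges $f$ such that $\{f, e_j\}$ is a $2$-edge-cut for $j\in\{1,2\}$ — these are exactly the edges that will lie on the final cycle, and they partition the $e_1$–to–$e_2$ "path structure" inside $A$ and inside $B$ into intervals. (3) Define $G_1,\dots,G_k$ as the maximal subgraphs obtained by contracting along this laminar family: each $G_i$ is the union of everything "between" two consecutive such cut edges, and one checks it is $2$-edge-connected because we have contracted away all $2$-edge-cuts internal to it. (4) Verify that after contracting all $G_i$, the remaining graph has every vertex of degree exactly $2$ (each contracted blob is incident to exactly two of the "cycle" edges) and is connected, hence is a cycle; and that $e_1,e_2$ survive the contractions and lie on it, since they were never internal to any $G_i$. (5) Note each step is a minimum-cut / $2$-edge-cut enumeration computable in polynomial time, and there are at most $O(|E|)$ pieces, giving the overall polynomial bound.

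The main obstacle I anticipate is step (2)–(3): making precise and correct the claim that the relevant $2$-edge-cuts form a laminar family and that contracting the resulting blobs yields genuinely $2$-edge-connected pieces with the cycle structure. The subtlety is that $2$-edge-cuts of a $2$-edge-connected graph need not be globally laminar in general (they form a cactus-like / circular structure when you take the "cut space"), so one has to argue that once we fix $\{e_1,e_2\}$ and restrict attention to the cuts separating along the $A/B$ line, the obstruction disappears and a linear order emerges — this is essentially saying the cactus representation of $G$, rooted at the cycle through $e_1,e_2$, has the desired shape. I would handle this by invoking (or re-deriving in a couple of lines) the standard structure theorem for $2$-edge-cuts / the $3$-edge-connected component decomposition, rather than reproving it from scratch. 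A secondary, more routine obstacle is bookkeeping: ensuring the $G_i$ are vertex-disjoint, cover all of $V$, and that degenerate cases (e.g., $k=1$, or $G_i$ being a single vertex, or $G/\bigcup G_i$ being a $2$-cycle on the multigraph $\{e_1,e_2\}$) are consistent with the stated conclusion.
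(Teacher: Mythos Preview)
Your approach is workable in principle but takes a genuinely different and heavier route than the paper. You remove \emph{both} edges $e_1,e_2$, split into sides $A,B$, and then try to recover a linear order by analysing the family of all $2$-edge-cuts of $G$ and invoking the cactus / $3$-edge-connected-component structure. The paper instead removes only \emph{one} edge, say $e_2$: the graph $G' = G \setminus \{e_2\}$ is connected (since $G$ is $2$-edge-connected) but not $2$-edge-connected (since $e_1$ is now a bridge). One then simply takes the standard block decomposition of $G'$ into its $2$-edge-connected components; the bridge tree $G''$ of $G'$ is a tree, and because adding back the single edge $e_2$ restores $2$-edge-connectivity, $G''$ must be a path with $e_2$ joining its two endpoints. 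The $G_i$ are exactly the $2$-edge-connected components of $G'$, and $G/\bigcup G_i$ is that path plus $e_2$, i.e.\ a cycle containing $e_1$ and $e_2$.

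The paper's argument is shorter and avoids precisely the obstacle you flag in step~(2)--(3): there is no need to prove any laminarity of $2$-edge-cuts or to appeal to the cactus representation, because the block/bridge decomposition of a connected graph is already a tree and is computable directly. Your approach would ultimately rediscover the same $G_i$ (the $3$-edge-connected-component blobs along the cycle through $e_1,e_2$ in the cactus), so it is not wrong, but the detour through laminarity of $2$-cuts is unnecessary and is where your write-up would have to do the most work.
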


\begin{proof}
    Consider the graph $G' := G \setminus \{e_2\}$, which is connected but not $2$-edge-connected.
    Let $G''$ arise from $G'$ by contracting each $2$-edge-connected component.
    Note that $G''$ is isomorphic to a tree. 
    Since $G=G'\cup \{e_2\}$ is $2$-edge-connected, $G''$ must be a path. 
    Further, $e_2$ connects the two end-vertices of the path and $e_1$ is a path edge ($e_1$ is a bridge of $G''$). Let the nodes of the path $G''$ be $v_1, \dots, v_k$ and for $1 \leq i \leq k$ let $G_i$ be the $2$-edge-connected component represented by $v_i$, respectively. 
    We conclude that $G/\bigcup_{i=1}^kG_i$ forms a cycle and $e_1, e_2$ belong to this cycle.
\end{proof}

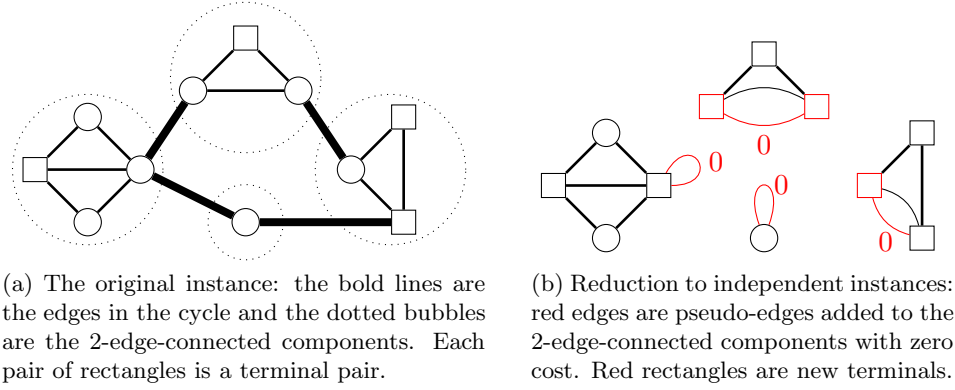
\begin{figure}[t]
\centering
\begin{subfloat}[The original instance: the bold lines are the edges in the cycle and the dotted bubbles are the $2$-edge-connected components. Each pair of rectangles is a terminal pair.] {
\centering
    \begin{tikzpicture}[scale=0.35]
    \node[draw=black,rectangle,minimum size =0.32cm] (a) at (0,0) {};
    \node[draw=black,circle] (b) at (2,2) {};
    \node[draw=black,circle] (c) at (2,-2) {};
    \node[draw=black,circle] (d) at (4,0) {};

    \draw[draw=black,line width=1pt] (a) -- (b);
    \draw[draw=black,line width=1pt] (a) -- (d);
    \draw[draw=black,line width=1pt] (a) -- (c);
    \draw[draw=black,line width=1pt] (b) -- (d);
    \draw[draw=black,line width=1pt] (c) -- (d);

    \node[draw=black,circle,minimum size = 2cm,dotted] (G1) at (2,0) {};

    \node[draw=black,circle] (e) at (6,3) {};
    \node[draw=black,circle] (f) at (10,3) {};
    \node[draw=black,rectangle,minimum size =0.32cm] (g) at (8,5) {};

    \draw[draw=black,line width=3pt] (d) -- (e);
    \draw[draw=black,line width=1pt] (e) -- (f);
    \draw[draw=black,line width=1pt] (f) -- (g);
    \draw[draw=black,line width=1pt] (e) -- (g);

    \node[draw=black,circle,minimum size = 2.0cm,dotted] (G1) at (8,3.5) {};

    \node[draw=black,circle] (h) at (12,0) {};
    \node[draw=black,rectangle,minimum size =0.32cm] (i) at (14,2) {};
    \node[draw=black,rectangle,minimum size =0.32cm] (j) at (14,-2) {};

    \draw[draw=black,line width=3pt] (f) -- (h);
    \draw[draw=black,line width=1pt] (h) -- (i);
    \draw[draw=black,line width=1pt] (i) -- (j);
    \draw[draw=black,line width=1pt] (h) -- (j);

    \node[draw=black,circle,minimum size = 2cm,dotted] (G1) at (13.5,0) {};

     \node[draw=black,circle] (k) at (8,-2) {};
     \node[draw=black,circle,minimum size = 1cm,dotted] (G1) at (8,-2) {};
    
     \draw[draw=black,line width=3pt] (d) -- (k);
     \draw[draw=black,line width=3pt] (k) -- (j);
\end{tikzpicture}
\label{fig:decompositiona} 
}
\end{subfloat}
\hspace{10pt}       
\begin{subfloat}[
Reduction to independent instances: red edges are pseudo-edges added to the $2$-edge-connected components with zero cost. Red rectangles are new terminals.] {
\centering
    \begin{tikzpicture}[scale=0.35]
    \node[draw=black,rectangle,minimum size =0.32cm] (a) at (0,0) {};
    \node[draw=black,circle] (b) at (2,2) {};
    \node[draw=black,circle] (c) at (2,-2) {};
    \node[draw=black,rectangle,minimum size =0.32cm] (d) at (4,0) {};
    \path[every loop/.style={min distance=10mm,in=0,out=60,looseness=10}] (d) edge[loop right,red] node[red]{$0$} (d);

    \draw[draw=black,line width=1pt] (a) -- (b);
    \draw[draw=black,line width=1pt] (a) -- (d);
    \draw[draw=black,line width=1pt] (a) -- (c);
    \draw[draw=black,line width=1pt] (b) -- (d);
    \draw[draw=black,line width=1pt] (c) -- (d);


    \node[draw=red,rectangle,minimum size =0.32cm] (e) at (6,3) {};
    \node[draw=red,rectangle,minimum size =0.32cm] (f) at (10,3) {};
    \node[draw=black,rectangle,minimum size =0.32cm] (g) at (8,5) {};

    \path[]
        (e) edge [bend left] node {} (f)
            edge [bend right,red] node[red,below] {$0$} (f);
    \draw[draw=black,line width=1pt] (f) -- (g);
    \draw[draw=black,line width=1pt] (e) -- (g);


    \node[draw=red,rectangle,minimum size =0.32cm] (h) at (12,0) {};
    \node[draw=black,rectangle,minimum size =0.32cm] (i) at (14,2) {};
    \node[draw=black,rectangle,minimum size =0.32cm] (j) at (14,-2) {};

    \draw[draw=black,line width=1pt] (h) -- (i);
    \draw[draw=black,line width=1pt] (i) -- (j);

    \path[]
        (h) edge [bend left] node {} (j)
            edge [bend right,red] node[red,below] {$0$} (j);


     \node[draw=black,circle] (k) at (8,-2) {};
     \path[every loop/.style={min distance=10mm,in=70,out=110,looseness=15}] (k) edge[loop right,red] node[red,right]{$0$} (k);
     
\end{tikzpicture}
\label{fig:decompositionb} 
}       
\end{subfloat}
\caption{Illustration of the decomposition (\Cref{decomposition}, \Cref{lemma: decomposition}).}
\label{fig:decomposition}

\end{figure}
Given a decomposition as in \Cref{decomposition}, we claim that the problem reduces to solving certain subproblems defined by $G_1, \dots, G_k$ (plus some additional pseudo-edge for each component) and the subproblem restricted to the cycle $C$. 
To do so, we view our problem as finding a minimum-cost edge set that intersects all $2$-edge-cuts. 
Observe that any inclusion-wise minimal $2$-edge-cut is either 
two edges on the cycle $C$, or 
two edges in $G_i$ for some~$i$.
Hence, we can solve our problem by solving $(i)$ the subproblem defined by $2$-edge-cuts on the cycle~$C$ and $(ii)$ the subproblems defined by $2$-edge-cuts in each component $G_i$ separately. 

The subproblem $(i)$ 
is a 
Steiner Forest problem on a cycle. 
This follows from the observation that any feasible solution must contain a path consisting of only protected edges between each terminal pair. 
We can solve the min-cost Steiner Forest problem on a cycle by enumerating which cycle-edge is not in the optimum solution and breaking the cycle into a path.
On a path, the solution is the union of the unique paths between the terminal pairs. 
Then, we recursively solve the subproblems $(ii)$ in each $G_i$. 
However, we cannot simply recurse on $G_i$ since a $2$-edge-cut of $G_i$ may not be a $2$-edge-cut of $G$. Instead, we recourse on a new graph obtained by adding a zero-cost edge $e_i$ to $G_i$. 
This edge $e_i$ connects the two vertices that are incident to the edges of $C$, which represents the connection in $G_i$ between these vertices via the cycle $C$.
We formalize this idea in the following lemma (See \Cref{fig:decompositionb}).

\begin{restatable}{lemma}{cycleDecomposition}\label{lemma: decomposition}
    Given a decomposition as in \Cref{decomposition}, an optimum solution to \steinercshort{1}{2} can be obtained by combining optimum solutions of the following subproblems:
    \begin{itemize}
        \item[(i)] protect a minimum-cost edge set that intersects with any terminal-separating $2$-edge-cut on the cycle, and
        \item[(ii)] for each $G_i$, let $u_i,v_i \in V(G_i)$ be the two vertices incident to the two edges in the cycle. Solve the problem on $G_i' = G_i \cup \{(u_i,v_i)\}$ with $c_{(u_i,v_i)} = 0$. 
        Keep the terminal pairs with both terminals in $G_i$. For terminal pairs $(s_i,t_i)$ with $s_i \in G_i, t_i \notin G_i$, replace it with $(s_i, u_i), (s_i,v_i)$.
    \end{itemize}
\end{restatable}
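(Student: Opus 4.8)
The plan is to show a two-way correspondence between feasible solutions of the original \steinercshort{1}{2} instance and pairs of feasible solutions to subproblems (i) and (ii), with matching costs. The first step is to recall from the preceding discussion that the problem is equivalent to finding a minimum-cost edge set $X$ that intersects every terminal-separating $2$-edge-cut of $G$ (bridges having already been removed by the reduction). The key structural fact I would establish is that every \emph{inclusion-wise minimal} terminal-separating $2$-edge-cut of $G$ is either a pair of edges both lying on the cycle $C = G/\bigcup_i G_i$, or a pair of edges both lying inside a single component $G_i$. Indeed, a minimal $2$-edge-cut $\{f,g\}$ of $G$ partitions $V$ into two sides; if $f$ and $g$ lay in two different "blocks" of the decomposition (two different $G_i$'s, or one in a $G_i$ and one on the cycle), then removing them from the cyclic structure $G/\bigcup_i G_i$ would not disconnect it, so $G \setminus \{f,g\}$ stays connected — contradiction. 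This dichotomy lets me split the covering constraints into "cycle constraints" and "per-component constraints," which are on disjoint edge sets, so a minimum-cost cover decomposes as the union of a minimum-cost cover for each part.

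**Handling subproblem (ii).** For a fixed $G_i$, I must argue that the terminal-separating $2$-edge-cuts of $G$ internal to $G_i$ are exactly the terminal-separating $2$-edge-cuts of the augmented graph $G_i' = G_i \cup \{(u_i,v_i)\}$ under the modified terminal pairs. A $2$-edge-cut $\{f,g\} \subseteq E(G_i)$ disconnects $G$ into two sides; the side not containing the rest of $G$ is some $S \subsetneq V(G_i)$ with $\delta_{G_i}(S) = \{f,g\}$, and crucially $u_i$ and $v_i$ lie on the \emph{same} side of this cut (both on the "outside," since they connect to $C$ which stays connected to the $G\setminus G_i$ part via the two cycle edges — one must check both cycle edges attach on the same side, which follows because $\{f,g\}$ is a cut of $G_i$ alone, not of $G$). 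Hence adding the zero-cost edge $(u_i,v_i)$ does not cross $\{f,g\}$, so $\{f,g\}$ remains a $2$-edge-cut of $G_i'$; conversely any $2$-edge-cut of $G_i'$ avoiding $(u_i,v_i)$ is a $2$-edge-cut of $G_i$, and since $(u_i,v_i)$ has cost $0$ we never need to protect it. The terminal rewriting $(s_i,t_i) \mapsto (s_i,u_i),(s_i,v_i)$ for pairs split by the cut is justified because, from $s_i$'s perspective inside $G_i$, reaching $t_i \notin G_i$ forces reaching both $u_i$ and $v_i$ through the relevant cuts (the two cycle edges route $t_i$ to whichever of $u_i,v_i$ is on $t_i$'s side), so a cut internal to $G_i$ separates $s_i$ from $t_i$ in $G$ iff it separates $s_i$ from $\{u_i,v_i\}$ in $G_i'$. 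I would verify both directions of this equivalence carefully, as it is the crux.

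**Handling subproblem (i) and assembling.** For subproblem (i): the $2$-edge-cuts of $G$ consisting of two cycle edges are precisely pairs $\{e,e'\}$ of cycle edges, and such a cut is terminal-separating iff the corresponding bipartition of the contracted vertices $v_1,\dots,v_k$ separates some (rewritten) terminal pair. As observed in the text, covering all such cuts is equivalent to requiring a protected path between each terminal pair along the cycle, i.e.\ a Steiner Forest on the cycle, solvable in polynomial time by deleting one cycle edge and taking unions of path segments. Finally, I would assemble: given optimal solutions $X_{\mathrm{cyc}}$ to (i) and $X_i$ to (ii) for each $i$, the set $X_{\mathrm{cyc}} \cup \bigcup_i (X_i \setminus \{(u_i,v_i)\})$ is feasible for the original instance (it hits every minimal $2$-edge-cut by the dichotomy) with cost equal to the sum of subproblem optima; conversely any feasible $X$ for the original instance restricts to feasible solutions of each subproblem with total cost at most $c(X)$. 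Combining the two inequalities gives optimality. The main obstacle I anticipate is the careful bookkeeping in subproblem (ii) — specifically verifying that $u_i$ and $v_i$ always end up on the same side of an internal $2$-edge-cut and that the terminal rewriting exactly captures cross-component separation — everything else is a routine decomposition argument once the $2$-edge-cut dichotomy is in hand.
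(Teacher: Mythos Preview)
Your proposal is correct and follows essentially the same two-direction argument as the paper: show that a feasible solution for $G$ restricts to feasible solutions for each subproblem, and conversely that combining subproblem solutions (dropping the zero-cost pseudo-edges) yields a feasible solution for $G$, with matching costs. The paper's treatment of subproblem~(ii) is slightly slicker---it simply observes that no $2$-edge-cut of $G_i'$ can contain the pseudo-edge $(u_i,v_i)$ because $G_i$ is $2$-edge-connected, which immediately gives that any such cut is already a $2$-edge-cut of $G$---whereas you reach the equivalent conclusion by arguing that $u_i$ and $v_i$ land on the same side of any internal cut; these are two phrasings of the same fact.
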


\begin{proof}
    We first show that given any feasible solution $X$ of $G$, the corresponding edges of $X$ on each subproblem is a feasible solution for the subproblem. 
    For the cycle subproblem (ii) this is trivial. 
    For any subproblem $G_i' = G_i \cup \{(u_i, v_i)\}$, we show $X \cap G_i \cup \{(u_i, v_i)\}$ is a feasible solution. 
    Observe that any $2$-edge-cut $C$ in $G'$ cannot contain the edge $\{(u_i, v_i)\}$ since $G_i$ is $2$-edge-connected. 
    Thus, $C$ must also be a $2$-edge-cut in $G$. 
    If $C$ separates some terminal pair in $G_i'$, so does it in $G$, which implies $C \cap X \neq \emptyset$. Therefore, each terminal-separating $2$-edge-cut of $G_i'$ is safe.

    Given feasible solutions of the subproblems, we show how to obtain a feasible solution of $G$ without increasing the cost. 
    Let $X$ be the edges protected in the subproblems except the new edges $(u_i,v_i)$. 
    Thus, the cost of $X$ is at most the sum of the cost of the solutions to the individual subproblems. 
    It remains to argue that $X$ is feasible for $G$. 
    Let $C$ be any terminal-separating $2$-edge-cut of $G$. 
    If $C$ is on the cycle, by the feasibility of the subproblem on the cycle, $C \cap X \neq \emptyset$. 
    If $C$ is in $G_i$ for some $i$, $C$ must also be a terminal-separating $2$-edge-cut of $G_i'$. 
    Thus $C \cap X \neq \emptyset$. 
    We conclude that $X$ is feasible for~$G$. 
\end{proof}

%

\medskip 
{\bf Algorithm 2.\ } We first protect terminal-separating bridges, contract them, and consider the $2$-edge-connected components separated by non-terminal-separating bridges individually. 
This reduces to the case that $G$ is $2$-edge-connected.
Then, as long as we find a terminal-separating $2$-edge-cut (which is the only type of critical cut), we decompose the problem into a subproblem on a cycle and a collection of subproblems in smaller $2$-edge-connected components. 
Then we recursively solve the individual subproblems.
The decomposition stops either if $G$ is $3$-edge-connected (and hence we are done as there is no critical cut) or each component on the cycle consists of a single vertex, i.e., if $G$ is a cycle.
The cycle case is solved by enumerating which edge of the cycle is \emph{not} contained in an optimum solution and then solving a Steiner Forest problem on a path where the optimal solution is trivial. Among all such solutions, we output the one with minimum cost.

\begin{theorem}
\label{thm:poly-exact-1-2}
    Algorithm 2 is a polynomial-time exact algorithm for \steinerc{1}{2}.  
\end{theorem}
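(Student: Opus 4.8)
The plan is to prove correctness by induction on $|V(G)|$, using the reduction to $2$-edge-connected graphs together with the structural \Cref{decomposition} and \Cref{lemma: decomposition}, and then to bound the running time by showing the recursion tree has polynomial size.

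First I would verify the preprocessing. By \Cref{prop:cut formulation} a terminal-separating bridge $e$ forms a singleton critical cut and hence lies in every feasible solution, so paying $c(e)$ and recursing on $G/e$ is without loss; once all terminal-separating bridges have been contracted, every remaining bridge lies in no critical cut --- a size-$2$ critical cut through such a bridge would force one of its two edges to itself be a terminal-separating bridge --- so each remaining bridge can be deleted and the resulting $2$-edge-connected blocks solved independently (a terminal pair split across two blocks certifies infeasibility, detected here). Hence we may assume $G$ is $2$-edge-connected. There are then two base cases. If $G$ has no terminal-separating $2$-edge-cut then, since $G$ is bridgeless and $p+q-1=2$, it has no critical cut at all, so $\emptyset$ is optimal by \Cref{prop:cut formulation}; this covers $3$-edge-connected $G$. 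If $G$ is a single cycle, I would show the enumeration routine is correct: on a cycle a pair $(s,t)$ survives the deletion of any two edges exactly when one of its two $s$--$t$ arcs is entirely protected, i.e.\ when the protected edges contain an $s$--$t$ path; so, guessing the edge $e^{*}$ absent from some optimal solution, cutting the cycle at $e^{*}$ and protecting the union of the now-unique terminal-pair paths gives a feasible solution of cost at most the optimum, and there are only $O(|V(G)|)$ guesses.

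In the remaining case $G$ is $2$-edge-connected, not a cycle, and has a terminal-separating $2$-edge-cut $\{e_1,e_2\}$, which we find in polynomial time. Applying \Cref{decomposition} yields $2$-edge-connected pieces $G_1,\dots,G_k$ glued along a cycle, and \Cref{lemma: decomposition} reduces the instance to the cycle subproblem~(i) and the subproblems~(ii) on $G_i'=G_i+(u_i,v_i)$, whose combined optimum equals $\opt(G)$. The cycle subproblem~(i) is exactly \steinercshort{1}{2} on the contracted cycle $G/\bigcup_i G_i$ with the lifted terminals --- hence a base case --- and it has $k<|V(G)|$ vertices because $G$ is not a cycle, while each $G_i'$ has strictly fewer vertices than $G$ because $k\ge 2$. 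Thus every recursive call is on a strictly smaller graph, the induction hypothesis applies, and combining the returned optimal solutions via \Cref{lemma: decomposition} yields an optimal solution for $G$, closing the induction.

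For the running time I would bound the recursion tree. The cycle subproblems are solved directly and never recurse, so the only recursive children of an internal node $G$ are $G_1',\dots,G_k'$, whose vertex sets partition $V(G)$; hence at any fixed depth the internal nodes carry pairwise disjoint vertex subsets of $V(G)$, there are at most $n:=|V(G)|$ of them, and since the vertex count strictly decreases the depth is at most $n$. So the tree has $O(n^{2})$ nodes, every subgraph that appears has at most $n$ vertices and polynomially many edges (each decomposition adds at most one zero-cost pseudo-edge per piece), and each node only runs edge-connectivity tests, minimum-cut computations, the decomposition of \Cref{decomposition}, and $O(n)$ path-unions in a cycle base case --- all polynomial. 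The main obstacle is not any single computation but making the induction watertight end to end: one must check that the preprocessing, the two structural lemmas, and the cycle base case compose so that every critical ($2$-edge-)cut of $G$ is charged either to the contracted cycle or to exactly one $G_i'$, while making sure the pseudo-edges introduced by the decomposition cannot make the recursion explode --- the strict drop in vertex count per recursive call is the single fact that secures both.
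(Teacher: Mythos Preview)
Your proposal is correct and follows the same line as the paper. The paper does not spell out a formal proof of this theorem; it only describes Algorithm~2 and relies on \Cref{decomposition} and \Cref{lemma: decomposition}, so what you wrote is a faithful formalization of that implicit argument, including the explicit $O(n^{2})$ bound on the recursion tree that the paper omits. (One harmless slip: your parenthetical about a split terminal pair ``certifying infeasibility'' after deleting a non-terminal-separating bridge is vacuous---by definition no pair is split there, and for $(1,2)$-SCP every connected instance is feasible anyway.)
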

 
We remark that Bienstock and Diaz~\cite{DBLP:journals/siamcomp/BienstockD93} studied a special case of \steinercshort{1}{q}.
They showed that it is \NP-hard when $q=8$ and they conjectured the \NP-hardness for $q= 3$.

Interestingly, \globalc{1}{2} admits an easier algorithm. 
We view the problem as finding a minimum-cost edge set hitting all $2$-edge-cuts, which reduces to a special case of Minimum Weighted Vertex Cover. 
Therein, each edge $e$ of $G$ corresponds to a vertex $v_e$ in the Vertex Cover instance $G'$ and there is an edge between two vertices $v_e$ and $v_{e'}$ if and only if $\{e, e'\}$ forms a $2$-edge-cut in $G$.
The Vertex Cover instance $G'$ has a special structure with each connected component being a complete graph. 
To see this observe that, if both $\{e_1, e_2\}$ and $\{e_1, e_3\}$ are $2$-edge-cuts, then $\{e_2, e_3\}$ is also a $2$-edge-cut~(\cite[Lemma~2.37]{DBLP:books/cu/NI2008}). 
The optimal vertex cover solution in a complete graph is trivial: select all vertices except the largest-weighted one. 
We conclude with the following result.
\begin{lemma} 
The greedy algorithm that selects from each $2$-edge-cut the cheaper edge solves \globalc{1}{2}.
\end{lemma}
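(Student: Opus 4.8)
The plan is to verify that the greedy algorithm returns a minimum-cost edge set hitting every $2$-edge-cut of $G$, which by \Cref{prop:cut formulation} (specialized to $p=1$, $q=2$, all pairs terminal) is exactly an optimal solution to \globalc{1}{2}. Recall from the discussion preceding the statement that we may assume $G$ is $2$-edge-connected (contract terminal-separating bridges, split off on the others), so the critical cuts are precisely the $2$-edge-cuts, and it suffices to protect one edge from each. First I would set up the auxiliary graph $G'$ whose vertices are the edges of $G$, with $v_e v_{e'}\in E(G')$ iff $\{e,e'\}$ is a $2$-edge-cut of $G$; a set $X\subseteq E(G)$ hits every $2$-edge-cut iff the corresponding vertex set is a vertex cover of $G'$, and costs are preserved, so \globalc{1}{2} is exactly Minimum Weighted Vertex Cover on $G'$.

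Next I would establish the structural claim that every connected component of $G'$ is a complete graph. This is the transitivity statement already quoted: if $\{e_1,e_2\}$ and $\{e_1,e_3\}$ are both $2$-edge-cuts of a $2$-edge-connected graph, then so is $\{e_2,e_3\}$, which one gets from \cite[Lemma~2.37]{DBLP:books/cu/NI2008}. Thus adjacency in $G'$ is an equivalence relation on each component, so each component is a clique (isolated vertices — edges lying in no $2$-edge-cut — are size-$1$ cliques and are simply not protected).

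Then I would solve Weighted Vertex Cover on a disjoint union of cliques. On a single clique $K$ a vertex cover must omit at most one vertex (any two omitted vertices leave an uncovered edge), and omitting the vertex of largest weight is both feasible (if $|K|\ge 2$) and optimal; if $|K|=1$ the empty cover is optimal. Summing over components, the optimum picks, from each clique, all vertices but the heaviest. Translating back through the correspondence $v_e\leftrightarrow e$: each clique of $G'$ corresponds to a maximal family of edges pairwise forming $2$-edge-cuts, and for a clique of size $2$ — the generic case of a single $2$-edge-cut $\{e,e'\}$ — this says protect the cheaper of $e,e'$; for larger cliques the same ``drop the most expensive, keep the rest'' rule applies. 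This is exactly what the greedy algorithm does when processing cuts one by one, since once $\{e_1,e_2\}$ and $\{e_1,e_3\}$ have both been seen the edge $e_1$ is never the unique cheapest in its clique unless it genuinely is the global cheapest there, so the greedy and the clique-optimum coincide; hence the algorithm is optimal.

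\textbf{Main obstacle.} The only real content is the clique structure of $G'$, i.e.\ the transitivity of the ``$2$-edge-cut'' relation; everything after that is the textbook vertex-cover-on-cliques argument. Since that transitivity is cited verbatim from \cite{DBLP:books/cu/NI2008}, the proof is short, and the one point needing a little care is making the informal greedy (``from each $2$-edge-cut take the cheaper edge'') agree with the global clique-optimum — this is handled by observing that the union of the ``drop the heaviest in each clique'' choices is precisely the set produced by any consistent cut-by-cut greedy, because within a clique the non-dropped edges form exactly the edges that are the cheaper endpoint of some $2$-edge-cut in that clique.
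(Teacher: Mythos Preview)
Your proposal is correct and follows essentially the same approach as the paper: the paper's ``proof'' is really the paragraph immediately preceding the lemma, which reduces to Weighted Vertex Cover on $G'$, invokes the transitivity of the $2$-edge-cut relation \cite[Lemma~2.37]{DBLP:books/cu/NI2008} to conclude that each component of $G'$ is a clique, and observes that the optimum on a clique drops the heaviest vertex. You spell out the one step the paper leaves implicit---that taking the cheaper edge from every $2$-edge-cut yields precisely ``all but the heaviest'' in each clique---which is a useful addition.
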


\subsection{\globalc{p}{q} when $p,q \leq 2$}
We now present a polynomial-time algorithm for \globalcshort{2}{2}.
Note that for all other $p, q \leq 2$, our previous results imply a polynomial-time algorithm for \globalcshort{p}{q}.
We outline our algorithm as follows.
By \Cref{prop:cut formulation}, we can assume the input graph $G$ to be $2$-edge-connected, as otherwise, the instance is infeasible. 
Further, a feasible solution contains at least two edges in each $2$-edge-cut and in each $3$-edge-cut.
We first show that if there is some $2$-edge-cut, it is equivalent to solving two smaller independent instances (\Cref{lem:global22:3-connected}). 
Hence, we can assume that the input graph $G$ is $3$-edge-connected.
Then we represent all the $3$-edge-cuts using a standard tree representation~\cite{dinitz1973structure,DBLP:conf/soda/HeHS24a}
and it remains to solve a weighted multi-commodity flow problem on the tree (introduced formally later, \Cref{lem:global22:reduction-multi-commodity-flow}).
Finally, we solve the weighted multi-commodity flow problem via dynamic programming (\Cref{lem:global22:dp}).

Suppose $G$ is not $3$-edge-connected and there is some $2$-edge-cut $\{e_1, e_2\}$, i.e., $G\setminus \{e_1, e_2\}$ consists of $2$ connected components $G_1, G_2$. 
Let $e_1 = (u_1, u_2)$ with $u_1 \in G_1$ and $u_2 \in G_2$, $e_2 = (v_1, v_2)$ with $v_1 \in G_1$ and $v_2 \in G_2$ (see \Cref{fig:lem:global22:3-connected}). 
We create two new instances: $I_1$ on $G_1 \cup \{(u_1,v_1)\}$ where $(u_1,v_1)$ is an edge with zero cost and $I_2$ on $G_2 \cup \{(u_2,v_2)\}$, where $\{(u_2,v_2)\}$ has zero cost.
We show that it suffices to solve $I_1, I_2$ independently and combine their solutions to get a solution to the original instance.

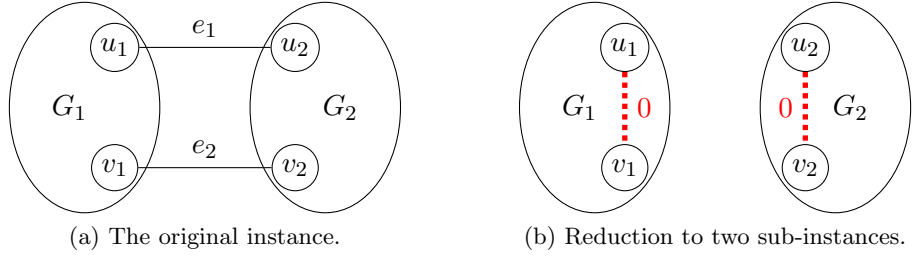
\begin{figure}[t]\label{fig: split}
\centering
\begin{subfloat}[The original instance.] {
\centering
    \begin{tikzpicture}[scale=0.4]
        \draw (-4,0) ellipse (2.5cm and 3.5cm);
        \node at (-4.5,0) {$G_1$};
        \draw (4,0) ellipse (2.5cm and 3.5cm);
        \node at (4.5,0) {$G_2$};
        
        \node[draw=black,circle,minimum size = 0.5cm,inner sep=2pt] (u1) at (-3,2) {$u_1$};
        \node[draw=black,circle,minimum size = 0.5cm,inner sep=2pt] (v1) at (-3,-2) {$v_1$};
        
        \node[draw=black,circle,minimum size = 0.5cm,inner sep=2pt] (u2) at (3,2) {$u_2$};
        \node[draw=black,circle,minimum size = 0.5cm,inner sep=2pt] (v2) at (3,-2) {$v_2$};
        
        \draw[black]
        (u1) -- node[above] { $e_1$} (u2)
        (v1) -- node[above] { $e_2$} (v2);
    \end{tikzpicture}
}
\end{subfloat}
\hspace{30pt}       
\begin{subfloat}[Reduction to two sub-instances.] {
\centering
    \begin{tikzpicture}[scale=0.4]
        \draw (-4,0) ellipse (2.5cm and 3.5cm);
        \node at (-4.5,0) {$G_1$};
        \draw (4,0) ellipse (2.5cm and 3.5cm);
        \node at (4.5,0) {$G_2$};
        
        \node[draw=black,circle,minimum size = 0.5cm,inner sep=2pt] (u1) at (-3,2) {$u_1$};
        \node[draw=black,circle,minimum size = 0.5cm,inner sep=2pt] (v1) at (-3,-2) {$v_1$};
        
        \node[draw=black,circle,minimum size = 0.5cm,inner sep=2pt] (u2) at (3,2) {$u_2$};
        \node[draw=black,circle,minimum size = 0.5cm,inner sep=2pt] (v2) at (3,-2) {$v_2$};
        
        \draw[line width=2pt,red,dotted]
        (u1) -- node[right] {$0$ } (v1)
        (u2) -- node[left] { $0$} (v2);
    \end{tikzpicture}
}       
\end{subfloat}
\caption{Illustration of \Cref{lem:global22:3-connected}: 
Equivalence to solving two new instances on $G_1 \cup \{(u_1,v_1)\}$ where $(u_1,v_1)$ is an edge with zero cost and $G_2 \cup \{(u_2,v_2)\}$ where $\{(u_2,v_2)\}$ has zero cost.}
\label{fig:lem:global22:3-connected}

\end{figure}
\begin{restatable}{lemma}{lemglobaltwotwothreeconnected}
\label{lem:global22:3-connected}
    $\opt(I) = c(e_1) + c(e_2)+\opt(I_1)+\opt(I_2)$.
\end{restatable}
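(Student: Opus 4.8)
The plan is to prove the identity by two inequalities, exhibiting feasible solutions in both directions and showing that costs transfer without loss. For the inequality $\opt(I) \le c(e_1)+c(e_2)+\opt(I_1)+\opt(I_2)$, I would take optimal solutions $X_1$ for $I_1$ and $X_2$ for $I_2$, discard the zero-cost pseudo-edges $(u_1,v_1)$ and $(u_2,v_2)$ if they were selected, and set $X := (X_1 \setminus \{(u_1,v_1)\}) \cup (X_2 \setminus \{(u_2,v_2)\}) \cup \{e_1,e_2\}$; its cost is exactly $c(e_1)+c(e_2)+\opt(I_1)+\opt(I_2)$ since the pseudo-edges are free. I then need to check $X$ is feasible for $I$, i.e.\ every $2$-edge-cut and every $3$-edge-cut of $G$ contains at least two edges of $X$. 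Any small cut $\delta_G(S)$ of size at most $3$ either (a) equals $\{e_1,e_2\}$, which is covered twice by construction; or (b) is entirely contained in $G_1$ together with possibly one of $e_1,e_2$; or (c) symmetrically in $G_2$. In case (b), say $\delta_G(S) \cap E(G_2) = \emptyset$: then $S$ (or its complement) induces a cut in $G_1 \cup \{(u_1,v_1)\}$, where the pseudo-edge $(u_1,v_1)$ plays the role of the ``rest of $G$ through $G_2$'' — one must verify that $\delta_G(S)$ corresponds to a cut of $I_1$ of the \emph{same size or smaller}, so that the $(2,2)$-feasibility of $X_1$ forces at least two $X_1$-edges in it (and if $(u_1,v_1)$ is one of them, $e_1$ or $e_2 \in X$ substitutes for it). The key combinatorial point to nail down here is the correspondence between cuts of $G$ avoiding all of $G_2$'s internal edges and cuts of $G_i'$, and checking that edges $e_1,e_2$ map correctly onto the pseudo-edge.

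For the reverse inequality $\opt(I) \ge c(e_1)+c(e_2)+\opt(I_1)+\opt(I_2)$, I would take an optimal solution $X$ for $I$. First observe $\{e_1,e_2\} \subseteq X$: since $\{e_1,e_2\}$ is itself a $2$-edge-cut of $G$ and every $2$-edge-cut must be covered \emph{twice} by a feasible $(2,2)$ solution, both $e_1$ and $e_2$ are protected. Now define $X_1 := (X \cap E(G_1)) \cup \{(u_1,v_1)\}$ and $X_2 := (X \cap E(G_2)) \cup \{(u_2,v_2)\}$. Their combined cost is $c(X) - c(e_1) - c(e_2)$ (the pseudo-edges are free, and $E(G) = E(G_1) \,\dot\cup\, E(G_2) \,\dot\cup\, \{e_1,e_2\}$). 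It remains to show $X_i$ is feasible for $I_i$. Take a small cut in $G_1 \cup \{(u_1,v_1)\}$; it corresponds to a set $S \subseteq V(G_1)$ with $|\delta_{G_1'}(S)| \le 3$. Lift $S$ to a cut of $G$: either $\delta_G(S) = \delta_{G_1}(S) \cup \{e_1\}$, $\delta_{G_1}(S) \cup \{e_2\}$, $\delta_{G_1}(S) \cup \{e_1,e_2\}$, or $\delta_{G_1}(S)$, depending on which side of $e_1,e_2$ the vertices $u_1,v_1$ fall — and in each case $|\delta_G(S)| \le |\delta_{G_1'}(S)| + 1 \le 4$. That is not automatically a ``small cut'' of $G$ in the sense of having $\le 3$ edges, so I need the sharper bookkeeping: the pseudo-edge $(u_1,v_1)$ is in $\delta_{G_1'}(S)$ exactly when $u_1,v_1$ lie on opposite sides, i.e.\ exactly when $\delta_G(S)$ picks up \emph{one} of $e_1,e_2$ (not zero, not two); so $|\delta_G(S)| = |\delta_{G_1'}(S)|$ in that case, and when the pseudo-edge is absent, $\delta_G(S)$ picks up either both of $e_1,e_2$ or neither, giving $|\delta_G(S)| \le |\delta_{G_1'}(S)|+1$; but if it picks up both, those two $X$-edges already certify the cut, and the pseudo-edge in $X_1$ certifies it on the $I_1$ side. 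Careful case analysis then shows the $(2,2)$-coverage of every small cut of $G_1'$ follows from that of $G$, using the pseudo-edge in $X_1$ as a stand-in for $e_1,e_2$ whenever needed.

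The main obstacle I anticipate is precisely this cut-correspondence bookkeeping: making the ``a pseudo-edge of cost $0$ simulates the connection through the other side'' intuition rigorous, in particular handling the boundary cases where a cut of $G$ contains exactly one, or exactly two, of $e_1$ and $e_2$, and verifying that in the ``exactly two'' case the two protected edges $e_1,e_2 \in X$ are not ``wasted'' — they simultaneously certify the cut $\{e_1,e_2\}$ for $I$ and, via the pseudo-edges, the corresponding cuts for both $I_1$ and $I_2$. A clean way to organize this is to first prove a small structural claim: for $S \subseteq V(G_1)$, the multiset $\delta_{G_1'}(S)$ and the multiset $\delta_G(S)$ agree outside $\{e_1,e_2,(u_1,v_1)\}$, and $|\delta_{G_1'}(S) \cap \{(u_1,v_1)\}| \equiv |\delta_G(S) \cap \{e_1,e_2\}| \pmod 2$ with the pseudo-edge present iff exactly one of $e_1,e_2$ is cut. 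With this claim in hand, both feasibility directions reduce to checking ``$\ge 2$ coverage'' cut by cut, and the equality of optima follows by combining the two inequalities. Since $X$ and $X_1,X_2$ are constructed to have exactly matching costs up to the additive $c(e_1)+c(e_2)$, no rounding or slack is lost, giving the stated exact identity.
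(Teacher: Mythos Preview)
Your overall architecture matches the paper's proof exactly: two inequalities, each established by transporting a feasible solution and checking coverage of small cuts via the correspondence between $\delta_{G_1'}(S)$ and $\delta_G(S)$. Your observation that $e_1,e_2 \in X$ is forced (since $\{e_1,e_2\}$ is itself a $2$-cut) is correct and is what the paper uses implicitly when writing $c(X)=c(e_1)+c(e_2)+c(X_1)+c(X_2)$. However, two steps in your plan would fail as written.

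\textbf{(1) The case split for $\opt(I)\le\cdots$ is not exhaustive.} Your trichotomy ``(a) $\{e_1,e_2\}$, (b) contained in $E(G_1)\cup\{e_1,e_2\}$, (c) contained in $E(G_2)\cup\{e_1,e_2\}$'' misses cuts with internal edges in \emph{both} $G_1$ and $G_2$. Such cuts exist: take $G$ a $4$-cycle $u_1\,v_1\,v_2\,u_2$ with $e_1=(u_1,u_2)$, $e_2=(v_1,v_2)$; then $\{(u_1,v_1),(u_2,v_2)\}$ is a $2$-cut with one edge in each $G_i$. The paper handles this separately (its Case~2): when $|C\cap G_1|=1$, one shows that $(C\cap G_1)+(u_1,v_1)$ and $(C\cap G_2)+(u_2,v_2)$ are critical cuts of $G_1'$ and $G_2'$ respectively, and feasibility of $X_1,X_2$ then forces a protected edge in each half.

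\textbf{(2) The ``pseudo-edge absent, both $e_1,e_2$ picked up'' sub-case in the $\ge$ direction is mishandled.} When $u_1,v_1\in S$, the pseudo-edge is \emph{not} in $\delta_{G_1'}(S)$, and your bound $|\delta_G(S)|\le|\delta_{G_1'}(S)|+1$ is off by one (it is $+2$). More importantly, your sentence ``the pseudo-edge in $X_1$ certifies it on the $I_1$ side'' is false: the pseudo-edge lies in $X_1$ but not in the cut $\delta_{G_1'}(S)$, so it certifies nothing. The correct fix, which the paper uses implicitly, is to replace $S$ by $S'=V(G_1)\setminus S$; then $u_1,v_1\notin S'$, the cut $\delta_{G_1'}(S')=\delta_{G_1'}(S)$ is unchanged, and now $\delta_G(S')=\delta_{G_1'}(S)$ exactly (neither $e_1$ nor $e_2$ crosses), which \emph{is} critical in $G$ and hence twice-covered by $X\cap E(G_1)\subseteq X_1$.
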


\begin{proof}
    Given a feasible solution $X$ of $I$, we show that $X_1 = X \cap E(G_1) + (u_1,v_1)$ and $X_2 = X \cap E(G_2) + (u_2,v_2)$ are feasible solutions for $I_1$ and $I_2$, respectively, implying $c(X) = c(e_1) + c(e_2)+c(X_1)+c(X_2)$ and $\opt(I) \geq c(e_1)+c(e_2)+\opt(I_1)+\opt(I_2)$. 
    We show feasibility for $X_1$; the feasibility for $X_2$ is analogous.
    It suffices to show that for any critical cut in $G_1 + (u_1,v_1)$, at least $2$ edges are protected.
    Consider any critical cut $C$ of $G_1 + (u_1,v_1)$. 
    If $C$ does not contain the new edge $(u_1,v_1)$, then it is also a critical cut of $G$ and therefore this cut is safe. 
    Hence, we assume that $C$ contains the new edge $(u_1,v_1)$ and let $C= \{f_1, f_2, (u_1,v_1) \}$.
    Note that $f_2$ might not exist if $|C|=2$.
    We show that $\{ f_1, f_2, e_1\}$ is a critical cut in $G$. 
    Consider $G'_1 = (G_1 \setminus \{ f_1, f_2 \}) + (u_1,v_1)$ and observe that $(u_1,v_1)$ is a bridge in $G'_1$, as $C$ is a cut in $G_1 + (u_1,v_1)$.
    Hence, $u_1$ and $v_1$ are only connected in $G_1'$ via the edge $(u_1,v_1)$.
    This means that also in $G \setminus \{f_1, f_2 \}$, any path connecting $u_1$ and $v_1$ must use $e_1$.
    Hence, $e_1$ is a bridge in $G \setminus \{f_1, f_2 \}$ and therefore $\{ f_1, f_2, e_1\}$ is a critical cut in $G$.
    Thus $C \setminus (u_1,v_1) + e_1$ contains at least $2$ protected edges, which implies $C \setminus (u_1,v_1)$ contains at least $1$ protected edge. 
    Since $(u_1,v_1)$ is protected, $C$ has at least $2$ protected edges and is safe.

    On the other hand, given solutions $X_1$ and $X_2$ of $I_1$ and $I_2$, respectively, we show that $X = X_1 \cup X_2 + e_1 + e_2$ is feasible for $I$, implying $c(X) = c(e_1) + c(e_2)+c(X_1)+c(X_2)$ and $\opt(I) \leq c(e_1)+c(e_2)+\opt(I_1)+\opt(I_2)$.
    Let $C$ be any critical cut of $G$. 
    Without loss of generality, we assume that $C$ is inclusion-wise minimal, i.e., it does not contain any smaller critical cut. 
    We distinguish the following three cases. 
    First, assume that $C \subseteq G_1$. 
    Then, $C$ must also be an edge cut of $G_1 \cup (u_1,u_2)$ and therefore $C$ is safe. 
    The case $C \subseteq G_2$ is analogous.
    For the second case, we assume that the first case does not apply and further assume that $C \cap \{e_1, e_2\} = \emptyset$. 
    Hence, either $|C \cap G_1| =1 $ or $|C \cap G_2|=1$. 
    Without loss of generality assume $|C \cap G_1|=1 $. 
    Observe that $(C \cap G_2) +(u_2, v_2)$ is a critical edge cut in $G_2 + (u_2, v_2)$ and $(C \cap G_1) + (u_1, v_1)$ is a critical edge cut in $G_1 + (u_1, v_1)$. 
    Further, by feasibility of $X_1$ and $X_2$, the only edge in $C \cap G_1$ and at least one of the two edges in $C \cap G_2$ must be protected, which implies $C$ contains at least $2$ protected edges and is safe.
    In the third and final case, we assume that none of the previous cases apply and further assume that $C$ contains either $e_1$ or $e_2$. 
    Any cut containing both $e_1$ and $e_2$ is safe, as both are protected in $X$. 
    Without loss of generality assume $e_1 \in C$. 
    We claim that either $C - e_1 \subseteq E(G_1)$ or $C -e_1 \subseteq E(G_2)$. 
    Otherwise, $C$ contains one edge $e_3$ in $G_1$ and one edge $e_4$ in $G_2$. 
    Observe that $\{e_1, e_3\}$ is a $2$-edge-cut of $G$, which contradicts the fact that $C$ is inclusion-wise minimal.
    If $C - e_1 \subseteq E(G_1)$, then similar to the first part of this proof one can show that $C - e_1 + (u_1, v_1)$ is a cut in $G_1 + (u_1, v_1)$. 
    Hence, $|X_1 \cap C| \geq 1$ and therefore, $|X \cap C| \geq 2$, as $e_1 \in X$.
    The case $C -e_1 \subseteq E(G_2)$ is analogous and hence this concludes the proof.
\end{proof}

By repeating the above process, we end up with a $3$-edge-connected graph.
The following tree representation gives us a clear structure about all the $3$-edge-cuts and it can be computed in near-linear time~\cite{DBLP:conf/soda/HeHS24a}.

\begin{definition} (Tree representation of min cuts \cite{dinitz1973structure})
Let $G=(V,E)$ be an undirected graph and suppose the capacity of its minimum cut is an odd number $k$. There is a polynomial-time algorithm that constructs a rooted tree $T=(U,F)$ together with a (not necessarily surjective) mapping $\phi: V \rightarrow U$. Further, there is a one-to-one correspondence between any $k$-edge-cut of $G$ and $f \in F$ as follows. For any $f \in F$, let $T_f$ be the subtree of $T$ beneath~$f$ and let $V(T_f) = \{v \in V \mid \phi(v) \in T_f\}$. Then for any tree edge $f\in F$, $\delta_G(V(T_f))$ defines a $k$-edge-cut of $G$. 
For any $k$-edge-cut $C$ of $G$, there is some tree edge $f \in F$ such that $C = \delta_G(V(T_f))$. 
\end{definition}

Given this tree representation, we show now that our problem \globalc{2}{2} reduces to the weighted multi-commodity flow problem on a tree. This problem is defined 
as follows: given a tree $T$, a set of paths $\mathcal{P}$ on the tree and a weight function $w: \mathcal{P} \rightarrow \mathbb{R}$, find a subset of pairwise edge-disjoint paths with maximum total weight.

\begin{lemma} 
\label{lem:global22:reduction-multi-commodity-flow}
    When the input graph $G$ is $3$-edge-connected, \globalc{2}{2} reduces to the weighted multi-commodity flow problem on a tree.
\end{lemma}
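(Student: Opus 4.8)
The plan is to pass to the complementary selection problem and then feed it into the tree representation of $3$-edge-cuts. First I would dispose of the trivial case: if the minimum cut of $G$ has capacity at least $4$ there is no critical cut by \Cref{prop:cut formulation}, so $X=\emptyset$ is optimal and the claim is vacuous. Hence I may assume the minimum cut has capacity exactly $k=3$, which is odd, so the tree representation quoted above is applicable. Next, by \Cref{prop:cut formulation} together with $3$-edge-connectivity, the critical cuts are exactly the $3$-edge-cuts of $G$, and $X\subseteq E$ is feasible iff $|X\cap C|\ge 2$ for every $3$-edge-cut $C$. Setting $Y:=E\setminus X$, feasibility is equivalent to $|Y\cap C|\le 1$ for every $3$-edge-cut $C$, and $\min c(X)=c(E)-\max c(Y)$; so it suffices to compute a maximum-cost set $Y\subseteq E$ meeting every $3$-edge-cut in at most one edge.

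Then I would apply the tree representation $(T=(U,F),\phi)$. The point is to understand, for a fixed edge $e=(u,v)\in E$, which $3$-edge-cuts contain $e$: since the $3$-edge-cut associated with a tree edge $f\in F$ is $\delta_G(V(T_f))$, we have $e\in\delta_G(V(T_f))$ iff exactly one of $\phi(u),\phi(v)$ lies in $V(T_f)$, i.e.\ iff $f$ lies on the unique $\phi(u)$--$\phi(v)$ path in $T$. Thus, if $\phi(u)=\phi(v)$ then $e$ lies in no $3$-edge-cut (call it \emph{free}); otherwise $e$ corresponds to the tree path $P_e$ from $\phi(u)$ to $\phi(v)$, and the $3$-edge-cuts containing $e$ are in bijection with the edges of $P_e$. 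I would build the multi-commodity flow instance from this: the tree $T$, the path set $\mathcal P=\{P_e : e=(u,v)\in E,\ \phi(u)\ne\phi(v)\}$, and weights $w(P_e)=c(e)$ (parallel edges of $G$ simply give several copies of the same path, which is allowed).

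Finally I would translate the constraint. Let $M:=\sum\{c(e): \phi(u)=\phi(v)\}$; every free edge can be put into $Y$ with no effect on any $3$-edge-cut, so it contributes the constant $M$. Among the non-free edges, a subset $Y'$ meets every $3$-edge-cut in at most one edge iff no tree edge $f$ lies on $P_e$ for two distinct $e\in Y'$, which is precisely the statement that the paths $\{P_e : e\in Y'\}$ are pairwise edge-disjoint in $T$. Hence $\max c(Y)=M+W^\ast$, where $W^\ast$ is the optimum of the weighted multi-commodity flow instance, and therefore $\opt(I)=c(E)-M-W^\ast$, with an optimal $X$ recovered by deleting from $E$ all free edges together with the $G$-edges corresponding to an optimal edge-disjoint path family. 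Since the tree representation is computable in polynomial time and $|\mathcal P|\le|E|$, this is a polynomial-time reduction. I expect the main obstacle (really the only subtle point) to be this last translation step: arguing that the "at most one edge per $3$-edge-cut" condition is \emph{equivalent} to edge-disjointness of the associated tree paths rather than merely implied by it, and making sure the degenerate cases --- parallel $G$-edges mapping to the same tree path, edges with $\phi(u)=\phi(v)$, and minimum cut larger than $3$ --- are all handled cleanly; the conceptual crux underneath is the observation in the middle paragraph that the family of $3$-edge-cuts through a fixed edge forms the edge set of a path in $T$.
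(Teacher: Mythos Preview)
Your proposal is correct and follows essentially the same route as the paper: pass to the complement $Y=E\setminus X$, invoke the tree representation of minimum (i.e.\ $3$-edge-) cuts, observe that the $3$-edge-cuts containing a fixed edge $e=(u,v)$ are exactly the tree edges on the $\phi(u)$--$\phi(v)$ path, and conclude that the ``$\le 1$ per cut'' condition is equivalent to pairwise edge-disjointness of the associated tree paths. Your treatment is in fact a bit more careful than the paper's in explicitly separating out the degenerate cases (minimum cut $\ge 4$, free edges with $\phi(u)=\phi(v)$, parallel edges), but these are cosmetic refinements of the same argument.
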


\begin{proof}
    
A solution $X \subseteq E$ is feasible if and only if it contains at least $2$ edges in each $3$-edge-cut. 
Equivalently, for each $3$-edge-cut at most one edge is unprotected. 
We consider this complement problem in which we want to find a set of maximum-weight edges $\Bar{X}$ such that each $3$-edge-cut contains at most one edge of $\Bar{X}$.
We use the standard tree representation~\cite{dinitz1973structure} of all the $3$-edge-cuts of~$G$, in which each $3$-edge-cut of the original graph is represented by an edge in the tree.
Given a tree representation $T=(U,F)$ and $e=(u,v) \in E$, define $P_e$ as the path on $T$ between $\phi(u)$ and $\phi(v)$ and let the weight of $P_e$ be the cost of $e$.
Observe that every $3$-edge-cut containing $e$ corresponds to a tree edge on $P_e$ and vice versa.
Therefore, a solution $X \subseteq E$ is feasible if and only if the set of paths $\{P_e \mid e \in \Bar{X} = E\setminus X \}$ are pairwise edge-disjoint.
Hence finding the optimal $X$ reduces to finding a set of edge-disjoint paths on $T$, maximizing the total weight, which is the weighted multi-commodity flow problem.
\end{proof}

\begin{remark}
    We can prove that \Cref{lem:global22:reduction-multi-commodity-flow} holds more generally for any even $p$: If $G$ is $(p+1)$-edge-connected, \globalcshort{p}{2} reduces to the weighted multi-commodity flow problem on a tree for any even $p$. However, to solve \globalcshort{p}{2} using this reduction, we need to reduce the problem to the case where $G$ is $(p+1)$-edge-connected, which remains unclear for $p\geq 4$.
\end{remark}

Garg et al.~\cite{DBLP:journals/algorithmica/GargVY97} considered an unweighted version of multi-commodity flow problem on a tree and obtained an exact polynomial-time greedy algorithm.
However, their arguments do not extend to the weighted case.
We design a dynamic program for the weighted version. 
\begin{lemma}
\label{lem:global22:dp}
    The weighted multi-commodity flow problem on a tree can be solved in polynomial time.
\end{lemma}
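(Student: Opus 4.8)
The plan is to root the tree $T=(U,F)$ at an arbitrary vertex $r$ and run a bottom-up dynamic program over subtrees, where the combination step at each vertex reduces to a maximum-weight matching computation on its children. For a path $P\in\mathcal{P}$, let $\mathrm{apex}(P)$ be the vertex of $P$ closest to $r$ (the least common ancestor of its two endpoints), and think of $P$ as being ``decided'' when $\mathrm{apex}(P)$ is processed. Write $T_v$ for the subtree rooted at $v$. For a vertex $v$ and a descendant $w$ of $v$ in $T_v$ (allowing $w=v$), define $R(v,w)$ to be the maximum total weight of a set of pairwise edge-disjoint paths, all with apex in $T_v$, none of which uses any edge on the tree path from $v$ to $w$. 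The value we ultimately want is $R(r,r)$, which equals the optimum since the constraint for $w=v$ is vacuous, and there are only $\bigO(|U|^2)$ states (and in fact we may restrict $w$ to images of path endpoints under $\phi$, giving $\bigO(|U|\cdot|\mathcal{P}|)$).

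For the recursion, let $c_1,\dots,c_d$ be the children of $v$. The key structural fact is that any path $P$ with $\mathrm{apex}(P)=v$ uses either exactly one edge $(v,c_i)$ incident to $v$ (namely when $v$ is an endpoint of $P$) or exactly two such edges $(v,c_i),(v,c_j)$ with $i\neq j$; and since selected paths are edge-disjoint, each edge $(v,c_i)$, hence each subtree $T_{c_i}$, is entered by at most one selected apex-$v$ path. Thus the selected apex-$v$ paths induce a partial matching on $\{c_1,\dots,c_d\}$ together with some singletons (paths having $v$ as an endpoint). Conditioning on these choices and, for each entered child $c_i$, on the vertex $w_i\in T_{c_i}$ where that path ends, the remainder of the solution decomposes over the children: an entered child $c_i$ contributes $R(c_i,w_i)$ and an unentered child contributes $R(c_i,c_i)$. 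Collapsing the choice of the $w_i$'s into weights, this is exactly a maximum-weight matching instance on the children: the weight of pairing $c_i$ with $c_j$ is $\max\{w(P)+R(c_i,w_i)+R(c_j,w_j)-R(c_i,c_i)-R(c_j,c_j)\}$ over $P\in\mathcal{P}$ with $\mathrm{apex}(P)=v$ and endpoints $w_i\in T_{c_i},w_j\in T_{c_j}$, a singleton at $c_i$ is modeled by a node bonus, and the base value is $\sum_i R(c_i,c_i)$; to compute $R(v,w)$ instead of $R(v,v)$ one simply forces the child $c$ on the $v$-$w$ path to be unentered and replaces $R(c,c)$ by $R(c,w)$ in the base. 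All the inner maxima over endpoints are precomputed by a single pass over $\mathcal{P}$, and Edmonds' algorithm solves each matching instance in polynomial time; processing vertices in post-order and iterating over the $\bigO(|U|)$ choices of $w$ at each vertex yields an overall polynomial running time. Degenerate paths (a path whose two endpoints coincide, corresponding in \Cref{lem:global22:reduction-multi-commodity-flow} to an edge lying in no $3$-edge-cut) are handled trivially by always adding their weight.

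Correctness follows by a routine induction: every edge-disjoint family is counted exactly once, each path at the step where its apex is processed, and the decomposition at $v$ is sound because apex-$v$ paths occupy pairwise distinct child edges, so they are mutually edge-disjoint and edge-disjoint from the recursively chosen families, which by definition avoid the relevant $c_i$-$w_i$ paths and never use the edges $(v,c_i)$. I expect the main obstacle to be identifying the right state $R(v,w)$: it does not suffice to record merely whether the edge above $v$ is used, since a path entering from above can descend to any depth and block internal paths at various levels, so one must track precisely how far down it reaches. A secondary point is recognizing that the local combination at a high-degree vertex is a maximum-weight matching problem rather than something resolvable greedily child-by-child; this is exactly where the unweighted greedy algorithm of Garg, Vazirani and Yannakakis~\cite{DBLP:journals/algorithmica/GargVY97} fails to generalize.
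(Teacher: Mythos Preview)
Your proposal is correct and follows essentially the same approach as the paper: root the tree, process subtrees bottom-up, and at each vertex reduce the choice of apex-$v$ paths to a maximum-weight matching on the children. The only difference is in the DP parametrization---your state $R(v,w)$ (forbidding the $v$--$w$ path, giving $\bigO(|U|^2)$ states) versus the paper's $f(T_v)$ and $f(T_e)=f(T_u\setminus T_v)$ (giving $\bigO(|U|)$ states, at the price of making each matching edge-weight an explicit sum of subproblem values along the selected path); both are polynomial and the underlying decomposition is identical.
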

\begin{proof}
    
We root the tree at an arbitrary vertex $r$.
Without loss of generality, we assume there is no path that consists of only one vertex since the selected paths have to be only edge-disjoint.
For any vertex $v$, let $T_v$ be the subtree rooted at vertex~$v$.
For any tree edge $e=(u,v)$ where $u$ is closer to the root, we use $T_e$ to represent the subtree $T_u \setminus T_v$ for short. 
Define the subproblem $\mathcal{I}(T_v)$ in the subtree $T_v$ as follows: From the set of paths completely contained in $T_v$, select a maximum-weight subset of paths that are pairwise edge-disjoint. 
Let $f(T_v)$ be the optimal value of $\mathcal{I}(T_v)$. 
We define $\mathcal{I}(T_e)$ and $f(T_e)$ for each $e \in E(T)$ analogously.
We only show how to compute $f(T_v)$; the computation of $f(T_e)$ is similar.

Fix some vertex $v$ and consider the subproblem $\mathcal{I}(T_v)$.
If $v$ is a leaf, then $f(T_v)=0$. Otherwise, let $z_1, \dots, z_k$ be the children of $v$.
Let $\mathcal{P}_v$ be the set of paths intersecting~$v$.
We say a path $P$ occupies the subtree $T_{z_i}$ if it intersects $T_{z_i}$. 
Our first observation is that each subtree $T_{z_i}$ can be occupied by at most one selected path, as otherwise the edge $(v,z_i)$ is contained in multiple selected paths, which is infeasible.
Since a path in $\mathcal{P}_v$ can occupy either two subtrees $T_{z_i}, T_{z_j}$ for some $1 \leq i < j \leq k$ or only one subtree $T_{z_i}$ for some $i$, we reduce the problem $\mathcal{I}(T_v)$ to an instance $\mathcal{M}(T_v)$ of Maximum Weighted Matching. 
For $\mathcal{M}(T_v)$, we create an auxiliary graph $G(T_v)$ as follows.
For each $i$ with $1 \leq i \leq k$, we create a vertex $u_i$ corresponding to the subtree $T_{z_i}$ and a dummy vertex $u_i'$. 
For each path in $\mathcal{P}_v$, if it occupies $T_{z_i}$ and $T_{z_j}$ for some $i,j$, we create an edge between $u_i$ and $u_j$.
If it only occupies one subtree $T_{z_i}$, we create an edge between $u_i$ and $u_i'$.
We also create an extra edge between $u_i$ and $u_i'$ which represents the case where no selected path occupies $T_{z_i}$.
It is not hard to see that there is a one-to-one correspondence between a feasible choice over $\mathcal{P}_v$ in $\mathcal{I}(T_v)$ and a matching on the auxiliary graph $G(T_v)$. 
It remains to properly set the weights $\omega$ of the edges of $G(T_v)$ such that a maximum-weight matching in $G(T_v)$ corresponds to an optimum solution for $\mathcal{I}(T_v)$.
To do so, we observe that for a given fixed feasible choice over $\mathcal{P}_v$, it remains to solve a collection of subproblems represented by $\mathcal{I}(T_v')$ or $\mathcal{I}(T_e')$ for some $v',e'\in T_v$ and combine their optimal solutions.
Formally, let $P=(v_1, \dots, v_\ell)$ be a path in $\mathcal{P}_v$ where $v = v_m$, $1 \leq m \leq \ell$. 
Assume $P$ occupies two subtrees of $v$, say, $(v_1, \dots, v_{m-1}) \subseteq T_{z_i}$ and $(v_{m+1}, \dots, v_{\ell}) \subseteq T_{z_j}$.
The case $P$ only occupies one subtree of $v$ follows analogously.
Suppose we have selected~$P$. 
Then it is still feasible to select paths completely contained in $T_{z_i}$ and $T_{z_j}$, respectively, as long as they do not intersect $P$.
This implies that the subproblems on $T_{z_i} \setminus E(P)$ and $T_{z_j} \setminus E(P)$ can be decomposed into  $T_{v_1}, T_{(v_1, v_2)}, \dots, T_{(v_{m-2}, v_{m-1})}$ and $T_{v_\ell}, T_{(v_\ell, v_{\ell-1})}, \dots, T_{(v_{m+2}, v_{m+1})}$, respectively.
See Figure~\ref{fig:dp} for an example.
Hence, the gain of selecting $P$ is the sum of the optimal values of these subproblems plus the weight of $P$ itself. 
That is, we set $\omega((u_i, u_j)) \coloneqq w(P) + f(T_{v_1}) + \sum_{k=1}^{m-2} f(T_{(v_k, v_{k+1})}) + f(T_{v_\ell}) + \sum_{k=m+1}^{\ell} f(T_{(v_k, v_{k+1})})$. 
For the extra edge between $u_i$ and $u_i'$, which represents no selected path occupies $T_{z_i}$, we set its weight to $f(T_{z_i})$.
It now easily follows that $f(T_v) = \opt(\mathcal{M}(T_v))$, which can be solved in polynomial time using algorithms for Maximum Weighted Matching~\cite{DBLP:conf/soda/Gabow90}.
\end{proof}
\newcommand{\vertex}[4]{
\node[draw=black,circle,inner sep=1pt,minimum size = 0.6cm] (#3) at (#1,#2) {#4}
}
\begin{figure}[t]
\centering
\begin{tikzpicture}
    \vertex{2}{-0.5}{r}{$r$};
    \vertex{-1}{-1}{v}{$v$};
    \draw[dashed,thick] (r) -- (v);
    
    \vertex{-4.5}{-1.5}{w1}{$z_1$};
    \vertex{-2}{-1.5}{wi}{$z_i$};
    \vertex{0}{-1.5}{wj}{$z_j$};
    \vertex{2.5}{-1.5}{wk}{$z_k$};
    
    \draw[] (v) -- (w1) (v) -- (wk);
    \draw[thick,red] (v) -- (wi) (v) -- (wj);
    \draw[dotted,thick,shorten <= 0.5cm, shorten >= 0.5cm] (w1)--(wi);
    \draw[dotted,thick,shorten <= 0.3cm, shorten >= 0.5cm] (wi)--(wj);
    \draw[dotted,thick,shorten <= 0.3cm, shorten >= 0.5cm] (wj)--(wk);

    \vertex{-2}{-2.5}{u2}{$v_2$};
    \vertex{-2}{-3.5}{u1}{$v_1$};
    \vertex{0}{-2.5}{u6}{$v_6$};
    \vertex{0}{-3.5}{u7}{$v_7$};
    \draw[thick,red] (u1)--(u2)--(wi)
                (u7)--(u6)--(wj);

    \draw[thick] (u1)--(-2.5,-4.8)--(-1.5,-4.8)--(u1);
    \node[] (Tu1) at (-2,-4.5) {$T_{v_1}$ };

    \draw[thick] (u2)--(-5,-4.1)--(-2.6,-4.1)--(u2);
    \node[] (Tu12) at (-3.2,-3.7) {$T_{v_2}\setminus T_{v_1}$ };

    \draw[thick] (wi)--(-6,-2.8)--(-2.8,-2.8)--(wi);
    \node[] () at (-3.4,-2.4) {$T_{z_i}\setminus T_{v_2}$ };

    \draw[thick] (u7)--(-0.5,-4.8)--(0.5,-4.8)--(u7);
    \node[] (Tu7) at (0,-4.5) {$T_{v_7}$ };

    \draw[thick] (u6)--(1,-4.1)--(3.2,-4.1)--(u6);
    \node[] (Tu67) at (1.5,-3.7) {$T_{v_6}\setminus T_{v_7}$ };

    \draw[thick] (wj)--(1,-2.8)--(4,-2.8)--(wj);
    \node[] () at (1.6,-2.4) {$T_{z_j}\setminus T_{v_6}$ };
\end{tikzpicture}
\caption{Illustration of subproblems: consider the red path $(v_1,v_2,v_3=z_i,v_4=v,v_5=z_j,v_6,v_7)$ through $v$. If we select the red path, the subtrees $T_{z_i}$ and $T_{z_j}$ break into $T_{v_1}, T_{(v_1,v_2)} = T_{v_2}\setminus T_{v_1}, T_{(z_i, v_2)} = T_{z_i}\setminus T_{v_2}$ and $T_{(z_j,v_6)}=T_{z_j} \setminus T_{v_6}, T_{(v_6,v_7)}=T_{v_6} \setminus T_{v_7}, T_{v_7}$, respectively. 
They define independent subproblems and their optimal solutions have been computed before we compute $f(T_v)$.}
\label{fig:dp}
\end{figure}
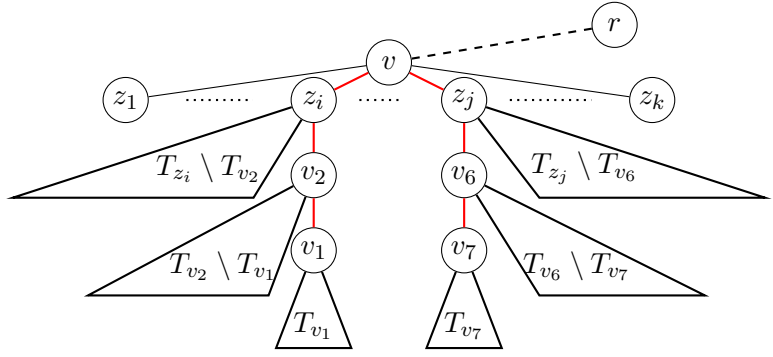

Combining \Cref{lem:global22:3-connected,lem:global22:reduction-multi-commodity-flow,lem:global22:dp}, we conclude with the theorem.
\begin{theorem}
\label{thm:poly-exact-2-2}
There is a polynomial-time exact algorithm for \globalc{2}{2}.
\end{theorem}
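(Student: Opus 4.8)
The plan is to bolt the three lemmas together into a recursive algorithm and check that it runs in polynomial time. First I would dispose of feasibility: since every vertex pair is a terminal pair, every cut of size at most $p+q-1=3$ is critical, and by \Cref{prop:cut formulation} each must contain at least $p=2$ protected edges; a bridge is such a cut with only one edge, so the instance is feasible if and only if $G$ is $2$-edge-connected, which is trivial to test. Assume from now on that it is.

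Next, as long as $G$ has a $2$-edge-cut $\{e_1,e_2\}$, I would apply \Cref{lem:global22:3-connected}: compute the two components $G_1,G_2$ of $G\setminus\{e_1,e_2\}$, build $I_1$ on $G_1+(u_1,v_1)$ and $I_2$ on $G_2+(u_2,v_2)$ with the two new edges given zero cost, recurse on both, and return $c(e_1)+c(e_2)+\opt(I_1)+\opt(I_2)$, realized by the edge set consisting of $\{e_1,e_2\}$ together with the edges chosen in the two subinstances (discarding the zero-cost edges $(u_i,v_i)$). Correctness of this recursive step is precisely \Cref{lem:global22:3-connected}. For the running time I would observe that one split preserves the total vertex count ($|V(G_1)|+|V(G_2)|=|V(G)|$) and the total edge count ($(|E(G_1)|+1)+(|E(G_2)|+1)=|E(G)|$, since two edges are deleted and one is added on each side), and both sides are nonempty; hence the recursion tree has at most $n$ leaves, every internal node does only a minimum-cut computation, and the whole process is polynomial.

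At every leaf $G$ is $3$-edge-connected. If its minimum cut has size at least $4$, then there is no cut of size at most $3$, hence no critical cut, and the empty set is an optimal solution. Otherwise the minimum cut has capacity exactly $3$, an odd number, so I can build the tree representation of all $3$-edge-cuts and invoke \Cref{lem:global22:reduction-multi-commodity-flow} to turn the leaf instance into a weighted multi-commodity flow problem on that tree, which \Cref{lem:global22:dp} solves in polynomial time; translating the chosen family of edge-disjoint paths back to edges of $G$ gives the set of edges we may leave unprotected, whose complement is an optimal solution of the leaf.

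The main obstacle I expect is not a single lemma but the gluing: ensuring the input size stays polynomial along the entire recursion (handled by the vertex/edge counting above) and handling the degenerate cases carefully — a side of a $2$-edge-cut consisting of a single vertex, which becomes a one-vertex instance with a self-loop (vacuously $3$-edge-connected, optimum $0$); parallel edges created by the added zero-cost edges (allowed, since the model permits them); and a $3$-edge-connected leaf that happens to be more highly connected (treated separately above). Once these checks are in place, combining \Cref{lem:global22:3-connected,lem:global22:reduction-multi-commodity-flow,lem:global22:dp} yields the claimed polynomial-time exact algorithm.
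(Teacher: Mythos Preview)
Your proposal is correct and follows exactly the approach the paper intends: reduce to the $3$-edge-connected case via repeated application of \Cref{lem:global22:3-connected}, then solve each leaf by combining \Cref{lem:global22:reduction-multi-commodity-flow} with \Cref{lem:global22:dp}. You have merely supplied the running-time bookkeeping (vertex and edge counts are preserved under a split, so there are at most $n$ leaves) and the degenerate-case checks that the paper leaves implicit.
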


\section{Approximation Algorithms for large $p$ or $q$}
\label{sec:apx-algos}

In this section we provide approximation algorithms and hardness results for large $p$ and $q$. 

\subsection{Approximation for the cases with $p=1$}
We present a primal-dual algorithm for \steinerc{1}{q}. Consider the corresponding linear programming relaxation of (CutIP) and its dual:

\begin{minipage}{0.48\textwidth}
\begin{alignat*}{3}
    \min \quad &\sum_{e \in E} c_e x_e & \\ 
    \text{s.t.} \quad& \sum_{e \in \delta(S)} x_e \geq 1 \quad &\forall S \in \mathcal{S}  \notag \\
    & x_e \geq 0  & \forall e \in E \notag \\ 
\end{alignat*}
\end{minipage}
\hfill
\begin{minipage}{0.48\textwidth}
\begin{alignat*}{3}
    \max \quad &\sum_{S \in \mathcal{S}} y_S & \\ 
    \text{s.t.} \quad& \sum_{S: S \in \mathcal{S}, e \in \delta(S)} y_S \leq c_e \quad &\forall e \in E  \notag \\
    & y_S \geq 0 & \forall S \in \mathcal{S} \notag \\
\end{alignat*}
\end{minipage}

{\bf Algorithm 3.\ }
 We start from a dual solution $\{y_S = 0 \mid  S \in \mathcal{S}\}$ and maintain a partial solution $X \subseteq E$ which is the current protected edge set.
At the beginning, $X:= \emptyset$. We increase the dual variables iteratively and add edges to $X$ whose corresponding dual constraints $\sum_{S: S \in \mathcal{S},e \in \delta(S)} y_S \leq c_e $ become tight.
In each iteration, we pick some $S \in \mathcal{S}$ with $\delta(S) \cap X = \emptyset$ and increase $y_S$. 
Such a vertex set $S$ can be found by enumerating terminal pairs $(s, t)$ and checking whether there is an $s$-$t$-cut of value less than $q+1$ with respect to the following capacity function:
set the capacity of $e$ to $q+1$ if $e \in X$ and to $1$, otherwise.
We increase $y_S$ until for some edge $e \in \delta(S)$, the dual constraint $\sum_{S: S \in \mathcal{S},e \in \delta(S)} y_S \leq c_e $ is tight.
Then we add $e$ to $X$ and move to the next iteration until $X$ is a feasible solution, which is the case if any terminal-separating cut has a capacity of at least $q+1$ w.r.t.\ the above capacity function.

To bound the cost of $X$, we have
$$
\sum_{e \in X} c_e = \sum_{e \in X} \sum_{S: S \in \mathcal{S},e \in \delta(S)} y_S = \sum_{S \in \mathcal{S}} y_S |\delta(S) \cap X| \leq \sum_{S \in \mathcal{S}} y_S |\delta(S)| \leq q\sum_{S \in \mathcal{S}} y_S \leq q \cdot \opt.
$$

\begin{theorem}
    Algorithm 3 is a polynomial-time $q$-approximation algorithm for \steinerc{1}{q}.
\end{theorem}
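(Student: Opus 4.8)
The plan is to verify the three properties that a primal-dual argument requires: correctness (the algorithm terminates with a feasible solution), polynomial running time, and the approximation guarantee, the last of which is essentially already spelled out by the displayed chain of inequalities preceding the statement. I will organize the proof as a sequence of short claims.

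\textbf{Feasibility and termination.} First I would argue that whenever the current partial solution $X$ is infeasible, the algorithm can make progress. By \Cref{prop:cut formulation}, $X$ is feasible iff every critical cut contains at least one edge of $X$. The key observation, which I would make explicit, is that with the capacity function assigning $q+1$ to edges in $X$ and $1$ to the others, a terminal-separating cut $\delta(S)$ has capacity at least $q+1$ iff either $|\delta(S)| \geq q+1$ (so it is not critical) or $\delta(S) \cap X \neq \emptyset$; hence a minimum $s$-$t$-cut of capacity less than $q+1$ exists for some terminal pair iff there is an unsafe critical cut, i.e., some $S \in \mathcal{S}$ with $\delta(S) \cap X = \emptyset$. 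So the search step succeeds exactly when $X$ is infeasible, and when it fails $X$ is feasible. For termination, note that every iteration adds at least one new edge to $X$ (the edge whose dual constraint goes tight is not already in $X$, since edges of $X$ have infinite "slack" in the sense that no $S$ with $\delta(S)\cap X\neq\emptyset$ is ever selected, so the constraint of an edge in $X$ is never further loaded), so there are at most $|E|$ iterations.

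\textbf{Dual feasibility.} I would check that the dual solution $y$ maintained by the algorithm stays feasible throughout: we only ever raise $y_S$ for sets $S$ with $\delta(S) \cap X = \emptyset$, and we stop raising as soon as some edge constraint $\sum_{S : e \in \delta(S)} y_S \leq c_e$ becomes tight; thereafter that edge $e$ is added to $X$, so no future selected set $S'$ has $e \in \delta(S')$, and the constraint for $e$ is never violated. Every other edge constraint is slack by the stopping rule. Hence $y$ is a feasible dual solution and $\sum_{S} y_S \leq \opt$ by weak LP duality.

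\textbf{Cost bound.} Finally I would reproduce the displayed calculation: since each $e \in X$ was added when its dual constraint became tight, $c_e = \sum_{S : e \in \delta(S)} y_S$, so $\sum_{e \in X} c_e = \sum_{S \in \mathcal{S}} y_S \,|\delta(S) \cap X|$; every selected $S$ is critical, hence $|\delta(S)| \leq q$ (as $|\delta(S)| \le p+q-1 = q$ for $p=1$), so $|\delta(S)\cap X| \le q$, giving $\sum_{e\in X} c_e \le q \sum_S y_S \le q\cdot\opt$. The running time is polynomial because there are $O(|E|)$ iterations, each performing a minimum $s$-$t$-cut computation per terminal pair and a ratio computation to find which dual constraint tightens first.

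\textbf{Main obstacle.} The only genuinely delicate point is the correctness of the cut-search subroutine, i.e., the equivalence "$X$ infeasible $\iff$ some terminal pair has an $s$-$t$-cut of capacity $< q+1$", which rests on the observation that a critical cut (size $\le q$) with no protected edge has capacity exactly $|\delta(S)| \le q < q+1$, whereas any cut that is either non-critical or already hit by $X$ has capacity $\ge q+1$; everything else is the standard primal-dual bookkeeping already essentially contained in the excerpt.
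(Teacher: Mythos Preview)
Your proposal is correct and follows essentially the same primal-dual argument as the paper: the displayed inequality chain for the cost bound is identical, and your added details on dual feasibility, termination, and the correctness of the capacity-based cut-search subroutine are exactly the routine verifications that the paper leaves implicit. There is no substantive difference in approach.
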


The {\em global} connectivity variant \globalcshort{1}{q} has more symmetry since we do not need to distinguish whether an edge cut is terminal-separating. 
By exploiting the special structure of the family $\mathcal{S} = \{S \subset V \mid |\delta(S)| \leq p+q-1\}$, Bansal et al.~\cite{bansal2023improved} obtained a primal-dual $16$-approximation algorithm for the {\em Augmenting Small Cuts} problem, which generalizes \globalcshort{1}{q}.  
Recently, the factor has been reduced to $10$~\cite{nutov2024improved} and $5$~\cite{DBLP:journals/corr/abs-2308-15714} via refined analysis.

\begin{theorem}[follows from \cite{bansal2023improved, DBLP:journals/corr/abs-2308-15714}]
    There is a polynomial-time $5$-approximation algorithm for \globalc{1}{q}.
\end{theorem}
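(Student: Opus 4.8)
The plan is to recognize that \globalcshort{1}{q} is, via its cut characterization, essentially the \emph{Augmenting Small Cuts} problem of Bansal et al.~\cite{bansal2023improved}, and then to invoke the known $5$-approximation for that problem. First I would unpack the feasibility condition: by \Cref{prop:cut formulation} applied with $p=1$, a set $X \subseteq E$ is feasible for \globalcshort{1}{q} if and only if $X$ contains at least one edge of $\delta_G(S)$ for every $S \in \mathcal{S}$; and in the global-connectivity setting every vertex pair is a terminal pair, so (using $p+q-1 = q$) the family $\mathcal{S}$ is precisely the set of all nontrivial cuts of $G$ of size at most $q$. In other words, \globalcshort{1}{q} is exactly the problem: given $G$ with protection costs on its edges, find a minimum-cost edge set that hits every cut of $G$ of size at most $q$. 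In particular the instance is feasible if and only if $G$ is connected, and when it is, every cut of size at least one can be hit.

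Next I would match this to the \emph{Augmenting Small Cuts} problem: take the graph defining the ``small'' cuts to be $G$ itself, choose the size threshold so that the small cuts are exactly those with at most $q$ edges, and take the set of candidate links to be a parallel copy of $E$, where the link for an edge $e$ has cost equal to the cost of protecting $e$. A link crosses exactly the cuts that contain the corresponding edge, so a set of links hits every small cut if and only if the corresponding edge set forms a feasible protection; this is a cost- and feasibility-preserving correspondence with solutions of the \globalcshort{1}{q}-instance. Bansal et al.~\cite{bansal2023improved} gave a primal--dual $16$-approximation for Augmenting Small Cuts by exploiting the structure of the family of small cuts, and a refined analysis of essentially the same algorithm improved the ratio to $10$~\cite{nutov2024improved} and then to $5$~\cite{DBLP:journals/corr/abs-2308-15714}. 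Running this $5$-approximation on the instance constructed above yields a polynomial-time $5$-approximation for \globalcshort{1}{q}, as claimed.

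The main thing to be careful about — and the only real obstacle here — is lining up the precise input format and parametrization of Augmenting Small Cuts as stated in~\cite{bansal2023improved,DBLP:journals/corr/abs-2308-15714} with the present setting: confirming that the link set may be taken to be a parallel copy of the edges that define the small cuts, that no zero-cost ``base'' edges are required, and that the threshold labelled ``small'' there corresponds to cuts of size at most $q$ here (equivalently, size less than $q+1$). Once these definitional points are checked, the theorem follows immediately with no further argument.
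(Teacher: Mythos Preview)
Your proposal is correct and takes essentially the same approach as the paper: the paper simply observes that \globalcshort{1}{q} is (a special case of) the Augmenting Small Cuts problem and invokes the known $5$-approximation from~\cite{bansal2023improved,DBLP:journals/corr/abs-2308-15714}, giving no further argument. If anything, you have spelled out the reduction more carefully than the paper does.
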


Finally, we consider \stc{1}{q}. 
We show that this problem is equivalent to the undirected {\em Minimum Shared Edge} problem: 
We are given a graph with edge weights and two specified vertices $s,t$.
The task is to find $k$ $s$-$t$ paths with the minimum total weight of shared edges.
Here, an edge is shared if it is contained in at least $2$ paths.  

\begin{restatable}{proposition}{pathequivalence}\label{equal_to_finding_path}
 An edge set $X$ is a feasible solution to \stcshort{1}{q} if and only if there are $(q+1)$ $s$-$t$-paths such that any edge shared by at least two paths belongs to $X$.   
\end{restatable}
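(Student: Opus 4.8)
The plan is to prove both directions via the cut characterization from \Cref{prop:cut formulation} together with Menger's theorem. Recall that for \stcshort{1}{q} (the single terminal pair case with $p=1$), the critical cuts are exactly the $s$-$t$-cuts $\delta(S)$ with $|\delta(S)| \le q$, and by \Cref{prop:cut formulation}, $X$ is feasible if and only if every such cut contains at least one edge of $X$ (i.e., every $s$-$t$-cut of size at most $q$ is ``hit'' by $X$).

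\textbf{($\Leftarrow$)} Suppose there exist $(q+1)$ $s$-$t$-paths $P_1, \dots, P_{q+1}$ such that every edge lying in at least two of them belongs to $X$. I would show every $s$-$t$-cut $\delta(S)$ with $|\delta(S)| \le q$ contains an edge of $X$. Each path $P_i$ must cross $\delta(S)$, i.e., uses at least one edge of $\delta(S)$. By pigeonhole, since there are $q+1$ paths but at most $q$ edges in $\delta(S)$, some edge $e \in \delta(S)$ is used by at least two paths, hence $e \in X$. Thus $\delta(S) \cap X \neq \emptyset$, and by \Cref{prop:cut formulation}, $X$ is feasible.

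\textbf{($\Rightarrow$)} Suppose $X$ is feasible, so every $s$-$t$-cut of size at most $q$ intersects $X$. I want to produce $(q+1)$ $s$-$t$-paths whose pairwise-shared edges all lie in $X$. The idea is to build an auxiliary multigraph $G'$ from $G$ by replacing each protected edge $e \in X$ with $q+1$ parallel copies, while keeping each unprotected edge as a single copy. I claim that in $G'$ there are $q+1$ edge-disjoint $s$-$t$-paths; by Menger's theorem this is equivalent to every $s$-$t$-cut in $G'$ having size at least $q+1$. Take any $S$ with $s \in S$, $t \notin S$; the cut $\delta_{G'}(S)$ has size $|\delta_G(S) \setminus X| + (q+1)\,|\delta_G(S) \cap X|$. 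If $\delta_G(S)$ contains an edge of $X$, this is at least $q+1$. If $\delta_G(S) \cap X = \emptyset$, then feasibility of $X$ forces $|\delta_G(S)| \ge q+1$, so again the cut in $G'$ has size $\ge q+1$. Hence $q+1$ edge-disjoint $s$-$t$-paths $P_1', \dots, P_{q+1}'$ exist in $G'$; projecting them back to $G$ (mapping parallel copies of $e$ to $e$) yields $q+1$ $s$-$t$-paths in $G$, and any edge shared by two of them must have been a parallel copy of some $e \in X$ (since unprotected edges have only one copy and the $P_i'$ are edge-disjoint in $G'$), so every shared edge is in $X$.

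The main obstacle is making the ``$\Rightarrow$'' direction fully rigorous: one must be careful that the projected paths are genuinely $s$-$t$-paths (they are walks a priori, but can be shortened to simple paths without introducing new shared edges) and that the pigeonhole/Menger bookkeeping with parallel edges is stated cleanly. The ``$\Leftarrow$'' direction is a straightforward pigeonhole argument and should be only a couple of sentences.
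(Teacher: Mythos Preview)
Your proof is correct and follows essentially the same approach as the paper: the paper phrases the $(\Rightarrow)$ direction via an integer capacity function (capacity $q+1$ on protected edges, $1$ otherwise) and invokes max-flow min-cut, which is equivalent to your construction with $q+1$ parallel copies and Menger's theorem; the $(\Leftarrow)$ direction is the identical pigeonhole argument. Your added remark about shortening walks to simple paths is a minor technicality the paper omits, but it does not affect the argument.
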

\begin{proof}
    To show necessity, we construct a 
    graph $G=(V, E)$ with a capacity function $u$ on the edges, where the capacity $u(e)$ of any edge $e$ is $q+1$ if $e \in X$, and $1$ otherwise. 
    Since $X$ is a feasible solution, by \Cref{prop:cut formulation} the capacity of any $s$-$t$ cut is at least $q+1$. 
    Thus, there exist $q+1$ $s$-$t$-paths such that edges shared by at least two paths 
    belong to $X$.
 
    As for the sufficiency, suppose we have $q+1$ $s$-$t$-paths that only share edges in $X$. 
    We claim that the shared edges form a feasible solution of \stcshort{1}{q}. 
    For each cut of size at most $q$, at least one edge must be shared by two paths and this edge is in $X$. 
    Thus, every cut $\delta(S)$ with $|\delta(S)| \leq q$ satisfies $\delta(S) \cap X \neq \emptyset$. 
    By \Cref{prop:cut formulation}, $X$ is a feasible solution.
\end{proof} 

\begin{lemma}\label{lem:min-shared-edges}
    An edge set $X$ is an inclusion-wise minimal solution to \stc{1}{q} if and only if there are $(q+1)$ $s$-$t$-paths such that the shared edges are exactly~$X$.
\end{lemma}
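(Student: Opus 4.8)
The plan is to establish the lemma by combining the equivalence from \Cref{equal_to_finding_path} with an exchange/minimality argument. I would prove the two directions separately.

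For the ``if'' direction, suppose there are $q+1$ $s$-$t$-paths whose shared edge set is exactly $X$. By \Cref{equal_to_finding_path}, $X$ is a feasible solution to \stcshort{1}{q}. It remains to show minimality: for any $e \in X$, the set $X - e$ is infeasible. Since $e$ is shared by at least two of the $q+1$ paths, I would argue that there is an $s$-$t$-cut $\delta(S)$ with $|\delta(S)| \le q$ and $\delta(S) \cap X = \{e\}$; then removing $e$ from $X$ violates \Cref{prop:cut formulation}, so $X-e$ is infeasible. The existence of such a cut needs justification: because the $q+1$ paths are ``almost edge-disjoint'' outside $X$, one can count that the maximum number of edge-disjoint $s$-$t$-paths in the capacitated graph (capacity $q+1$ on $X$, capacity $1$ elsewhere) is exactly $q+1$ and ``$e$ is used to capacity,'' which by max-flow/min-cut forces a saturated cut through $e$ of total capacity $q+1$ containing no other $X$-edge, i.e.\ $e$ plus $q$ unit-capacity edges.

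For the ``only if'' direction, suppose $X$ is an inclusion-wise minimal feasible solution. By \Cref{equal_to_finding_path} there exist $q+1$ $s$-$t$-paths whose shared edges all lie in $X$; let $X' \subseteq X$ be the set of edges actually shared by at least two of these paths. Then, again by \Cref{equal_to_finding_path}, $X'$ is itself a feasible solution, and $X' \subseteq X$, so minimality of $X$ forces $X' = X$. Hence the shared edges of this family are exactly $X$.

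I expect the main obstacle to be the ``if'' direction's minimality claim — specifically, producing, for each $e \in X$, a terminal-separating cut of size at most $q$ that meets $X$ only in $e$. The clean way is to phrase everything through the capacitated min-cut formulation of \Cref{prop:cut formulation}: feasibility of $X$ is equivalent to every $s$-$t$-cut having capacity $\ge q+1$ under the capacities above, and a $q+1$-path family realizing exactly the sharing pattern $X$ corresponds to an integral max-flow of value $q+1$ in which every edge of $X$ carries at least two units. Then complementary slackness (or a direct augmenting-path argument) gives a minimum cut of capacity exactly $q+1$; if this cut avoided some $e \in X$ that is saturated by the flow, one derives a contradiction with flow conservation across the cut, and otherwise the cut witnesses that $X - e$ is infeasible because it has capacity $q$ after $e$ is downgraded to a unit edge. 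Care is needed to make the ``every $X$-edge carries $\ge 2$ units'' statement precise — it may require first passing to a sub-family of paths, which is exactly the role of $X'$ in the other direction.
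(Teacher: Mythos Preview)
Your ``only if'' direction is correct and is essentially what the paper intends: the paper gives no separate proof of this lemma, treating it as immediate from \Cref{equal_to_finding_path}, and your argument (pass to the actually-shared subset $X'\subseteq X$, observe $X'$ is feasible by \Cref{equal_to_finding_path}, invoke minimality to get $X'=X$) is exactly the right one-line deduction.

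The ``if'' direction, however, has a genuine gap that cannot be repaired, because the statement is false in that direction. Your max-flow/min-cut argument needs the $q+1$ given paths to constitute a \emph{maximum} flow in the capacitated graph (capacity $q+1$ on $X$, capacity $1$ elsewhere); only then does the min cut have value exactly $q+1$ and yield the witness cut through $e$ that you want. But nothing in the hypothesis forces maximality. Concretely, take $q=1$ and the graph on $\{s,a,t\}$ with two parallel $s$--$a$ edges $e_1,e_2$, one edge $e_3=(a,t)$, and one edge $e_4=(s,t)$. The two paths $P_1=(e_1,e_3)$ and $P_2=(e_2,e_3)$ have shared-edge set exactly $X=\{e_3\}$, yet every $s$-$t$-cut has size at least $2$, so there are no critical cuts at all, $\emptyset$ is already feasible, and $X=\{e_3\}$ is \emph{not} inclusion-wise minimal. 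In your flow picture, these two paths give a flow of value $2$, while the max flow (and min cut) in the capacitated graph is $3$, so the ``saturated cut of capacity $q+1$ through $e$'' you appeal to simply does not exist. Your closing remark about ``passing to a sub-family of paths'' cannot rescue this either: dropping paths leaves fewer than $q+1$ of them and changes the hypothesis. None of this affects the paper's downstream use of the lemma --- the equivalence with Minimum Shared Edges only needs the ``only if'' direction (indeed, only \Cref{equal_to_finding_path}).
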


We conclude that \stcshort{1}{q} is equivalent to the Minimum Shared Edge Problem.
Hence, \Cref{lem:min-shared-edges} and the results of~\cite{DBLP:journals/jcss/FluschnikKNS19, DBLP:journals/algorithmica/AssadiENYZ14, DBLP:journals/jco/OmranSZ13}
imply the following.

\begin{theorem}
    \label{thm:stc-implications}
    When parameterized by $q$, \stcshort{1}{q} admits an FPT algorithm for undirected graphs and an XP algorithm for directed graphs.
    Furthermore, \stcshort{1}{q} on directed graphs admits no $\bigO(2^{\log^{1-\epsilon}\max\{q,n\}})$-approximation, unless $\NP \subseteq \DTIME(n^{polylog(n)})$.
\end{theorem}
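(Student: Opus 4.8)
The plan is to derive all three claims from the equivalence between \stcshort{1}{q} and the Minimum Shared Edge problem, which is essentially already in hand. Proposition~\ref{equal_to_finding_path} shows that feasible solutions of \stcshort{1}{q} are exactly the edge sets containing all shared edges of some family of $q+1$ $s$-$t$-paths, and Lemma~\ref{lem:min-shared-edges} upgrades this to a bijection between inclusion-wise minimal solutions and the sets that are \emph{exactly} the shared edges of $q+1$ paths. The key fact to record explicitly is that this correspondence is both \emph{cost-preserving} (the weight of the shared edges equals $c(X)$) and \emph{parameter-preserving}: the number of paths in the Minimum Shared Edge instance is $k=q+1$, so parameterizing \stcshort{1}{q} by $q$ is, up to an additive constant, parameterizing Minimum Shared Edge by the number of requested paths, while the underlying graph and its size $n$ are unchanged.

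For the FPT and XP statements I would then invoke the corresponding algorithmic results for Minimum Shared Edge: Fluschnik et al.~\cite{DBLP:journals/jcss/FluschnikKNS19} give an FPT algorithm in the number of paths on undirected graphs, and Assadi et al.~\cite{DBLP:journals/algorithmica/AssadiENYZ14} give an XP algorithm in the number of paths on directed graphs. To use the latter I first need to check that the equivalence survives the passage to digraphs: the only place Proposition~\ref{equal_to_finding_path} uses undirectedness is the max-flow/min-cut argument on the capacitated graph (capacity $q+1$ on protected edges, $1$ otherwise), which is valid verbatim for directed $s$-$t$-cuts and flows, while \Cref{prop:cut formulation} likewise carries over with $\delta^+(S)$ in place of $\delta(S)$. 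Given this, an exact solver for weighted Minimum Shared Edge with parameter $k=q+1$ is an exact solver for \stcshort{1}{q} with parameter $q$, and the running times translate directly.

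For the inapproximability I would go in the other direction: Omran, Schmidt and Zey~\cite{DBLP:journals/jco/OmranSZ13} rule out an $\bigO(2^{\log^{1-\epsilon}\max\{k,n\}})$-approximation for directed Minimum Shared Edge (with $k$ the number of paths and $n$ the number of vertices) unless $\NP\subseteq\DTIME(n^{\mathrm{polylog}(n)})$. Since the reduction in this direction is again cost-preserving --- an instance of directed Minimum Shared Edge with $k$ paths is literally an instance of \stcshort{1}{q} with $q=k-1$ on the same graph, with coinciding optimal values by the (directed) equivalence above --- any $\alpha$-approximation for \stcshort{1}{q} yields an $\alpha$-approximation for directed Minimum Shared Edge with the same parameters. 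Because $k=q+1$ we have $\max\{k,n\}=\Theta(\max\{q,n\})$, so the two bounds $2^{\log^{1-\epsilon}\max\{k,n\}}$ and $2^{\log^{1-\epsilon}\max\{q,n\}}$ agree after adjusting $\epsilon$, which yields the stated hardness.

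I expect the only subtle points to be bookkeeping rather than mathematics: (i) confirming that the weighted Minimum Shared Edge variant to which the cited results apply is exactly the one produced by the reduction --- in particular that parallel or zero-cost edges, which \steinercshort{1}{q} instances may contain, are harmless; and (ii) making sure that in each cited paper the parameter is the number of $s$-$t$-paths, not the allowed number of shared edges, so that ``parameterized by $q$'' is the correct statement. Neither should require ideas beyond the equivalence already established.
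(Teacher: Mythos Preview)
Your proposal is correct and follows exactly the paper's approach: the theorem is stated as an immediate consequence of the equivalence with Minimum Shared Edge (Proposition~\ref{equal_to_finding_path} and Lemma~\ref{lem:min-shared-edges}) together with the cited results of~\cite{DBLP:journals/jcss/FluschnikKNS19,DBLP:journals/algorithmica/AssadiENYZ14,DBLP:journals/jco/OmranSZ13}. Your additional care in verifying the directed case and the parameter translation is more explicit than the paper's one-line derivation, but the argument is the same.
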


\subsection{Extension for larger $p$}

Before presenting algorithms for more general cases, we argue that 
\steinercshort{p}{q} is quite hopeless when both $p$ and $q$ are part of the input.
Indeed, if this is the case, there is no polynomial-time algorithm that verifies feasibility of any given solution unless \P$=$\NP.

By \Cref{prop:cut formulation}, a given protected edge set $X$ is infeasible if and only if there is a terminal-separating cut $\delta(S)$ such that $|\delta(S)| \leq p+q-1$ and $|\delta(S)\cap X| \leq p-1$.
We define and study the complexity of the following  ($A,B$)-bicriteria $s$-$t$-cut problem:
Given an undirected graph with two specified vertices $s,t$ and a subset of edges protected,
decide whether there is an $s$-$t$-cut such that the number of protected edges in the cut is at most $A$ ($=p-1$) and the total number of edges in the cut is at most $B$ ($=p+q-1$) and.
Recall that in the \fndst{p}{q} problem, verifying the feasibility of a solution can be formulated as follows. 
Given an undirected graph with safe and unsafe edges, decide whether there are $p$ edge-disjoint paths between $s$ and $t$ after at most $q$ failures of unsafe edges. 
Hence verifying the feasibility is 
equivalent to the ($A,B$)-bicriteria $s$-$t$-cut problem.
We show that the ($A,B$)-bicriteria $s$-$t$-cut problem is \NP-complete, which implies that there is no polynomial-time approximation algorithm for \stc{p}{q} or \fndst{p}{q}, unless $\P=\NP$.

\begin{figure}[t]
\centering
\begin{subfloat}[A $3$-Clique instance, $d=3$.] {

\begin{tikzpicture}[scale=0.55]
    \node[draw=black,circle, minimum size = 0.4cm, inner sep=1pt] (v1) at (0,0){$v_1$};
    \node[draw=black,circle, minimum size = 0.4cm, inner sep=1pt] (v2) at (0,-4){$v_2$};
    \node[draw=black,circle, minimum size = 0.4cm, inner sep=1pt] (v3) at (2,-2){$v_3$};
    \node[draw=black,circle, minimum size = 0.4cm, inner sep=1pt] (v5) at (4,0){$v_5$};
    \node[draw=black,circle, minimum size = 0.4cm, inner sep=1pt] (v6) at (4,-4){$v_6$};
    \node[draw=black,circle, minimum size = 0.4cm, inner sep=1pt] (v4) at (6,-2){$v_4$};

    \draw[]
        (v1) -- node[left] {$e_1$} (v2)
        (v1) -- node[above] {$e_2$} (v5)
        (v1) -- node[right] {$e_3$} (v3)
        (v3) -- node[left,yshift=-6,xshift=-5] {$e_4$} (v4)
        (v4) -- node[right] {$e_6$} (v6)
        (v5) -- node[above,xshift=7,yshift=3] {$e_8$} (v6)
        (v2) -- node[below] {$e_9$} (v6)
        (v5) -- node[right]{$e_5$} (v4)
        (v2) -- node[right] {$e_7$} (v3)
        ;
    \end{tikzpicture}
}
\end{subfloat}
\hspace{10pt} 
\begin{subfloat}[The constructed $(A,B)$-bicriteria $s$-$t$-cut instance. ] {
\centering
\begin{tikzpicture}[scale=0.9]
\foreach \x in {1,...,9}
    \node[draw=black,circle,minimum size = 0.4cm, inner sep=1pt] (e\x) at (\x*1,0){$e_{\x}$};   
\node[] () at (10,0){$V_{\rm edge}$};
\foreach \x in {1,...,6}
    \node[draw=black,circle,minimum size = 0.4cm, inner sep=1pt] (v\x) at (\x*1+1.5,-1.5){$v_{\x}$};
\node[] () at (9,-1.5){$V_{\rm vertex}$};

\draw[]
    (e1.south) -- (v1.north)
    (e1.south) -- (v2.north)
    (e2.south) -- (v1.north)
    (e2.south) -- (v5.north)
    (e3.south) -- (v1.north)
    (e3.south) -- (v3.north)
    (e4.south) -- (v3.north)
    (e4.south) -- (v4.north)
    (e5.south) -- (v4.north)
    (e5.south) -- (v5.north)
    (e6.south) -- (v4.north)
    (e6.south) -- (v6.north)
    (e7.south) -- (v2.north)
    (e7.south) -- (v3.north)
    (e8.south) -- (v5.north)
    (e8.south) -- (v6.north)
    (e9.south) -- (v4.north)
    (e9.south) -- (v6.north)
;

\node[draw=black,circle, minimum size = 0.4cm, inner sep=1pt, below of=v1,yshift=0cm] (t) {$t$}; 
\foreach \x in {1,...,6}
    \draw[blue]
        (t.north) -- node[] { } (v\x.south);
\foreach \x in {0,...,2}
{
    \node[draw=black,circle] (av\x) at (\x*1+7,-1,5){};
    \foreach \y in {1,...,6}
    \draw[] (av\x.north) -- node[] { } (v\y.south);
}
\draw[decoration={brace,mirror,raise=5pt},decorate]
  (5,-3.1) -- node[below=6pt] {$d+1$} (7,-3.1);
\node[draw=black,circle,minimum size = 0.4cm, inner sep=1pt,below of=av1, yshift=-0.3cm] (s) {$s$};
\draw [draw=black,dotted] (7.5,-4.7) rectangle (4.6,-2.7);
\node [text width=2.3cm] () at (9,-3.8) {$C$: clique of size $(d+1)n$};
\end{tikzpicture}
}
\end{subfloat}
\caption{The reduction: the blue edges are protected edges and the black edges are unprotected.}
\label{fig: reduction}
\end{figure}
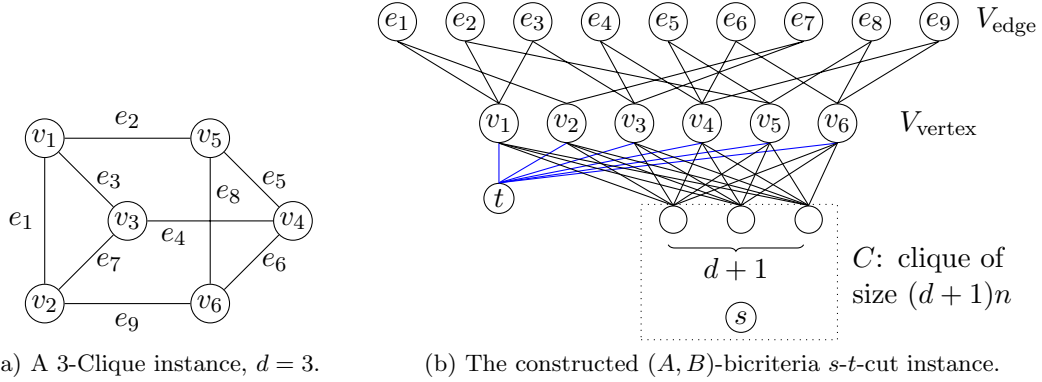
\stcuthard*

\begin{proof}
We use a reduction similar to~\cite{DBLP:conf/mfcs/FominGK13} from $k$-Clique on $d$-regular graphs, which is \NP-complete~\cite{DBLP:books/fm/GareyJ79}. See Figure \ref{fig: reduction} for an illustration.
In $k$-Clique, we are given an undirected graph and we need to decide whether there is a clique of size $k$.
Given an instance of $k$-clique with graph $G=(V,E)$, where $G$ is $d$-regular, we construct an instance of the $(A,B)$-bicriteria $s$-$t$-cut problem $(G',s,t,A,B)$ as follows.
Let $n:=|V|, m:=|E|$.
We create $n$ vertices $V_{\rm vertex}$ corresponding to $V$ and $m$ vertices $V_{\rm edge}$ corresponding to $E$.
In the following, when we say we connect two vertices, then they are connected by an \emph{unprotected} edge by default.
For each $e=(u,v) \in E$, we connect its corresponding vertex to the two vertices corresponding to $u$ and $v$.
Then we create a vertex $t$ and connect it to each vertex in $V_{\rm vertex}$ with \emph{protected} edges.
Create an auxiliary clique $C$ of size $n(d+1)$ and fix an arbitrary vertex in the clique as $s$.
Fix $d+1$ vertices in the clique other than $s$ and fully connect them to each vertex in $V_{\rm vertex}$, which results in a complete bipartite subgraph $K_{n,d+1}$.
Let $A:=k, B:= (d+1)n-k(k-1)$.
We claim that $G$ has a clique of size $k$ if and only if the protected edges form an \emph{infeasible} solution to the instance of the $(A,B)$-bicriteria $s$-$t$-cut problem.

For the first direction, suppose $G$ has a clique $CL=(V_{\rm CL},E_{\rm CL})$ of size $k$. 
Let $S$ include all the vertices in $V_{\rm vertex}$ and $V_{\rm edge}$ corresponding to $V_{CL},E_{CL}$, and the auxiliary clique $C$. 
We show $\delta(S)$ defines an $(A,B)$-bicriteria $s$-$t$-cut. 
In $\delta(S)$, the only protected edges are those edges between $t$ and the vertices corresponding to $V_{CL}$. 
Hence there are exactly $k=A$ protected edges.
As for $|\delta(S)|$, consider the edges between $V_{\rm vertex}$ and $t \cup C$. 
Each vertex in $V_{\rm vertex} \setminus S$ contributes $d+1$ and each vertex in $V_{\rm vertex} \cap S$ contributes $1$.
Now consider the edges between $V_{\rm edge}$ and $V_{\rm vertex}$. 
There are $d \cdot k$ edges incident to $V_{\rm vertex} \cap S$, among which $k(k-1)$ do not contribute to $|\delta(S)|$.
Hence, $|\delta(S)|=(d+1)(n-k)+k+dk-k(k-1)=(d+1)n-k(k-1)= B$ and $\delta(S)$ is an $(A,B)$-bicriteria $s$-$t$-cut. 

For the other direction, suppose there is an $(A,B)$-bicriteria $s$-$t$-cut $\delta(S)$ with $|\delta(S)| \leq B$ such that $\delta(S)$ contains at most $A$ protected edges.
We show that $S \cap V_{\rm vertex}$ corresponds to the vertices of a clique of size $k$ in $G$ and $S \cap V_{\rm edge}$ contains the edges of this clique.
Observe that $|S \cap V_{\rm vertex}| \leq k$ since there are at most $A=k$ protected edges in $\delta(S)$.
We will show that $|S \cap V_{\rm vertex}| = k$ and $|S \cap V_{\rm edge}| \geq k(k-1)/2$.
Furthermore, these $k(k-1)/2$ vertices in $S \cap V_{\rm edge}$
have both their neighbors in $S \cap V_{\rm vertex}$.
Hence $S \cap V_{\rm vertex}$ defines a clique of size $k$ in $G$.

Observe that $S$ must include the whole auxiliary clique C, otherwise $|\delta(S)|$ would exceed $B$.
Let $S'=S \setminus V_{\rm edge}$ and note that $C \subseteq S'$.
We prove that $|\delta(S')|=(d+1)n > B$ by considering the following process.
Starting from $Y:=C$, we add the vertices in $S'\setminus C$ one by one to $Y$.
During the process, $|\delta(Y)|$ does not change since each vertex in $S'\setminus C$ is connected to exactly $d+1$ vertices in $C$, $d$ vertices in $V_{\rm edge}$ and $t$.
Hence $|\delta(S')|=|\delta(C)|=(d+1)n$.
Now starting from $Y=S'$, we add the vertices in $S\setminus S'$ one by one to $Y$.
During the process, the only case that adding a vertex decreases $|\delta(Y)|$ (by $2$) is when both its neighbors are in $S \cap V_{\rm vertex}$. 
Therefore, we have at least $k(k-1)/2$ vertices in $S \setminus S'$, each having both their neighbors in $S \cap V_{\rm vertex}$, since $|\delta(S')|-|\delta(S)| \geq (d+1)n-B=k(k-1)$.
Hence, $|S \cap V_{\rm vertex}| \geq k$, and with the above inequality of $|S \cap V_{\rm vertex}| \leq k$ we have $|S \cap V_{\rm vertex}| = k$.
Further, for any two vertices in $S \cap V_{\rm vertex}$, there is some vertex in $S\cap V_{\rm edge}$ connected to both of them, implying that $S \cap V_{\rm vertex}$ corresponds to the vertices of a clique in $G$.
Hence, $G$ contains a clique of size~$k$.
\end{proof}

On the positive side, if $q$ is a constant, we can enumerate those edge sets $F$ with $|F|\leq q$ such that some terminal pair in $(V,E\setminus F)$ is not $p$-edge-connected. For each of those sets, we need to protect at least one edge in the set, which reduces to the hitting set problem and admits a $q$-approximation where $q$ is the largest size of the sets to be hit~\cite{DBLP:journals/jal/Bar-YehudaE81,DBLP:journals/siamcomp/Hochbaum82}.

In the following, we extend algorithm for \steinercshort{1}{q} to \steinercshort{p}{q} with $p$ being a constant.
The idea is to start from an empty solution and augment the current solution by iteratively increasing the number of protected edges in each critical cut. 
Our algorithm consists of $p$ phases. 
In phase $i$, we are given a partial solution $X_{i-1}$ satisfying that each critical cut contains at least $i-1$ edges in~$X_{i-1}$. 
We then (approximately) solve the following augmentation problem $\mathcal{P}_i$: Add to $X_{i-1}$ a minimum-cost set of edges $Y_i \subseteq E \setminus X_{i-1}$ such that $X_i \coloneqq X_{i-1} \cup Y_i$ includes at least $i$ edges of each critical cut. 
That is, find a set $Y_i$ that includes at least one edge from each critical cut with exactly $i-1$ protected edges in $X_{i-1}$. 
The augmentation problem is solved similarly to the primal-dual framework for \steinercshort{1}{q}. 

Formally, let $\mathcal{S}_0 = \mathcal{S}, X_0 = \emptyset$. 
In phase $i$ with $1 \leq i \leq p$, we define $\mathcal{S}_i = \{S \in \mathcal{S} \mid |\delta(S) \cap X_{i-1}| = i-1\}$, i.e., the critical cuts with exactly $i-1$ protected edges. 
Next, we solve the following problem $\mathcal{P}_i$: find a minimum-cost edge set $Y_i \subseteq E \setminus X_{i-1}$ such that $Y_i \cap \delta(S) \neq \emptyset$ for any $S \in \mathcal{S}_i$. 
Then we set $X_i := X_{i-1} \cup Y_i$ and go on to the next phase. 
To solve $\mathcal{P}_i$, we use a primal-dual algorithm based on the following LP to compute a $(p+q-1)$-approximation solution to $\mathcal{P}_i$ which is essentially the same as \steinerc{1}{q}. 
The approximation ratio is bounded by $p+q-1$ as the size of a critical cut is at most $p+q-1$.

\begin{alignat*}{3}
    \text{min} \quad &\sum_{e \in E\setminus X_{i-1}} c_e x_e & & \quad \quad  \text{max} \quad \sum_{S \in \mathcal{S}_i} y_S & \\
    \text{s.t.} \quad& \sum_{e \in \delta(S) \setminus X_{i-1} } x_e \geq 1   \quad &\forall S \in \mathcal{S}_i  \notag & \quad \quad \text{s.t.} \quad \sum_{S: S \in \mathcal{S}_i, e \in \delta(S) } y_S \leq c_e   \quad &\forall e \in E \setminus X_{i-1}  \notag \\
    & x_e \geq 0 & \hspace{-1cm} \forall e \in E\setminus X_{i-1} & \quad \quad \quad \quad \quad y_S \geq 0 & \hspace{-1cm} \forall S \in \mathcal{S}_i \notag \\
\end{alignat*}

 The only difference is the process of finding a violating set $S \in \mathcal{S}_i$ with respect to some partial solution $X$. However, finding such a violating set is non-trivial. We are only aware of a solution when $p$ is a constant, which we present in the following lemma.

\begin{restatable}{lemma}{constantpfindingcut}
\label{lemma:find_violating_set}
    Given an edge set $X \supseteq X_{i-1}$, there is a polynomial-time algorithm that computes a set $S \in \mathcal{S}_i$ such that $\delta(S) \cap X =\emptyset$ when $p$ is a constant.    
\end{restatable}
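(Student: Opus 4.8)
The plan is to reduce the search for a violating set $S \in \mathcal{S}_i$ with $\delta(S) \cap X = \emptyset$ to a collection of polynomially many minimum-cut computations, exploiting that $p$ is a constant. Recall that $S \in \mathcal{S}_i$ means $S$ separates some terminal pair $(s_j,t_j)$, has $|\delta(S)| \le p+q-1$, and has exactly $i-1$ edges of $X_{i-1}$ in $\delta(S)$; in addition we want $\delta(S) \cap X = \emptyset$, which (since $X \supseteq X_{i-1}$) already forces $i-1 = 0$, i.e.\ we are in phase $i=1$—so I first observe that for $i \ge 2$ no such set exists once we have picked at least one edge of $X$ in every cut of $\mathcal{S}_i$, and the interesting case is where the partial solution $X$ we are testing is precisely the one the primal–dual loop has built so far, in which some cuts of $\mathcal{S}_i$ may still be disjoint from $X \setminus X_{i-1}$ while containing exactly their $i-1$ mandated edges of $X_{i-1}$. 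Thus the true requirement is: find $S$ separating some $(s_j,t_j)$ with $|\delta(S)| \le p+q-1$, with $|\delta(S) \cap X_{i-1}| = i-1$, and with $\delta(S) \cap (X \setminus X_{i-1}) = \emptyset$.

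The key idea is guessing. Since $|\delta(S) \cap X_{i-1}| = i-1 \le p-1$ and $p$ is constant, I enumerate all subsets $E_0 \subseteq X_{i-1}$ with $|E_0| = i-1$; there are only $\binom{|X_{i-1}|}{i-1} = O(|E|^{p-1})$ of them, which is polynomial. For each guess $E_0$ I look for a cut $\delta(S)$ with $\delta(S) \cap X_{i-1} = E_0$ exactly, $\delta(S) \cap (X \setminus X_{i-1}) = \emptyset$, $|\delta(S)| \le p+q-1$, and $S$ terminal-separating. To encode ``$\delta(S)$ contains exactly the edges $E_0$ of $X$ and no other edge of $X$'' I modify the graph: contract every edge of $X \setminus E_0$ (these must not cross the cut), and then also every edge of $X_{i-1} \setminus E_0$, obtaining $G'$; the edges of $E_0$ are kept. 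I then further need each edge of $E_0$ to actually cross the cut—I can enforce this by additionally contracting, for each $e = (a,b) \in E_0$, one side appropriately, or more cleanly by noting $|E_0| = i-1 < p$ is tiny and branching over which ``side'' each endpoint of each $E_0$-edge lands on (another $2^{p-1}$ factor, still constant), i.e.\ prescribing a set of vertices forced into $S$ and a set forced out of $S$. After these contractions/forcings, the remaining edges that may cross are all unprotected, so requiring $|\delta(S)| \le p+q-1$ together with the $i-1$ forced edges of $E_0$ means I need a cut of the contracted graph, among the remaining (all-unprotected) edges, of size at most $(p+q-1)-(i-1) = p+q-i$, separating the prescribed-in vertex set from the prescribed-out vertex set, which additionally includes $s_j$ on one side and $t_j$ on the other. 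That is just a minimum $s$–$t$ cut computation (with a super-source and super-sink collapsing the forced sets), done for each terminal pair $j$ and each of the polynomially many guesses; if any such minimum cut has value $\le p+q-i$ we output the corresponding $S$, otherwise we report that no violating set exists.

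Correctness in one direction is immediate: anything the procedure outputs has, by construction, the right intersection pattern with $X$, the right size bound, and separates a terminal pair, hence lies in $\mathcal{S}_i$ and is disjoint from $X$ as required. For the converse, any genuine violating $S \in \mathcal{S}_i$ with $\delta(S)\cap X=\emptyset$ induces a guess $E_0 := \delta(S)\cap X_{i-1}$ (of size $i-1$) and an in/out assignment of the endpoints of $E_0$-edges that is among those we branch over; with that guess the contractions are consistent with $S$ (no contracted edge crosses $\delta(S)$), so $S$ survives in $G'$ as a cut of value $|\delta(S)| - (i-1) \le p+q-i$ separating the forced sets and the relevant terminal pair, hence the corresponding min-cut computation finds a cut of value at most that, which we then output. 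The main obstacle—and the only place that genuinely uses $p = O(1)$—is the guessing step: we must commit to which $i-1$ edges of $X_{i-1}$ cross the cut and (to force them to actually cross) to the side of each of their endpoints, and both enumerations are polynomial only because $i-1 \le p-1$ is bounded; without that, the problem is exactly the bicriteria $s$–$t$-cut problem shown \NP-complete in \Cref{thm:stcut_hard}, so some such restriction is unavoidable. Everything else is a routine reduction to a polynomial number of classical minimum-cut instances, so the overall running time is polynomial.
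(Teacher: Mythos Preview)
Your proposal is correct and follows essentially the same approach as the paper: guess the at most $p-1$ protected edges that cross the sought cut together with the side of each of their endpoints, contract the remaining protected edges, and reduce to a minimum-cut computation; the enumeration is polynomial precisely because $p$ is constant. You also correctly diagnose the imprecision in the stated condition $\delta(S)\cap X=\emptyset$ (which should read $\delta(S)\cap(X\setminus X_{i-1})=\emptyset$, equivalently $|\delta(S)\cap X|=i-1$), and you are slightly more careful than the paper in explicitly iterating over terminal pairs to ensure the resulting cut is terminal-separating.
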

\begin{proof}
    Since $X \supseteq X_{i-1}$, we have $|\delta(S) \cap X|  \geq i-1$ for any $S \in \mathcal{S}$. 
    It suffices to find some $S \in \mathcal{S}$ with $|\delta(S) \cap X| = i-1 \leq p$. 
    To this end, we guess the edge set $X'=\delta(S) \cap X$. Note that $|X'|< p$. 
    Further, for each edge $e=(u,v)$ in $X'$, we guess whether $u \in S, v \notin S$ or $u \notin S, v \in S$. 
    Thus the number of possibilities is at most $\binom{m}{p}\cdot 2^p$, which is polynomial when $p$ is constant.
    For the edges in $X'$, let the set of endpoints in $S$ be $A$, and the other endpoints be $B$. 
    It reduces to finding some $S \in \mathcal{S}$ with $A \subseteq S$, $B \notin S$ and $\delta(S) \cap X =X'$. 
    This can be achieved by identifying the vertices in $A$ and $B$ by a new vertex $v_A$ and $v_B$, respectively, contracting edges in $X \setminus X'$, and computing a minimum $v_A$-$v_B$ cut in the resulting graph.
    If the cut has size less than $p + q$, then this cut belongs to $\mathcal{S}_i$ and has $i-1$ edges in $X$.
\end{proof}
To bound the total cost of the $p$ phases of our algorithm, we use the following LP relaxation and its dual for the analysis. 
The constraints $x_e \leq 1$ cannot be omitted as we do for $p=1$. 
Otherwise, an edge may be "protected" multiple times, which is not allowed.

\begin{minipage}{0.37\textwidth}
\begin{alignat*}{3}
    \text{min} \quad &\sum_{e \in E} c_e x_e  \\ 
    \text{s.t.} \quad& \sum_{e \in \delta(S)} x_e \geq p  \quad &\forall S \in \mathcal{S}  \notag \\
    & 0 \leq x_e \leq 1 & \forall e \in E \notag \\
\end{alignat*}
\end{minipage}
\hfill
\begin{minipage}{0.63\textwidth}
\begin{alignat*}{3}
    \text{max} \quad &\sum_{S \in \mathcal{S}} p\cdot y_S-\sum_{e \in E}z_e  \\ 
    \text{s.t.} \quad& \sum_{S: S \in \mathcal{S},e \in \delta(S)} y_S -z_e \leq c_e  \quad &\forall e \in E  \notag \\
    & y_S,z_e \geq 0 & \forall S \in \mathcal{S}, \forall e \in E \notag \\
\end{alignat*}
\end{minipage}

In the following lemma, we compare the optimal cost of the augmentation problem $\mathcal{P}_i$ to the optimal cost of \steinerc{p}{q}.
\begin{lemma}\label{lemma: dual mapping}
    Given a feasible dual solution $y^{(i)}$ of $\mathcal{P}_i$, we can construct a feasible dual solution $y$ of \steinerc{p}{q} such that $$\sum_{S \in \mathcal{S}_i} y^{(i)}_S \leq \frac{1}{p-i+1} \Big( \sum_{S \in \mathcal{S}} p\cdot y_S - \sum_{e \in E}z_e \Big) \ .$$
\end{lemma}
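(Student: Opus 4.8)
I want to take a feasible dual solution $y^{(i)}$ of $\mathcal{P}_i$ and produce a feasible dual $(y,z)$ of the LP relaxation of \steinercshort{p}{q} with the claimed value inequality. The natural guess is simply to scale: set $y_S := \lambda \cdot y^{(i)}_S$ for $S \in \mathcal{S}_i$ and $y_S := 0$ for $S \in \mathcal{S}\setminus\mathcal{S}_i$, for an appropriate constant $\lambda$ depending on $i$. The constant should be $\lambda = 1/(p-i+1)$, matching the right-hand side. Then I need to choose the $z_e$ variables to restore dual feasibility, since the $\mathcal{P}_i$-dual constraint $\sum_{S\in\mathcal{S}_i, e\in\delta(S)} y^{(i)}_S \le c_e$ involves only edges in $E\setminus X_{i-1}$, whereas the \steinercshort{p}{q}-dual must hold for all $e \in E$.

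**Key steps.** First, for edges $e \in E\setminus X_{i-1}$, the scaled constraint $\sum_{S\in\mathcal{S}_i, e\in\delta(S)} \lambda y^{(i)}_S \le \lambda c_e \le c_e$ holds automatically (since $\lambda \le 1$), so I can set $z_e = 0$ there. Second, for edges $e \in X_{i-1}$, I set $z_e := \sum_{S\in\mathcal{S}_i,\, e\in\delta(S)} \lambda y^{(i)}_S$, which makes the constraint tight and hence feasible; note $z_e \ge 0$. So $(y,z)$ as defined is dual-feasible for \steinercshort{p}{q}. Third, I compute the objective value. We have $\sum_{S\in\mathcal{S}} p\, y_S - \sum_{e\in E} z_e = p\lambda \sum_{S\in\mathcal{S}_i} y^{(i)}_S - \lambda \sum_{e\in X_{i-1}} \sum_{S\in\mathcal{S}_i,\, e\in\delta(S)} y^{(i)}_S$. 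Swapping the order of summation in the subtracted term gives $\lambda \sum_{S\in\mathcal{S}_i} y^{(i)}_S \cdot |\delta(S)\cap X_{i-1}|$, and since every $S \in \mathcal{S}_i$ has exactly $i-1$ protected edges in $X_{i-1}$, this equals $\lambda(i-1)\sum_{S\in\mathcal{S}_i} y^{(i)}_S$. Therefore the objective equals $\lambda\,(p-i+1)\sum_{S\in\mathcal{S}_i} y^{(i)}_S$. With $\lambda = 1/(p-i+1)$ this is exactly $\sum_{S\in\mathcal{S}_i} y^{(i)}_S$, giving the desired identity (in fact with equality, which is stronger than the stated inequality).

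**Main obstacle.** The computation itself is routine; the only genuine point requiring care is the defining property of $\mathcal{S}_i$ — namely that $|\delta(S)\cap X_{i-1}| = i-1$ exactly (not merely $\ge i-1$) for every $S \in \mathcal{S}_i$ — which is what converts the double sum into the clean factor $(i-1)$. I should also double-check that the $z_e$ I introduced are supported only on $X_{i-1}$ so that the restored constraints don't interfere with the $\mathcal{P}_i$-dual's feasibility on $E\setminus X_{i-1}$; this is immediate from the construction since $\mathcal{P}_i$'s variables and constraints live on $E\setminus X_{i-1}$. One subtlety worth a sentence: a cut $S \in \mathcal{S}_i$ might contain an edge $e \in X_{i-1}$ together with edges outside $X_{i-1}$, but that is fine — the $y^{(i)}$ feasibility only constrains the non-$X_{i-1}$ edges, and we absorb the contribution on the $X_{i-1}$ edges into $z$. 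So overall the proof is short: define the scaled dual, set $z$ to patch the protected edges, verify feasibility in two cases, and compute the objective via a sum swap using $|\delta(S)\cap X_{i-1}| = i-1$.
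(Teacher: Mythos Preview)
Your overall strategy---copy $y^{(i)}$ onto $\mathcal{S}_i$, zero it elsewhere, absorb the contribution on edges of $X_{i-1}$ into $z$, then swap the double sum and use $|\delta(S)\cap X_{i-1}| = i-1$---is exactly the paper's argument. However, your choice of scaling $\lambda = 1/(p-i+1)$ is a genuine slip that breaks the conclusion. With your $(y,z)$ the dual objective $\sum_{S} p\,y_S - \sum_e z_e$ equals $\sum_{S\in\mathcal{S}_i} y^{(i)}_S$, as you correctly compute; but the lemma asks that $\sum_{S\in\mathcal{S}_i} y^{(i)}_S$ be at most $\tfrac{1}{p-i+1}$ \emph{times} this objective, i.e.\ at most $\tfrac{1}{p-i+1}\sum_{S\in\mathcal{S}_i} y^{(i)}_S$. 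For $i<p$ this is strictly smaller than the left-hand side, so the inequality fails in the wrong direction.

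The fix is simply to take $\lambda = 1$, which is what the paper does: then the objective is $(p-i+1)\sum_{S\in\mathcal{S}_i} y^{(i)}_S$, and dividing by $p-i+1$ gives the lemma with equality. Everything else in your write-up (feasibility on $E\setminus X_{i-1}$ via the $\mathcal{P}_i$-dual constraint, feasibility on $X_{i-1}$ by construction of $z_e$, the sum swap) is correct and identical to the paper's proof. You appear to have misread the target as ``objective $\ge \sum y^{(i)}_S$'' rather than ``$\tfrac{1}{p-i+1}\cdot\text{objective} \ge \sum y^{(i)}_S$''; the factor $\tfrac{1}{p-i+1}$ is already on the right-hand side and does not need to be introduced via scaling.
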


\begin{proof}
    Let $y_S = y^{(i)}_S$ for $S \in \mathcal{S}_i$ and $y_S = 0$ for $S \in \mathcal{S}\setminus \mathcal{S}_i$. Let $z_e = 0$ for $e \in E\setminus X_{i-1}$ and $z_e = \sum_{S: S \in \mathcal{S}_i,e \in \delta(S)} y^{(i)}_S$ for $e \in X_{i-1}$. We claim that $(y,z)$ forms a feasible dual solution. For any $e \in X_{i-1}$, $\sum_{S: S \in \mathcal{S},e \in \delta(S)} y_S -z_e = 0$ by definition. For $e \in E \setminus X_{i-1}$, $\sum_{S: S \in \mathcal{S},e \in \delta(S)} y_S -z_e = \sum_{S: S \in \mathcal{S},e \in \delta(S)} y_S^{(i)} \leq c_e$. Next, we compare the dual objective values of $y$ and $y^{(i)}$. We have
    \[
    \sum_{S \in \mathcal{S}} p\cdot y_S - \sum_{e \in E}z_e =   \sum_{S \in \mathcal{S}_i} p\cdot y^{(i)}_S - \sum_{e \in X_{i-1}} \sum_{S: S \in \mathcal{S}_i,e \in \delta(S)} y^{(i)}_S =   \sum_{S \in \mathcal{S}_i} y^{(i)}_S (p-|\delta(S) \cap X_{i-1}|). 
    \]
    By definition of $\mathcal{S}_i$, for any $S \in \mathcal{S}_i$, we have $|\delta(S) \cap X_{i-1}| = i-1$. Thus, we conclude:
    \[
    \sum_{S \in \mathcal{S}} p\cdot y_S - \sum_{e \in E}z_e = (p-i+1) \sum_{S \in \mathcal{S}_i} y^{(i)}_S. \qedhere 
    \]
\end{proof}

\thmapprox*
\begin{proof}
    The algorithm consists of $p$ phases. In phase $i$, we apply \Cref{lemma:find_violating_set} and the primal-dual framework for \steinercshort{1}{q} to find a $(p+q-1)$-approximation solution for the augmentation problem $\mathcal{P}_i$. 
    By \Cref{lemma: dual mapping}, the cost of $Y_i$ in phase $i$ is at most $(p+q-1) \cdot \opt(\mathcal{P}_i) \leq \frac{p+q-1}{p-i+1} \opt$.
    Thus the total cost is at most $ \sum_{i=1}^{p} c(Y_i) \leq \sum_{i=1}^{p} \frac{p+q-1}{p-i+1} \opt \leq H_p \cdot (p+q-1) \cdot \opt$, where $H_p$ is the $p$-th harmonic number. 
    Using $H_p \leq \log(p) + 1$, we obtain the theorem.
\end{proof}

For \globalc{p}{q}, we can approximately solve the augmentation problem without requiring $p$ to be a constant.
Indeed, we reduce finding the critical cuts to finding certain $2$-approximate minimum cuts in $G$, where each edge $e$ has a capacity of $\frac{p+q}{i-1}$ if $e \in X_{i-1}$ and $1$ otherwise.
These cuts can be enumerated in polynomial time~\cite{DBLP:conf/soda/Karger93, DBLP:journals/siamdm/NagamochiNI97}.

{\bf Algorithm 4.\ }
In phase $1$, we apply the $5$-approximation algorithm from~\cite{DBLP:journals/corr/abs-2308-15714}. 
That is, we compute~$X_1$ such that for any $S \in \mathcal{S}_0 = \mathcal{S}$, $X_1 \cap \delta(S) \neq \emptyset$. 
For phase $i$ with $2 \leq i \leq p$, we approximately solve the augmentation problem $\mathcal{P}_i$ by reducing it to Set Cover. 
Here, we view a set $S \in \mathcal{S}_i$ as an element in the Set Cover instance and view an edge $e \in E \setminus X_{i-1}$ as a set in the Set Cover instance.
We use either the $\bigO(\log N)$-approximation~\cite{DBLP:journals/mor/Chvatal79} where $N$ is the number of elements to be covered, or the $f$-approximation~\cite{DBLP:journals/jal/Bar-YehudaE81,DBLP:journals/siamcomp/Hochbaum82} where $f$ is the maximum number of sets in which an element is contained.
Note that applying \Cref{lemma: dual mapping} requires a dual feasible solution, which is fortunately a byproduct of these Set Cover algorithms.

\thmapproxtwo*
\begin{proof}

The cost of phase $1$ is no more than $5 \cdot \opt$. 
For phase $i$ with $2 \leq i \leq p$, we apply Set Cover algorithms explicitly.
We show that the number of elements to be covered is $|\mathcal{S}_i| = \bigO(|V|^4)$ and we can construct the Set Cover instance in polynomial time.
To this end, we assign different capacities to edges in $X_{i-1}$ and other edges such that for any $S \in \mathcal{S}_i$, $\delta(S)$ is a $2$-approximate minimum cut with respect to the capacity function. 
By Karger’s bound~\cite{DBLP:conf/soda/Karger93}, the number of $2$-approximate minimum cuts is $\bigO(|V|^4)$ and we can enumerate them in polynomial time~\cite{DBLP:journals/siamdm/NagamochiNI97}. 
Formally, let the capacity of edges in $X_{i-1}$ be $\frac{p+q}{i-1}$ and the capacity of edges in $E\setminus X_{i-1}$ be 1. 
Given any cut $C$, the capacity of $C$ is at least $p+q$ since it either contains at least $p+q$ edges or contains at least $i-1$ edges in $X_{i-1}$. 
For any $S \in \mathcal{S}_i$, the capacity of $\delta(S)$ is at most $(i-1)\frac{p+q}{i-1} + p+q-1 < 2(p+q)$. Thus $\delta(S)$ defines a $2$-approximate minimum cut and we can find all the sets in $\mathcal{S}_i$ in polynomial time.

Further, in the constructed Set Cover instance, an element is contained in at most $p+q-1$ sets since $|\delta(S)| \leq p+q-1$ for any $S \in \mathcal{S}_i$. 
Thus, we can compute an augmenting edge set $X_i \setminus X_{i-1}$ whose cost is $ \bigO(\min\{\log n, p+q\} \cdot \sum_{S \in \mathcal{S}_i}y^{(i)}_S)$ where $y^{(i)}$ is the dual feasible solution of $\mathcal{P}_i$. Combining it with \Cref{lemma: dual mapping}, we conclude that the algorithm is an $\bigO(\log p \cdot \min \{\log n, p+q\})$-approximation.
\end{proof}

\section{Conclusion}
We examine Connectivity Preservation from two perspectives. For small values of $p$ and $q$, we focus on polynomial-time exact algorithms. For large values of $p$ and $q$, we show hardness and devise approximation algorithms. Nonetheless, there remain some gaps between cases solvable in polynomial time and NP-hard ones. 
In particular, it remains open whether \globalcshort{1}{q} admits any polynomial-time exact algorithm for constant $q \geq 3$. 
Another interesting problem is \globalcshort{1}{q} with $q$ being the capacity of the minimum cuts, i.e., finding a minimum-cost edge set that intersects with all the minimum cuts. 
Note that for the $s$-$t$-connectivity variant, this can be tackled via Min-cost Flow techniques.



\bibliography{ref}


\end{document}